\newcommand{\arxiv}[1]{{\tt \href{http://arxiv.org/abs/#1}{arXiv:#1}}}
\newcommand{\Set}[2]{\left\{ #1 \, \left| \; #2 \right. \right\}}
\newcommand{\modnospace}[1]{\; (\mbox{mod }{#1})}
\newcommand{\old}[1]{}
\newcommand{\moniker}[1]{{\em (#1)}}
\newcommand{\silentsubsec}{\subsection*}
\newtheorem{theorem}{Theorem}[section]
\newtheorem{proc}[theorem]{Procedure}
\newtheorem{lemma}[theorem]{Lemma}
\newtheorem{corollary}[theorem]{Corollary}
\theoremstyle{remark}
\newtheorem*{remark}{Remark}
\newtheorem*{example}{Example}
\numberwithin{counter}{section}
\theoremstyle{definition}
\newtheorem{definition}[theorem]{Definition}
\def\isom{\simeq}
\def\mm{\mathbf{m}}
\def\qq{\mathbf{q}}
\def\rr{\mathbf{r}}
\def\uu{\mathbf{u}}
\def\xx{\mathbf{x}}
\def\yy{\mathbf{y}}
\def\zz{\mathbf{z}}
\def\00{\mathbf{0}}
\def\mm{\mathbf{m}}
\def\kk{\mathbf{k}}
\def\lcm{\mathrm{lcm}}
\def\Proc{\CMcal{P}}  
\def\ProcR{\CMcal{R}}
\def\Net{{\mathcal{N}}}  
\def\Rec{\mathrm{Rec\,}}
\def\Crit{\mathrm{Crit\,}}
\def\acts{\mathop{\triangleright}}
\def\Acts{\mathop{\hspace{1pt} \triangleright \hspace{0.5pt} \triangleright}}
\def\Actsinline{\Acts}
\def\End{\mathrm{End\, }}
\def\zero{\mathbf{0}}
\def\one{\mathbf{1}}
\def\Sa{\mathcal{S}}
\def\M{M}
\def\basis{1}
\def\Rotor{{\tt Rotor}}
\def\Sand{{\tt Sand}}
\def\N{\mathbb{N}}
\def\Z{\mathbb{Z}}
\def\Q{\mathbb{Q}}
\def\EE{\mathbb{E}}
\def\eps{\epsilon}
\def\Prob{\Pr}
\begin{document}

\title[Abelian Networks]{Abelian Networks III. The Critical Group}

\author[Benjamin Bond and Lionel Levine]{Benjamin Bond and Lionel Levine}

\address{Benjamin Bond, Department of Mathematics, Stanford University, Stanford, California 94305. {\tt\url{http://stanford.edu/~benbond}}}

\address{Lionel Levine, Department of Mathematics, Cornell University, Ithaca, NY 14853. {\tt \url{http://www.math.cornell.edu/~levine}}}

\date{October 29, 2015}
\keywords{abelian distributed processors, asynchronous computation, burning algorithm, chip-firing, commutative monoid action, Laplacian lattice, sandpile group, script algorithm}
\subjclass[2010]{
05C25, 
05C50, 
20M14, 
20M35, 
68Q10, 
}

\begin{abstract}
The critical group of an abelian network is a finite abelian group that governs the behavior of the network on large inputs. It generalizes the sandpile group of a graph.
We show that the critical group of an irreducible abelian network acts freely and transitively on recurrent states of the network. We exhibit the critical group as a quotient of a free abelian group by a subgroup containing the image of the Laplacian, with equality in the case that the network is rectangular. We generalize Dhar's burning algorithm to abelian networks, and estimate the running time of an abelian network on an arbitrary input up to a constant additive error.
\end{abstract}

\maketitle

\section{Introduction}
\label{s.intro}

Associated to a finite connected graph $G$ with marked vertex $s$ is a finite abelian group called its \emph{sandpile group}. This group arose independently in three different fields: the Neron model of a curve in arithmetic geometry \cite{Lor89,Lor91}, the abelian sandpile model in statistical physics \cite{Dha90}, and discrete potential theory on graphs \cite{BDN97,Big99}. 
In this paper we are going to exhibit the sandpile group as an instance of a more general construction, the critical group $\Crit \Net$ of an abelian network $\Net$. 

The abelian sandpile model (also known as chip-firing) is a discrete dynamical system that redistributes chips
on the vertices of $G$ according to certain moves called topplings. From our perspective, the abelian sandpile model is rather special for a few reasons:
\begin{enumerate}
\item The total number of chips is conserved by toppling.
\item One vertex $s$ plays a distinct role: it can receive chips but cannot topple.
\item Chips are regarded as indistinguishable.
\end{enumerate}
A goal of this paper is to show that the basic theory of the sandpile group can be adapted 
to a setting where the first two properties are removed and the third is relaxed.
 
In our setting the chips are replaced by letters passed between ``processors'', which are automata located at the vertices of $G$.
Each processor has its own state space and input alphabet.
Depending on its internal state and the letter it reads, a processor may pass zero, one or more letters to one or more neighboring processors. For example, in the abelian sandpile model, the processor at a vertex of degree $d$ has state space $\{0,1,\ldots,d-1\}$ and a one-letter input alphabet.  Whenever it reads a letter, the processor increments its state by one modulo $d$; if its new state is $0$ then it passes one letter to each neighboring processor (and otherwise it passes nothing). 
Then, regardless of its starting state, reading $d$ letters returns the processor to its starting state and exactly one letter has been passed to each neighbor. This is precisely the toppling move of the sandpile model.

Now consider varying the state transition rule or the message passing rule of one or more processors. In general, the resulting network of automata need not have a conserved quantity, which removes item (1) from the above list. The special vertex $s$ in item (2) may be replaced by a condition that is both weaker and more symmetric, namely that the network halts on all inputs. The input alphabet of a processor need not consist of just one letter, which allows us to relax item (3). In order to generalize the theory of the sandpile group, items (1)--(3) are not essential. Rather, what is essential is that each processor is \emph{abelian}. 

\subsection{Abelian processor axioms}
\label{s.axioms}

An \emph{abelian network} is a collection of abelian processors indexed by the vertices of a directed graph.  The formal definition of an \emph{abelian processor} appears in Section~\ref{s.abelian}. In words, an abelian processor is an automaton with output, satisfying two axioms: 
\begin{itemize}
\item[(i)] For any initial state and any two input words that are equal up to permutation, the resulting final states are equal. \smallskip
\item[(ii)] For any initial state and any two input words that are equal up to permutation, the resulting output words are equal up to permutation.
\end{itemize}
For example, suppose a abelian processor starts in some state $q$ and reads the input word $ab$, and that as a result, it ends in some state $q'$ and outputs the word $ccddc$.  If instead we were to input the word $ba$ to the same processor in the same starting state $q$, then axiom (i) says the final state will still be $q'$; and axiom (ii) says the resulting output word need not be $ccddc$, but must have exactly $5$ letters: $3$ $c$'s and $2$ $d$'s.

Deepak Dhar \cite{Dha99} proposed abelian networks as models of self-organized criticality in physics, generalizing the abelian sandpile model of \cite{BTW87, Dha90}. From the point of view of computer science, abelian networks are an interesting class of automata networks because they can compute asynchronously: an abelian network produces the same final output in the same number of steps regardless of the order of events at individual nodes of the network \cite{part1}. 

Examples of abelian networks are surveyed in \cite{part1}. They include sandpile and rotor networks and their non-unary cousins, oil and water networks and abelian mobile agents. Besides the sandpile groups, critical groups of abelian networks have been studied in a few other particular cases: rotor networks \cite{PDDK96} and height-arrow networks \cite{DR04}. 
The group $\Crit \Rotor(G,s)$ associated to a rotor network with a sink turns out to be isomorphic to the sandpile group, $\Crit \Sand(G,s)$ \cite{PDDK96,LL09}.  We will see this isomorphism 
as a case of a more general phenomenon: Homotopic abelian networks have isomorphic critical groups (Corollary~\ref{c.homotopy}).

The sandpile group has several different constructions. Babai and Toumpakari \cite{BT10} realized the sandpile group as the minimal ideal of a commutative monoid. Their approach, developed further in \cite{C+13}, is well-suited for generalizing to abelian networks. In \textsection\ref{s.monoid} we review the small amount of monoid theory we will need.
In \textsection\ref{s.abelian} we recall the definition of an abelian network and relevant results from \cite{part1,part2}, and prove a few basic lemmas including a local-to-global principle for irreducibility, Lemma~\ref{l.localglobalirreducible}.

\subsection{Main results}

The setting for all of our main results is a finite irreducible abelian network $\Net$ that halts on all inputs.
In \textsection\ref{s.recurrent} we define the critical group $\Crit \Net$ and the set $\Rec \Net$ of recurrent states, and show that the former acts freely and transitively on the latter.  In \textsection\ref{s.markov} we consider a Markov chain defined by sending random input to the network; we relate the algebraic and probabilistic definitions of ``recurrent,'' and show that the stationary distribution of the chain is uniform on recurrent states. In \textsection\ref{s.expectedtime} we find the expected time for $\Net$ to halt on a given input to a uniform random recurrent state. In \textsection\ref{s.genrel} we give generators and relations for the critical group, and in \textsection\ref{s.order} we compute its order.

In \textsection\ref{s.timetohalt} we estimate the time for $\Net$ to halt on an arbitrary input, up to a constant additive error.

In \textsection\ref{s.large} we give an efficient test for whether a given state of $\Net$ is recurrent, generalizing Dhar's burning algorithm \cite{Dha90} and Speer's script algorithm \cite{Spe93}. We define a \emph{burning element} to be an input that returns the network to its initial state if and only if that state is recurrent. We show that any integer vector satisfying a certain set of linear inequalities is a burning element. In \textsection\ref{s.finding} we give an efficient way to find such a vector.

\subsection{Review of monoid actions}
\label{s.monoid}

Here we recall a few facts about actions of finite commutative monoids. Most of this material is implicit in the semigroup literature \cite{Gre51,Sch57,Gri01,Ste10}. The proofs are straightforward and can be found in \cite{part2}. For refinements and a generalization to certain infinite semigroups, see Grillet \cite{Gri07}. 

Let $M$ be a commutative monoid, $X$ a set and $\mu : M \times X \to X$ a monoid action. Commutativity of $M$ implies that the relation $\sim$ on $X$ defined by
	\begin{equation} \label{e.mutualaccess}  x \sim x' : \qquad \exists m,m' \in M \text{ such that } mx=m'x' \end{equation}
is an equivalence relation. We say that $\mu$ is \emph{irreducible} if $\sim$ has only one equivalence class.

For the rest of this section we assume that $M$ is finite and $\mu$ is irreducible. By finiteness, every $m \in M$ has an idempotent power: $m^j = m^{2j}$ for some $j \geq 1$. The \emph{minimal idempotent} $e := \prod_{f \in M, \, ff=f} f$ is the unique idempotent accessible from all of $M$ (that is, $ee=e$ and $e \in Mm$ for all $m \in M$).

\begin{lemma}
$eM$ is an abelian group with identity element $e$.
\end{lemma}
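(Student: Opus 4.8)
The plan is to verify the group axioms for $eM$ directly, treating associativity and commutativity as inherited from $M$ and concentrating on the three nontrivial points: closure, the identity, and inverses.

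First I would check that $eM$ is a commutative submonoid with identity $e$. Closure follows from commutativity together with idempotence of $e$: for $m,m' \in M$ we have $(em)(em') = e^2(mm') = e(mm') \in eM$. Since $ee = e$, the element $e = e\cdot e$ lies in $eM$, and for any $em \in eM$ one computes $e(em) = e^2 m = em$, so $e$ fixes every element of $eM$. Thus $eM$ is a commutative monoid with identity $e$, and it remains only to produce inverses.

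The existence of inverses is the crux of the argument, and it is exactly here that the accessibility of $e$ enters. Given $x = em \in eM$, the defining property $e \in Mm$ of the minimal idempotent furnishes some $a \in M$ with $e = am$. I would then take $y := ea \in eM$ and compute $xy = (em)(ea) = e^2(ma) = e(am) = e\cdot e = e$, using $e^2 = e$, commutativity $ma = am$, and $am = e$. Hence $y$ is an inverse of $x$ inside $eM$, which completes the verification that $eM$ is an abelian group.

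I expect the closure and identity steps to be purely formal; the single step requiring an idea is the inverse construction, where one must invoke $e \in Mm$ to obtain the witness $a$. As an alternative route avoiding explicit witnesses, one can note that for $x \in eM$ finiteness gives a power $x^k$ that is idempotent, and that $e$ absorbs every idempotent $f$ (since $ef = e$ follows from $e$ being the product of all idempotents); then $x^k = e x^k = e$ forces $x^{k-1}$ to be the inverse of $x$. Either approach is short, so I anticipate no serious obstacle beyond correctly deploying the minimal-idempotent hypothesis.
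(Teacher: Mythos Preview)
Your proof is correct and matches the paper's approach exactly: the paper's entire proof is the single remark that ``the existence of inverses is immediate from $e \in Mm$,'' which is precisely the witness construction you spell out in detail. Your alternative idempotent-power argument is also valid but is not the route the paper takes.
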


Indeed, the existence of inverses is immediate from $e \in Mm$.

\begin{lemma}
\moniker{Recurrent Elements Of An Irreducible Monoid Action}
\label{l.recurrent}
The following are equivalent for $x \in X$:
\begin{enumerate}
\item $x \in My$ for all $y \in X$
\item $x \in M(mx)$ for all $m \in M$
\item $x \in mX$ for all $m \in M$
\item $x \in eX$
\item $x = ex$
\end{enumerate}
\end{lemma}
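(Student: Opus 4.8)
The plan is to treat condition (5), the fixed-point equation $x = ex$, as the hub and connect every other condition to it through a web of implications. Two of the links are essentially free. The equivalence (4) $\Leftrightarrow$ (5) follows at once from idempotency: (5) gives $x = ex \in eX$, and conversely if $x = ey \in eX$ then $ex = eey = ey = x$ since $ee = e$. The universally quantified conditions become usable by specialization: (1) yields (2) by taking $y = mx$, and (3) yields (4) by taking $m = e$. So the real content is to close the two cycles (5) $\Rightarrow$ (1) $\Rightarrow$ (2) $\Rightarrow$ (5) and (5) $\Rightarrow$ (3) $\Rightarrow$ (4) $\Rightarrow$ (5).

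For (5) $\Rightarrow$ (1) I would fix $y \in X$ and invoke irreducibility: $x \sim y$, so $mx = m'y$ for some $m, m' \in M$. Minimality of $e$ supplies, via $e \in Mm$, an element $a \in M$ with $am = e$, whence $x = ex = a(mx) = a(m'y) = (am')y \in My$. The two implications that reach \emph{back} to (5) are where the group structure of $eM$ does the work. For (2) $\Rightarrow$ (5), specialize $m = e$ to get $x = a(ex) = gx$ with $g := ea \in eM$; since $eM$ is a finite group with identity $e$, the element $g$ has finite order, so iterating $x = gx$ gives $x = g^n x = ex$ for $n = |eM|$. For (5) $\Rightarrow$ (3), fix $m$, let $n \in eM$ be the inverse of $em$ in the group $eM$, so that $(em)n = e$, and write $x = ex = (emn)x = m(enx) \in mX$, using commutativity to regroup $emn = m(en)$.

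I expect the main obstacle to be precisely these last two steps, since they are the only places where the argument must exploit that $eM$ is a \emph{group} rather than merely a monoid: the inverse in $eM$ is what lets one divide out an arbitrary $m$ in (5) $\Rightarrow$ (3), and the finite order of $g$ is what forces the fixed-point equation in (2) $\Rightarrow$ (5). Everything else is bookkeeping with commutativity together with the defining properties $ee = e$ and $e \in Mm$. Assembling the two cycles then yields the full equivalence: (1), (2), (4), (5) are linked directly, and (3) joins in through (3) $\Rightarrow$ (4) and (5) $\Rightarrow$ (3).
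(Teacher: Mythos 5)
Your proof is correct. Note that the paper itself does not prove this lemma: it states in \textsection\ref{s.monoid} that ``the proofs are straightforward and can be found in \cite{part2},'' so there is no in-paper argument to compare against; your write-up fills in exactly the kind of argument the authors defer to the companion paper. Checking the details: the easy links \dmoniker{4}$\Leftrightarrow$\dmoniker{5}, \dmoniker{1}$\Rightarrow$\dmoniker{2}, \dmoniker{3}$\Rightarrow$\dmoniker{4} are fine; \dmoniker{5}$\Rightarrow$\dmoniker{1} correctly combines irreducibility ($mx=m'y$) with accessibility of $e$ ($e=am$); \dmoniker{2}$\Rightarrow$\dmoniker{5} correctly reduces to $x=gx$ with $g=ae\in eM$ and kills $g$ by finite order (equivalently, one could note that the idempotent power of $g$ lies in the group $eM$ and hence equals $e$); and \dmoniker{5}$\Rightarrow$\dmoniker{3} correctly uses the inverse of $em$ in $eM$ together with commutativity to factor an $m$ out front. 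Your two cycles $\dmoniker{5}\Rightarrow\dmoniker{1}\Rightarrow\dmoniker{2}\Rightarrow\dmoniker{5}$ and $\dmoniker{5}\Rightarrow\dmoniker{3}\Rightarrow\dmoniker{4}\Rightarrow\dmoniker{5}$ close up the full equivalence, and you use only facts the paper has already stated (the defining properties of the minimal idempotent and the fact that $eM$ is a finite abelian group with identity $e$), so the argument is complete and self-contained.
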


An element $x \in X$ is called \emph{recurrent} if it satisfies the equivalent conditions of Lemma~\ref{l.recurrent}. To explain this terminology, suppose we are given a probability distribution $\alpha$ on $M$ such that $\{m\in M \,:\, \alpha(m)>0\}$ generates $M$ as a monoid, and consider the Markov chain on $X$ that transitions from $x$ to $m x$ with probability $\alpha(m)$.  The states $x$ satisfying the equivalent conditions of Lemma~\ref{l.recurrent} are precisely those visited infinitely often by the Markov chain (see \textsection\ref{s.markov}).

We say that $m \in M$ \emph{acts invertibly} on a subset $Y$ of $X$ if the map $y \mapsto my$ is a permutation of $Y$.

\begin{lemma}
\label{l.actsinvertibly}
Every $m \in M$ acts invertibly on $eX$.
\end{lemma}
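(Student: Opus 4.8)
The plan is to reduce the action of an arbitrary $m \in M$ on the recurrent set $eX$ to the action of the single group element $em \in eM$, and then invert inside the group $eM$. The first thing to check is that $m$ preserves $eX$. If $x \in eX$ then $x = ex$ by Lemma~\ref{l.recurrent}, so using the action axiom and commutativity of $M$,
\[ mx = m(ex) = (me)x = (em)x = e(mx), \]
which exhibits $mx$ as a fixed point of $e$, i.e.\ $mx \in eX$ again by Lemma~\ref{l.recurrent}. Thus $y \mapsto my$ is a well-defined self-map of $eX$.

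The same computation shows that on $eX$ the map $y \mapsto my$ agrees with $y \mapsto (em)y$, where now $em$ lies in the abelian group $eM$ with identity $e$. Let $(em)^{-1} \in eM$ denote the group inverse of $em$, so that $(em)^{-1}(em) = e$. Since $(em)^{-1} \in eM \subseteq M$, it too maps $eX$ into $eX$ by the previous paragraph, and hence defines a self-map $y \mapsto (em)^{-1} y$ of $eX$.

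It remains to see that these two self-maps of $eX$ are mutually inverse. For $x \in eX$, the action axiom gives
\[ (em)^{-1}\big( (em) x \big) = \big( (em)^{-1}(em) \big) x = ex = x, \]
and symmetrically $(em)\big( (em)^{-1} x\big) = ex = x$, using that $e$ fixes every element of $eX$ (Lemma~\ref{l.recurrent} once more). Hence $y \mapsto (em)y$ --- equivalently $y \mapsto my$ --- is a bijection of $eX$ onto itself, which is exactly the assertion that $m$ acts invertibly on $eX$.

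I expect the only real subtlety to be the observation that one must invert within the group $eM$ rather than argue by finiteness: the set $X$, and hence $eX$, may be infinite, so injectivity of $y \mapsto my$ would not by itself yield surjectivity. Passing to $em$ and exploiting the group structure of $eM$ supplies an explicit two-sided inverse and sidesteps this issue entirely; the remainder is bookkeeping with the action axiom, commutativity of $M$, and the defining property $ex = x$ of recurrent states.
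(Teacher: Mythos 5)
Your proof is correct. Note that the paper itself states Lemma~\ref{l.actsinvertibly} without proof, deferring to \cite{part2}, so there is no in-paper argument to compare against; your route --- checking that $m$ preserves $eX$, observing that $m$ and $em$ agree on $eX$, and then inverting $em$ inside the group $eM$ (using that $e$ acts as the identity on $eX$) --- is the natural and standard one. Your closing remark is also well taken: since $X$ is not assumed finite in this section, injectivity of $y \mapsto my$ alone would not give surjectivity, so producing the explicit two-sided inverse $(em)^{-1}$ is the right move.
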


We say that a monoid action $\mu : M\times X \to X$ is \emph{faithful} if there do not exist distinct elements $m,m' \in M$ such that $mx=m'x$ for all $x \in X$.  

Let $G$ be a group with identity element $e$.  Recall that a group action $G \times Y \to Y$ is called \emph{transitive} if $Gy = Y$ for all $y \in Y$, and is called \emph{free} if for all $g \neq e$ there does not exist $y \in Y$ such that $gy=y$.  If the action is both transitive and free, then for any two elements $y,y' \in Y$ there is a unique $g \in G$ such that $gy=y'$; in particular, $\# G = \# Y$.

\begin{theorem}
\label{t.monoidtogroup}
\moniker{Group Actions Arising From Monoid Actions}
Let $M$ be a finite commutative monoid and $\mu: M \times X \to X$ an irreducible action.
The restriction of $\mu$ to $eM \times eX$ is a transitive group action
	 \[ e\mu: eM \times eX \to eX. \]
In addition, if $\mu$ is faithful, then $e\mu$ is free.
\end{theorem}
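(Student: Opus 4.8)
The plan is to check, in order, that $e\mu$ really maps $eM\times eX$ into $eX$, that it satisfies the group-action axioms, that it is transitive, and finally --- under faithfulness --- that it is free. Throughout I will use that $M$ is commutative and $e$ is a central idempotent, so $eme=em$ for all $m\in M$, together with the three facts already in hand: $eM$ is an abelian group with identity $e$ (the first lemma of this subsection), every element of $M$ acts invertibly on $eX$ (Lemma~\ref{l.actsinvertibly}), and the equivalence $y\in eX\iff ey=y$ from Lemma~\ref{l.recurrent}.

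For well-definedness and the action axioms I would compute directly. For $em\in eM$ and $ex\in eX$, $(em)(ex)=\mu(em,\mu(e,x))=\mu(eme,x)=\mu(em,x)=e(mx)\in eX$, so $e\mu$ lands in $eX$. The same bookkeeping, using $(em)(em')=emm'$ in $M$, gives the compatibility law $((em)(em'))(ex)=(em)((em')(ex))$ (both sides equal $e(mm'x)$); and since every $z\in eX$ satisfies $ez=z$, the element $e$ acts as the identity map. Thus $e\mu$ is an action of the group $eM$ on $eX$, and Lemma~\ref{l.actsinvertibly} records that each $em$ acts as a permutation.

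Transitivity is short. Given recurrent states $y,y'\in eX$, condition (1) of Lemma~\ref{l.recurrent} applied to $y'$ gives $y'\in My$, say $my=y'$ with $m\in M$. Then $em\in eM$ and $(em)y=e(my)=ey'=y'$, the last step because $y'$ is recurrent. Hence $eM$ acts transitively on $eX$.

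The hard part will be freeness, since faithfulness constrains the action on all of $X$ while freeness is about the action on $eX$ alone. The bridge I would use is the identity $gx=g(ex)$, valid for every $g\in eM$ and every $x\in X$: because $g=ge$ in $M$, one has $gx=\mu(ge,x)=g(ex)$. Now suppose some $g\in eM$ fixes a point $y_0\in eX$. As $eM$ is abelian and (by the previous step) transitive, $g$ must fix \emph{every} $y\in eX$: writing $y=hy_0$ with $h\in eM$, one gets $gy=ghy_0=hgy_0=hy_0=y$. Since $ex\in eX$ for all $x\in X$, the bridge identity then yields $gx=g(ex)=ex$ for all $x\in X$, i.e. $g$ and $e$ act identically on $X$. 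Faithfulness forces $g=e$ in $M$, which is precisely the identity of the group $eM$. Thus no nonidentity element of $eM$ fixes a point of $eX$, establishing freeness.
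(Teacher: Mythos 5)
Your proof is correct. Note that the paper itself gives no proof of Theorem~\ref{t.monoidtogroup}: all proofs in \textsection\ref{s.monoid} are deferred to \cite{part2}, so there is no in-paper argument to compare against. Your handling of the only delicate point, freeness, is exactly right: faithfulness constrains the action on all of $X$ while freeness concerns only $eX$, and you bridge the gap by first propagating a single fixed point $gy_0=y_0$ to all of $eX$ (via commutativity and the transitivity just established) and then to all of $X$ via the identity $gx=g(ex)=ex$, at which point faithfulness forces $g=e$; the remaining steps (closure of $eX$ under the action, the action axioms, and transitivity from Lemma~\ref{l.recurrent}(1) together with condition (5)) are routine and handled correctly.
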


\section{Abelian networks}
\label{s.abelian}

We now recall the definition and basic properties of abelian networks, refering the reader to \cite{part1,part2} for details.

\subsection{Abelian processors}

Let $Q$ be a set of ``states'' and $\End(Q)$ the monoid of all set maps $Q \to Q$ with the operation of composition. An \emph{abelian processor} with input alphabet $I$ and state space $Q$ is a collection of maps $(t_a)_{a \in I}$ where each $t_a \in \End(Q)$, such that $t_a t_b = t_b t_a$ for all $a,b \in I$. This commutativity implies axiom (i) from the introduction.

\subsection{Abelian processors with output}

So far an abelian processor can take input and change state. Next we will enable it to send output, so that it can pass messages to other processors in a network. If a processor has several neighbors in the network, we will allow it to pass a different message to each. So an abelian processor may have several output alphabets (one for each neighbor).

Let $U$ be a set of ``output feeds''. An \emph{abelian processor with output alphabets} $(A_u)_{u \in U}$ 
has in addition to the state transition maps $t_a$ a message passing function $o_a^u : Q \to A_u^*$ for each $a\in I$ and $u \in U$.  Here $A^*$ denotes the free monoid of all finite words in an alphabet $A$. These functions are required to satisfy a commutativity condition: namely, if two input words $w,w' \in I^*$ are equal up to permutation, then for each $u \in U$ the resulting output words in $A_u^*$ are equal up to permutation. This condition is axiom (ii) from the introduction.

\subsection{Abelian networks}

An \emph{abelian network} on a directed graph $G=(V,E)$ with alphabet $A = \sqcup_{v \in V} A_v$ and state space $Q = \prod_{v \in V} Q_v$ is a collection $(\Proc_v)_{v \in V}$, where each $\Proc_v$ is an abelian processor with input alphabet $A_v$, state space $Q_v$ and output alphabets $(A_u)_{(v,u) \in E}$ indexed by the outgoing edges from $v$. When $\Proc_v$ in state $q \in Q_v$ processes letter $a \in A_v$, it transitions to state $t_a(q)$ and sends the message $o_a^u(q)$ to each neighboring processor $\Proc_u$.  Note that $o_a^u(q)$ might be the empty word, which signifies that no message is sent.

The total state of an abelian network $\Net = (\Proc_v)_{v \in V}$ is described by an element $\qq \in Q$ indicating the internal states of all processors, together with a vector $\xx \in \Z^A$ indicating how many letters of each type are waiting to be processed.  We use the notation $\xx.\qq$ for this pair (in which the decimal point is meant to suggest that the states $\qq$ represent ``fractional letters'' that have not yet been output). Note that $A$ is a disjoint union, so each letter belongs to the input alphabet of a unique processor.

\subsection{Sandpiles, rotor networks, toppling networks}
\label{s.examples}

These will be our running examples of abelian networks. Let $G=(V,E)$ be a finite directed graph and $s \in V$ a vertex such that from every other vertex $v \in V$ there is a directed path from $v$ to $s$. Each processor $\Proc_v$ has alphabet $A_v = \{v\}$ and state space $Q_v = \Z/r_v \Z$ for a positive integer $r_v$ called the \emph{threshold} of $v$. The state transition is $t_v (q) = q+1$ (mod $r_v$). 

It remains to describe the message passing.
In the networks $\Sand(G,s)$ and $\Rotor(G,s)$, processor $\Proc_s$ is a \emph{sink} (which means it has just one state and never passes any messages) and we take $r_v = d_v$, the outdegree of $v$, for all $v \neq s$.
In the sandpile network $\Sand(G,s)$, whenever processor $\Proc_v$ transitions from state $d_v - 1$ to state $0$ it passes $d_v$ letters: one letter $u$ along each outgoing edge $(v,u)$. The message passing for the rotor network $\Rotor(G,s)$ is specified by fixing an ordering $e_1,\ldots,e_{d_v}$ of the outgoing edges from $v$. Whenever it transitions from state $q$ to $q+1$ (mod $d_v$), processor $\Proc_v$ passes exactly one letter, along the edge $e_{q+1}$.

\emph{Toppling networks} are a generalization of sandpiles, where we allow $r_v \neq d_v$; however, when processor $\Proc_v$ in a toppling network transitions from state $r_v - 1$ to state $0$, it passes $d_v$ letters just as in a sandpile network: one letter $u$ along each outgoing edge $(v,u)$. The importance of the sink $s$ in $\Sand(G,s)$ and $\Rotor(G,s)$ is to ensure the network halts on all inputs. Depending on the thresholds $r_v$, a toppling network may halt on all inputs even if no vertex is a sink.

Rotor and toppling networks are \emph{unary}, meaning that each alphabet $A_v$ is a singleton. See \cite{part1} for two examples of non-unary abelian networks, oil and water and abelian mobile agents.

\subsection{Executions}

An \emph{execution} is a finite word $w \in A^*$. It prescribes an order in which letters are to be processed. We write $\pi_w(\xx.\qq)$ for the result of executing $w$ starting from $\xx.\qq$; this is another pair $\xx'.\qq'$ that can be computed using the state transitions and message passing functions. It is important to note that some coordinates of $\xx'$ may be negative! For example, if $w$ consists of a single letter $a$ and $\xx_a=0$, and processing $a$ does not cause any letters $a$ to be passed, then $\xx'_a = -1$. The interpretation is that a processor was instructed to process letter $a$ even though no letter $a$ was present; the processor follows the instruction and keeps track of the ``debt'' that it is owed one letter $a$. Note however that messages passed from one processor to another are always nonnegative.

The axioms (see \S\ref{s.axioms}) of an abelian processor imply that $\pi_w(\xx.\qq)$ depends only on the vector $|w| \in \N^A$ where $|w|_a$ is the number of letters $a$ in $w$. We write $\pi_w$ and $\pi_{|w|}$ interchangeably.
Writing $w = a_1 \cdots a_r$ and $\pi_{a_1\cdots a_i}(\xx.\qq) = \xx^i.\qq^i$, we say that $w$ is \emph{legal} for $\xx.\qq$ if $\xx^{i-1}_{a_i} \geq 1$ for all $i=1,\ldots,r$. We say that $w$ is \emph{complete} for $\xx.\qq$ if $\xx^r \leq \zero$ (inequalities between vectors are coordinatewise). In words, a legal execution is one that incurs no ``debts,'' and a complete execution is one that removes all letters from the network. Note that if $w$ is both legal and complete, then $\xx^i \geq \zero$ for $i=1,\ldots,r-1$ and $\xx^r = \zero$. 

 The least action principle \cite[Lemma 4.3]{part1} says that if $w$ is any legal execution for $\xx.\qq$ and $w'$ is any complete execution for $\xx.\qq$, then $|w| \leq |w'|$. It follows that $|w|$ is the same for all complete legal executions of $\xx.\qq$. 
If there exists a complete legal execution for $\xx.\qq$ then we say that $\Net$ \emph{halts} on input $\xx.\qq$. 
In this case the \emph{odometer} $[\xx.\qq]$ is defined as
	\[ [\xx.\qq] := |w| \]
where $w$ is any complete legal execution for $\xx.\qq$. The odometer is a vector in $\N^A$ whose $a$th coordinate is the total number of letters $a$ processed.

The final state $\qq^r_v$ of each processor $\Proc_v$ can be determined from its initial state $\qq_v$ and the odometer coordinates $[\xx.\qq]_a$ for $a \in A_v$, namely
	\begin{equation} \label{e.odomdeterminesfinalstate} \qq^r_v 
	= \left( \prod_{a \in A_v} t_a^{[\xx.\qq]_a} \right) \qq_v. \end{equation}
where the product denotes composition of the commuting maps $t_a$.  To make its dependence on $\xx$ explicit we will use the notation 
	$\xx \Acts \qq$
for the final state $\qq^r$. Like the odometer $[\xx.\qq]$, the state $\xx \Acts \qq$ depends only on $\xx$ and $\qq$ and not on the choice of complete legal execution.

\subsection{The global action}

Let $\Net$ be an abelian network that halts on all inputs: that is, $\xx.\qq$ has a finite complete legal execution for all $\xx \in \N^A$ and all $\qq \in Q$. We will see shortly that $(\xx,\qq) \mapsto \xx \Acts \qq$ defines a monoid action $\N^A \times Q \to Q$. This \emph{global action} is the main object of interest in the present paper: we will use this action to define the critical group in \textsection\ref{s.critical} and characterize the recurrent states of this action in \textsection\ref{s.burning}. We have chosen the notation $\Acts$ to distinguish the global action from the \emph{local action} $\acts$ of \cite{part2}; below we recall how $\acts$ is defined and relate the two actions.

\begin{lemma}
\label{l.partialexec}
If $w$ is a legal execution from $\xx.\qq$ to $\yy.\rr$, then 
	\[ \xx \Acts \qq = \yy \Acts \rr \] 
and
	\[ [\xx.\qq] = |w| + [\yy.\rr]. \]
\end{lemma}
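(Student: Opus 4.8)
The plan is to build a single complete legal execution for $\xx.\qq$ by prepending $w$ to a complete legal execution for $\yy.\rr$, and then to read off both claims directly from the definitions of the odometer and of $\Acts$.

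First I would invoke the standing assumption that $\Net$ halts on all inputs to obtain a complete legal execution $w'$ for $\yy.\rr$. My claim is that the concatenation $ww'$ is a complete legal execution for $\xx.\qq$. Legality is a step-by-step condition, so I would check it along the two segments separately: during the initial segment $w$ the states visited by $ww'$ from $\xx.\qq$ are exactly those visited by $w$ from $\xx.\qq$, which is legal by hypothesis; and because $\pi_w(\xx.\qq)=\yy.\rr$, the states visited during the remaining segment $w'$ coincide with those visited by $w'$ from $\yy.\rr$, which is legal by the choice of $w'$. Hence every letter processed along $ww'$ is present when it is processed. For completeness I would use $\pi_{ww'}(\xx.\qq)=\pi_{w'}(\pi_w(\xx.\qq))=\pi_{w'}(\yy.\rr)$, whose letter-count vector is $\le \zero$ since $w'$ is complete for $\yy.\rr$; thus $ww'$ is complete for $\xx.\qq$.

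With $ww'$ identified as a complete legal execution for $\xx.\qq$, both conclusions drop out. By the definition of the odometer — which, by the least action principle, does not depend on the chosen complete legal execution —
	\[ [\xx.\qq] = |ww'| = |w| + |w'| = |w| + [\yy.\rr], \]
where the last equality holds because $w'$ is a complete legal execution for $\yy.\rr$. For the final states, $\xx \Acts \qq$ is the state component of $\pi_{ww'}(\xx.\qq)$, which we already computed to equal $\pi_{w'}(\yy.\rr)$; its state component is $\yy \Acts \rr$, giving $\xx \Acts \qq = \yy \Acts \rr$.

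I do not anticipate a serious obstacle here. The two points requiring genuine care are that halting-on-all-inputs is what supplies the auxiliary execution $w'$, and that the odometer and the value of $\Acts$ are well defined independently of the complete legal execution chosen — precisely the content of the least action principle cited above. The only mild subtlety is verifying that concatenation preserves legality and completeness, and this rests entirely on the composition identity $\pi_{ww'} = \pi_{w'} \circ \pi_w$ together with the hypothesis $\pi_w(\xx.\qq)=\yy.\rr$.
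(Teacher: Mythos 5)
Your proposal is correct and follows exactly the paper's argument: take a complete legal execution $w'$ for $\yy.\rr$ (which exists since $\Net$ halts on all inputs), observe that the concatenation $ww'$ is a complete legal execution for $\xx.\qq$, and read off both identities from the well-definedness of the odometer and final state via the least action principle. The paper compresses all of this into one line, but the extra verifications you supply (legality and completeness of the concatenation) are precisely the implicit content of that line.
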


\begin{proof}
Let $w'$ be a complete legal execution for $\yy.\rr$. Then the concatenation $ww'$ is a complete legal execution for $\xx.\qq$.
\end{proof}

The next lemma verifies that $\Acts$ defines a monoid action of $\N^A$ on $Q$.

\begin{lemma}
\label{l.odomsplit}
For $\xx,\yy \in \N^A$ and $\qq \in Q$ we have 
	\[ (\xx+\yy)\Acts \qq = \yy \Acts (\xx \Acts \qq) \] 
and
	\[ [(\xx+\yy).\qq] = [\xx.\qq] + [\yy.(\xx \Acts \qq)]. \]
\end{lemma}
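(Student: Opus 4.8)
The plan is to prove both identities at once by producing a single legal execution from $(\xx+\yy).\qq$ and then appealing to Lemma~\ref{l.partialexec}. First I would fix a complete legal execution $w$ for $\xx.\qq$, so that by definition of the odometer and of $\Acts$ we have $\pi_w(\xx.\qq) = \zero.(\xx\Acts\qq)$ and $|w| = [\xx.\qq]$. The heart of the argument is the claim that this same word $w$ is a legal execution from $(\xx+\yy).\qq$, and that it terminates at $\yy.(\xx\Acts\qq)$.

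To establish the claim I would invoke a ``shift equivariance'' of $\pi_w$ in the waiting-letter vector. When a letter $a$ is processed, the state transition $t_a$ and the emitted messages $o_a^u$ depend only on the current internal state of the processor reading $a$, never on how many letters happen to be waiting; moreover the letter processed at each step is prescribed by $w$ itself rather than by the waiting counts. Consequently, writing $\pi_w(\xx.\qq) = \xx'.\qq'$, both the final internal states $\qq'$ and the net change $\xx'-\xx$ depend only on $w$ and $\qq$, so that
\[ \pi_w\bigl((\xx+\yy).\qq\bigr) = (\xx'+\yy).\qq'. \]
Applied to the chosen $w$, for which $\xx' = \zero$ and $\qq' = \xx\Acts\qq$, this gives $\pi_w\bigl((\xx+\yy).\qq\bigr) = \yy.(\xx\Acts\qq)$. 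The same equivariance applied to each prefix $a_1\cdots a_{i-1}$ of $w$ shows that the waiting vector seen from $(\xx+\yy).\qq$ exceeds the one seen from $\xx.\qq$ by exactly $\yy \geq \zero$ at every step; since $w$ is legal for $\xx.\qq$, each coordinate it tests is already at least $1$, so legality is preserved under the addition of $\yy$.

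With the claim in hand, $w$ is a legal execution from $(\xx+\yy).\qq$ to $\yy.(\xx\Acts\qq)$, and Lemma~\ref{l.partialexec} immediately yields $(\xx+\yy)\Acts\qq = \yy\Acts(\xx\Acts\qq)$ together with $[(\xx+\yy).\qq] = |w| + [\yy.(\xx\Acts\qq)]$. Substituting $|w| = [\xx.\qq]$ then gives the second identity.

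The only substantive step is the shift-equivariance observation; everything else is a direct application of Lemma~\ref{l.partialexec}. The point requiring care is that legality concerns the waiting-letter counts at the precise moment each letter is read, so one must verify that the offset $\yy$ persists along the \emph{entire} execution, not merely at its endpoints, in order to conclude that $w$ stays legal after $\yy$ is added.
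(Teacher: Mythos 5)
Your proof is correct and follows essentially the same route as the paper's: the paper's proof likewise takes a complete legal execution $w$ for $\xx.\qq$, observes that it is a legal execution from $(\xx+\yy).\qq$ to $\yy.(\xx \Acts \qq)$, and applies Lemma~\ref{l.partialexec}. The shift-equivariance point you spell out is exactly the (implicit) justification behind the paper's one-line observation, so your write-up is just a more detailed version of the same argument.
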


\begin{proof}
If $w$ is a complete legal execution for $\xx.\qq$, then $w$ is a legal execution from $(\xx+\yy).\qq$ to $\yy.(\xx \Acts \qq)$.  Since $|w| = [\xx.\qq]$ the result follows from Lemma~\ref{l.partialexec}. 
\end{proof}

\subsection{The local action}

In \cite{part2} we defined a monoid action $\acts$ of $\N^A$ on $\Z^A \times Q$, 
	\[ \xx \acts (\yy.\qq) = \pi_\xx((\xx+\yy).\qq). \]
This is called the \emph{local action} because each processor $\Proc_v$ processes only the letters that were added at $v$ (namely $\xx_a$ letters $a$ for each $a \in A_v$) and not any additional letters passed from other processors. We write $\xx \acts \qq$ as a shorthand for $\xx \acts (\zero.\qq)$.

The next lemma relates the local and global actions.

\begin{lemma}
\label{l.partialexec2}
If $\xx \acts \qq = \yy.\rr$, then $\xx \Acts \qq = \yy \Acts \rr$ and $[\xx.\qq] = \xx + [\yy.\rr]$.
\end{lemma}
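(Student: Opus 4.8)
The plan is to exhibit the local action $\xx \acts \qq$ as a genuine legal execution of $\xx.\qq$ and then read off both conclusions from Lemma~\ref{l.partialexec}, which packages exactly the two identities we want. Unwinding the definitions, $\xx \acts \qq = \xx \acts (\zero.\qq) = \pi_\xx(\xx.\qq)$, so the hypothesis $\xx \acts \qq = \yy.\rr$ says that processing $\xx_a$ copies of each letter $a$, starting from $\xx.\qq$, ends in the total state $\yy.\rr$. Since $\pi_w$ depends only on $|w|$, I may fix any word $w \in A^*$ with $|w| = \xx$ and regard it as taking $\xx.\qq$ to $\yy.\rr$.

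The one substantive step is the observation that, because each processor starts with exactly $\xx_a$ letters $a$ waiting and messages passed between processors are always nonnegative, this execution behaves well. First, $w$ is legal for $\xx.\qq$: just before $w$ processes its $(k+1)$-st copy of a letter $a$, with $0 \le k < \xx_a$, the number of $a$'s present is $\xx_a - k$ plus the nonnegative number of $a$'s received as messages, hence at least $\xx_a - k \ge 1$, so no debt is incurred (indeed every ordering of $\xx$ is legal here). Second, the residual letter vector is nonnegative: over the whole execution exactly $\xx_a$ copies of $a$ are processed out of the $\xx_a$ present initially together with the messages received, so $\yy_a$ equals the number of $a$'s received as messages and is therefore $\ge 0$. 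Thus $\yy \in \N^A$, so $\yy.\rr$ is a legitimate input on which $\Net$ halts and $[\yy.\rr]$ is defined.

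With $w$ a legal execution from $\xx.\qq$ to $\yy.\rr$ and $|w| = \xx$, Lemma~\ref{l.partialexec} applies verbatim and gives $\xx \Acts \qq = \yy \Acts \rr$ together with $[\xx.\qq] = |w| + [\yy.\rr] = \xx + [\yy.\rr]$, which are precisely the two assertions. The only place any thought is required is the nonnegativity observation of the middle paragraph, which simultaneously yields legality and $\yy \ge 0$; once that is in hand the result is an immediate citation of Lemma~\ref{l.partialexec}, so I do not expect a genuine obstacle here.
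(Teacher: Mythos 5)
Your proof is correct and follows essentially the same route as the paper: exhibit a word $w$ with $|w| = \xx$ as a legal execution from $\xx.\qq$ to $\yy.\rr$ and invoke Lemma~\ref{l.partialexec}. The paper simply asserts the existence of this legal execution, while you spell out the (correct) nonnegativity argument justifying legality and $\yy \geq \zero$; this is a harmless elaboration, not a different approach.
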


\begin{proof}
If $\xx \acts \qq = \yy.\rr$ then there is a legal execution $w$ from $\xx.\qq$ to $\yy.\rr$ with $|w|=\xx$, so the result follows from Lemma~\ref{l.partialexec}.
\end{proof}

One way to compute the global action $\Acts$ is by iterative application of $\acts$: given $\xx \in \N^A$ and $\qq \in Q$, set $\xx_0 = \xx$ and $\qq_0=\qq$ and
	\begin{equation} \label{e.parallelupdate} \xx_n.\qq_n = \xx_{n-1} \acts \qq_{n-1} \end{equation}
for $n \geq 1$.  This amounts to making a particular choice of execution, called \emph{parallel update}. In parallel update the execution occurs in rounds $n=1,2,\ldots$. At the beginning of round $n$ there are $(\xx_n)_a$ letters $a$ waiting to be processed for each $a \in A$. During round $n$ we execute a word $w_n$ with $|w_n| = \xx_n$. 
The concatenation $w = w_1 w_2 \cdots$ is a legal execution.  Since $\Net$ halts on all inputs, every legal execution is finite, so there is some $N$ such that all words $w_n$ for $n > N$ are empty.  Then $\xx_n = \zero$ for all $n \geq N$, and $w = w_1 \cdots w_N$ is a finite complete legal execution. In particular, the final state is given by
	 \[ \xx \Acts \qq = \qq_{N} \]
and the odometer is
	\[ [\xx.\qq] = |w| = \sum_{n =1}^N \xx_n. \]

We record a few more identities to be used later.  We extend the domain of $\Actsinline$ to $\N^A \times Q$ by defining for $\yy \in \N^A$
	\[ \xx \Acts (\yy.\qq) := (\xx+\yy) \Acts \qq. \]

\begin{lemma}
\label{l.odometertrick}
Given $\xx \in \N^A$ and $\qq \in Q$, let $\kk = [\xx.\qq]$. Then 
\begin{enumerate}
\item[(i)] $\pi_\kk(\xx.\qq) = \zero.(\xx \Acts \qq)$
\item[(ii)] $\kk \acts (\xx.\qq) = \kk.(\xx \Acts \qq)$
\item[(iii)] $\xx \Acts (\yy \acts \qq) = (\xx+\yy) \Acts \qq$
\end{enumerate}
\end{lemma}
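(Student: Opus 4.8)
The three identities concern the relationship between the odometer $\kk=[\xx.\qq]$, the map $\pi_\kk$, the local action $\acts$, and the global action $\Acts$. The plan is to prove them in the order (i), (ii), (iii), using (i) as the foundation, since (ii) is essentially a repackaging of (i) plus the definition of $\acts$, and (iii) then follows by combining the local-to-global lemma with the monoid-action splitting of Lemma~\ref{l.odomsplit}.

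\emph{Proof of (i).} Let $w$ be a complete legal execution for $\xx.\qq$, so by definition $|w|=\kk$. Since $\pi_w$ depends only on $|w|$, I can write $\pi_\kk(\xx.\qq)=\pi_w(\xx.\qq)$. By the definition of ``complete legal,'' executing $w$ from $\xx.\qq$ removes all letters, leaving the pair $\zero.\qq^r$ where $\qq^r$ is the final state. But $\qq^r$ is precisely $\xx\Acts\qq$ by the definition of the global action. Hence $\pi_\kk(\xx.\qq)=\zero.(\xx\Acts\qq)$. The only thing to check carefully is that ``complete and legal'' forces the waiting-letter vector to land exactly at $\zero$ rather than merely $\leq\zero$; this is the parenthetical remark already recorded in the excerpt (if $w$ is both legal and complete then $\xx^r=\zero$), so it can be cited directly.

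\emph{Proof of (ii).} Unwind the definition of the local action: $\kk\acts(\xx.\qq)=\pi_\kk\big((\kk+\xx).\qq\big)$. The subtlety is that the waiting-letter vector shifts by $\kk$ compared to part~(i). I would argue that adding $\kk$ letters to the input before applying $\pi_\kk$ simply adds $\kk$ to the first coordinate of the output, because $\pi_\kk$ acts on the waiting-letter vector by subtracting what it processes and adding what gets passed, independently of the additive constant $\kk$ in the waiting vector. Concretely, $\pi_\kk\big((\kk+\xx).\qq\big)=\kk.(\xx\Acts\qq)$ follows from part~(i) by observing that translating the input pile by $\kk$ translates the output pile by $\kk$ while leaving the final state unchanged. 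This translation-covariance of $\pi_\kk$ in its waiting-letter coordinate is the one genuinely mechanical fact I would need to invoke or verify from the message-passing definition.

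\emph{Proof of (iii), and the main obstacle.} Write $\yy\acts\qq=\zz.\rr$ for some $\zz\in\Z^A$ and $\rr\in Q$. By Lemma~\ref{l.partialexec2}, $\yy\Acts\qq=\zz\Acts\rr$ and there is a legal execution realizing this. The goal $\xx\Acts(\yy\acts\qq)=(\xx+\yy)\Acts\qq$ should unwind, via the extended definition $\xx\Acts(\zz.\rr)=(\xx+\zz)\Acts\rr$ together with Lemma~\ref{l.odomsplit}'s splitting $(\xx+\yy)\Acts\qq=\yy\Acts(\xx\Acts\qq)$, into an identity about composing executions. I expect the main obstacle to be bookkeeping around the fact that $\yy\acts\qq$ produces a pair $\zz.\rr$ whose waiting-vector $\zz$ may have negative coordinates (as the excerpt stresses), so I cannot naively treat $\zz$ as an element of $\N^A$ when feeding it back into $\Acts$. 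The clean way around this is to avoid handling $\zz$ directly: instead use part~(ii) with $\yy$ in place of $\xx$ to relate $\yy\acts\qq$ back to a global quantity, then apply Lemma~\ref{l.odomsplit} to the sum $\xx+\yy$. The crux is ensuring that the legal execution witnessing $\yy\acts\qq$ can be prepended to a complete legal execution for the remaining letters, so that Lemma~\ref{l.partialexec} applies; once the executions are concatenated correctly, the final states agree by the path-independence of $\Acts$, and (iii) drops out.
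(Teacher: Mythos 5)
Your proposal is correct and follows essentially the same route as the paper: part (i) is the definition of the odometer and of a complete legal execution, part (ii) is (i) plus the translation-covariance of $\pi_\kk$ in the waiting-letter coordinate (the paper's one-line computation $\kk \acts (\xx.\qq) = \pi_\kk((\xx+\kk).\qq) = \kk.(\xx \Acts \qq)$), and part (iii) amounts to realizing both sides as outcomes of complete legal executions for $(\xx+\yy).\qq$ and invoking path-independence of the final state, which is exactly the paper's argument with the concatenation via Lemma~\ref{l.partialexec} made explicit. (Your worry about negative coordinates in $\zz$ is moot, since a legal execution started from the nonnegative configuration $\yy.\qq$ keeps the waiting vector nonnegative, but your workaround does no harm.)
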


\begin{proof}
\old{
Write $s_n = \pi_{\kk_n}(\xx_n.\qq_n)$, where $\kk_n = \sum_{j \geq n} \xx_j$ and $\xx_n.\qq_n$ is defined by \eqref{e.parallelupdate}. Then using $\pi_{\kk_n} = \pi_{\kk_{n+1}} \circ \pi_{\xx_n}$ we have for all $n \geq 0$
	\[ s_n 
	= \pi_{\kk_{n+1}} (\xx_n \acts \qq_n) = s_{n+1}. \]
Hence $s_0 = s_N$, which proves part (i).
}
Part (i) follows from the definition of $\xx \Acts \qq$.

Part (ii) follows from (i) since
\[ \kk \acts (\xx.\qq) = \pi_\kk((\xx+\kk).\qq) = \kk.(\xx \Acts \qq). \]

To prove part (iii), both states $\xx \Acts (\yy \acts \qq)$ and $(\xx + \yy) \Acts \qq$ are the result of performing a complete legal execution for $(\xx+\yy).\qq$, and any two complete legal executions for $(\xx+\yy).\qq$ result in the same final state.
\end{proof}

\subsection{Local monoids}
\label{s.localmonoid}

The \emph{transition monoid} of an abelian processor with state space $Q$ and transition maps $t_a : Q \to Q$ is the submonoid $M = \langle t_a \rangle_{a \in A} \subseteq \End(Q)$, where $\End(Q)$ is the monoid of all set maps $Q \to Q$ with the operation of composition.  Since $M$ is defined as a submonoid of $\End(Q)$ it has a faithful monoid action $M \times Q \to Q$.

Each processor in an abelian network $\Net = (\Proc_v)_{v \in V}$ has a transition monoid 
	\begin{equation} \label{e.localmonoid} M_v := \langle t_a \rangle_{a \in A_v} \subset \End(Q_v). \end{equation}
We call this $M_v$ the \emph{local monoid} at $v$.
The product $\prod_{v \in V} M_v$ acts coordinatewise on $Q = \prod_{v \in V} Q_v$. To relate this action to the global and local actions defined above, let $t_v : \N^{A_v} \to M_v$ be the monoid homomorphism sending basis elements $1_a$ to the commuting generators $t_a$, and (recalling $A = \sqcup A_v$) write 
	\[ t : \N^A \to \prod_{v \in V} M_v \] 
for the Cartesian product of the maps $t_v$.  Each $t_v$ is surjective, so $t$ is surjective.  Equations \eqref{e.odomdeterminesfinalstate} (there is one equation for each $v \in V$) can be written more succinctly as the single equation
	\begin{equation} \label{e.tglobal} \xx \Acts \qq = t([\xx.\qq])\qq. \end{equation}
To relate $t$ to the local action, note that if $\xx \acts \qq = \yy.\rr$ then $\rr = t(\xx)\qq$.

\subsection{Global monoid}

If $\Net$ halts on all inputs, then we can view the entire network as a single abelian processor (see \cite[Lemma~4.7]{part1}) with input alphabet $A=\sqcup A_v$ and state space $Q=\prod Q_v$. In this case $\Net$ has a \emph{global monoid}, defined by\	
	\begin{equation} \label{e.globalmonoid} M := \langle \tau_a \rangle_{a \in A} \subset \End(Q) \end{equation}
where $\tau_a(\qq) := 1_a \Acts \qq$. Note that $M$ is not the same as the product of local monoids $\prod M_v$ of \eqref{e.localmonoid}: the local monoids depend only on the state transition maps $t_a$, but $M$ depends also on the message passing functions (because $\Acts$ does). 

Write
	\begin{equation} \label{e.taudef} \tau : \N^A \to M \end{equation}
for the monoid homomorphism sending basis elements $1_a$ to generators $\tau_a$. By Lemma~\ref{l.odomsplit} we have $\xx \Acts \qq = \tau(\xx) \qq$. 

If $\Net$ is a finite abelian network (that is, $V$ is finite, and the alphabet $A_v$ and the state space $Q_v$ of each processor are finite) then 
$M$ is a finite commutative monoid. In this case we denote by $e$ the minimal idempotent of $M$ (\textsection\ref{s.monoid}). We will use the following property of $e$ repeatedly.

\begin{lemma} \label{l.largee}
For any $\xx \in \N^A$ there exists $\zz \geq \xx$ such that $\tau(\zz)=e$. \end{lemma}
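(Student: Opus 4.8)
The plan is to reduce the statement to two facts already in place: that the monoid homomorphism $\tau : \N^A \to M$ is surjective, and that the minimal idempotent $e$ is accessible from every element of $M$, i.e. $e \in Mm$ for all $m \in M$ (this is exactly the defining property of $e$ recalled in \textsection\ref{s.monoid}). The idea is, given $\xx$, to find a ``correcting'' vector $\yy \in \N^A$ whose effect under $\tau$ carries $\tau(\xx)$ to $e$, and then take $\zz = \xx + \yy$. The nonnegativity of $\yy$ is what forces $\zz \geq \xx$.

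In detail, I would first set $m := \tau(\xx) \in M$. Since $e$ is the minimal idempotent, $e \in Mm$, so there exists $m' \in M$ with $m'm = e$. Next I would invoke surjectivity of $\tau$: because $M = \langle \tau_a \rangle_{a \in A}$ is generated by the images $\tau_a = \tau(1_a)$, the image of the homomorphism $\tau$ is a submonoid containing all generators, hence equals $M$. Thus I may choose $\yy \in \N^A$ with $\tau(\yy) = m'$.

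Finally, put $\zz := \xx + \yy$. Since $\yy$ has nonnegative coordinates, $\zz \geq \xx$ coordinatewise. As $\tau$ is a monoid homomorphism out of the additive monoid $\N^A$, we compute $\tau(\zz) = \tau(\yy + \xx) = \tau(\yy)\,\tau(\xx) = m'm = e$, which is the required conclusion.

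There is essentially no hard step: once surjectivity of $\tau$ and the accessibility property $e \in Mm$ are granted, the argument is a one-line construction. The only point meriting attention is that we need $\zz \geq \xx$, not merely the existence of some $\zz$ with $\tau(\zz) = e$; but this comes for free from the additive structure of $\N^A$, since replacing $m$ by $m'm = e$ in $M$ corresponds to \emph{adding} the nonnegative vector $\yy$ in $\N^A$, an operation that can only increase coordinates.
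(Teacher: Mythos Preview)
Your proof is correct and is essentially identical to the paper's: both use accessibility of $e$ from $\tau(\xx)$ to find an element of $M$ carrying $\tau(\xx)$ to $e$, lift that element to some $\yy \in \N^A$ via surjectivity of $\tau$, and set $\zz = \xx + \yy$. The only difference is notational (the paper writes $\tau(\xx)m = e$ directly rather than naming $\tau(\xx)$ first).
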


\begin{proof} Since $e$ is accessible from all of $M$ we have $\tau(\xx)m = e$ for some $m \in M$. Since $\tau$ is surjective we have $m = \tau(\yy)$ for some $\yy \in \N^A$. Now take $\zz = \xx+\yy$.
\end{proof} 

\subsection{Locally irreducible implies globally irreducible}

An abelian processor with transition monoid $M$ and state space $Q$ is called \emph{irreducible} if the defining action $M \times Q \to Q$ is irreducible (\textsection\ref{s.monoid}). Next we prove a local-to-global principle for irreducibility.
	
\begin{lemma}
\label{l.localglobalirreducible}
Let $\Net = (\Proc_v)_{v \in V}$ be an abelian network that halts on all inputs.  If each processor $\Proc_v$ is irreducible, then $\Net$ is irreducible.
\end{lemma}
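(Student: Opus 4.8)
The plan is to verify irreducibility of the global action straight from its definition: for every pair of states $\qq,\qq' \in Q$ I must produce $\xx,\xx' \in \N^A$ with $\xx \Acts \qq = \xx' \Acts \qq'$ (equivalently $m\qq=m'\qq'$ in the global monoid $M$). First I would reduce to recurrent states. By Lemma~\ref{l.largee} there is $\zz \geq \one$ with $\tau(\zz)=e$; since $e$ is idempotent, $\tau(n\zz)=e$ for all $n\geq 1$ and $e\qq=(n\zz)\Acts\qq$. Because $e\qq = e\cdot e\qq$, we have $\qq \sim e\qq$ for every $\qq$, so it suffices to show that any two recurrent states (those fixed by $e$) are mutually accessible. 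On the recurrent set $eQ$ the monoid $M$ acts through the group $eM$, so the task becomes showing that $eM$ acts transitively on $eQ$.

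Second, I would pin down the coordinate structure of recurrent states. Using $\tau(n\zz)=e$ and $\xx \Acts \qq = t([\xx.\qq])\qq$, a recurrent $\qq$ satisfies $q_v = t_v(\kk^{(n)}_v)q_v$ for all $n$, where $\kk^{(n)}=[(n\zz).\qq]$. Since a complete execution processes at least every input letter, $\kk^{(n)}_a \geq n\zz_a \geq n$, so these odometer coordinates grow without bound. Fixing $\dd$ with $t_v(\dd)=e_v$ (the minimal idempotent of $M_v$), for large $n$ we get $\kk^{(n)}_v \geq \dd$, hence $t_v(\kk^{(n)}_v)=e_v\, t_v(\kk^{(n)}_v-\dd)\in e_vM_v$; writing this element $m_v$, we have $m_vq_v=q_v$ and $e_vm_v=m_v$, so $e_vq_v = e_v m_v q_v = m_v q_v = q_v$. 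Thus each coordinate of a recurrent global state is a recurrent state of the corresponding local action. Local irreducibility then lets me apply Theorem~\ref{t.monoidtogroup} to each processor, so every local group $e_vM_v$ acts transitively on $e_vQ_v$; the same coordinatewise reasoning shows that the product action of $\prod_v M_v$ on $Q$ is itself irreducible.

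Finally, I would try to upgrade this coordinatewise transitivity to global transitivity of $eM$ on $eQ$. Given recurrent $\qq,\qq'$, local mutual accessibility supplies, for each $v$, words $\ww_v,\ww_v' \in \N^{A_v}$ with $t_v(\ww_v)q_v = t_v(\ww_v')q_v'$; the goal is to realize these relations simultaneously through genuine global executions, i.e. to find global inputs whose odometers induce, coordinate by coordinate, the required elements of the local groups. The main obstacle is exactly here: the odometer $[\xx.\qq]$ is not a free parameter but equals $\xx$ plus the letters generated by message passing, so choosing an input to produce a prescribed group element at $v$ perturbs the other coordinates, and matching all coordinates at once is a consistency (fixed-point) problem coupling the processors through the graph. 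I expect to resolve it by working inside the finite local groups $e_vM_v$: one enlarges the representatives $\ww_v$ by multiples of vectors killed by $t_v$ to absorb the extra message-generated letters, and uses the halting hypothesis together with the least action principle to guarantee that the resulting parallel update terminates and closes up consistently. This coupling step, rather than any of the monoid bookkeeping, is where the real work lies.
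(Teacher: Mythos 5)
There is a genuine gap, and it sits exactly where you say the ``real work lies.'' Your first two steps (reducing to recurrent states via $\qq \sim e\qq$, and showing that a globally recurrent state is coordinatewise locally recurrent) are correct, though they are detours the proof does not need. But the final step --- turning local mutual accessibility $t_v(\ww_v)\qq_v = t_v(\ww'_v)\qq'_v$ into a pair of \emph{global} inputs whose runs end in the same state --- is never actually carried out. You correctly identify the obstruction: feeding $\ww = \sum_v \ww_v$ into the network produces extra message-generated letters, so the odometer is not the free parameter you need, and compensating at one processor perturbs the others. Your proposed remedy (``enlarge the representatives by multiples of vectors killed by $t_v$'' and invoke halting plus least action) is only a hope, not an argument: adding absorbing input itself generates new messages, so nothing in the proposal shows the iteration closes up. Note also that appealing to transitivity of $eM$ on $eQ$ would be circular, since that transitivity (Theorem~\ref{t.monoidtogroup}) is a \emph{consequence} of the irreducibility you are trying to establish.

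The missing idea is to use the local action $\acts$ to decouple the processors, which dissolves the fixed-point problem entirely. Since $\xx \acts \qq$ processes only the deposited letters and none of the messages they generate, the resulting internal state is exactly $t(\xx)\qq$, which is fully controllable coordinate by coordinate. By local irreducibility choose $m_v, m'_v \in M_v$ with $m_v \qq_v = m'_v \qq'_v$, lift to $\xx,\xx' \in \N^A$ via surjectivity of $t$, and write $\xx \acts \qq = \yy.\rr$ and $\xx' \acts \qq' = \yy'.\rr'$; then $\rr = \rr'$ by construction. The message-generated leftovers $\yy,\yy'$ are then handled \emph{symmetrically}: feed each side the other's leftovers. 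By Lemma~\ref{l.odometertrick}(iii),
\begin{equation*}
(\xx+\yy') \Acts \qq = \yy' \Acts (\yy.\rr) = (\yy+\yy') \Acts \rr = \yy \Acts (\yy'.\rr) = (\xx'+\yy) \Acts \qq',
\end{equation*}
so $\tau(\xx+\yy')\qq = \tau(\xx'+\yy)\qq'$ and the global action is irreducible. No reduction to recurrent states, no growth-of-odometer estimates, and no consistency iteration are needed; this is precisely the step your proposal leaves open.
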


\begin{proof}
Let $\qq,\qq' \in Q$.  For each $v \in V$, since $\Proc_v$ is irreducible, 
there exist $m_v,m'_v \in M_v$ such that $m_v \qq_v = m'_v \qq'_v$.  Since $t$ is surjective we can choose $\xx, \xx' \in \N^A$ with $t(\xx)=m$ and $t(\xx')=m'$.  Write
	\[ \xx \acts \qq = \yy.\rr, \qquad \xx' \acts \qq' = \yy'.\rr' \]
where $\rr=\rr'$ since $\rr_v= m_v \qq_v = m'_v \qq'_v = \rr'_v$ for all $v \in V$.  Then by Lemma~\ref{l.odometertrick}(iii),
	\begin{align*} (\xx + \yy') \Acts \qq &= \yy' \Acts (\xx \acts \qq) \\
	&= \yy' \Acts (\yy.\rr) \\
	&= (\yy + \yy') \Acts \rr \\
	&= \yy \Acts (\yy'.\rr) \\
	&= \yy \Acts (\xx' \acts \qq') \\
	&= (\xx'+\yy) \Acts \qq'.
	\end{align*}
Hence $\tau(\xx+\yy')\qq = \tau(\xx'+\yy)\qq'$, so the global action $M\times Q \to Q$ is irreducible.
\end{proof}

In light of Lemma~\ref{l.localglobalirreducible} we will drop ``locally'' from ``locally irreducible'' when referring to an abelian network that halts on all inputs.

\subsection{Total kernel and production matrix}
\label{s.KP}

In \cite{part2} we used the local action to associate two basic algebraic objects to an irreducible abelian network, the \emph{total kernel} $K$ and \emph{production map} $P: K \to \Z^A$.  Together $K$ and $P$ constitute a kind of coarse-grained description of an abelian network: They do not specify the network in full detail, but they capture its ``large scale'' features (for instance, the asymptotic behavior on large inputs). 
In this paper we will see that many properties of interest depend only on $K$ and $P$.

Let $e_v$ be the minimal idempotent of the local monoid $M_v$. A state $\qq \in Q$ is called \emph{locally recurrent} if $\qq_v \in e_v Q_v$ for all $v \in V$.  By Lemma~\ref{l.actsinvertibly}, each $m \in M_v$ acts invertibly on $e_v Q_v$, so we have a group action of $\Z^{A_v}$ on $e_v Q_v$.
The total kernel is defined as 
	\[ K = \prod_{v \in V} K_v \subset \Z^A \] 
where $K_v$ is the kernel of the action $\Z^{A_v} \times e_v Q_v \to e_v Q_v$.  For example, in the case of a sandpile network (\S\ref{s.examples}) we have $e_v Q_v = Q_v = \{0,1,\ldots,d_v-1\}$ (all states are locally recurrent) and $\Z^{A_v} = \Z$ acts by $q \mapsto q+1 \modnospace{d_v}$. So in this case $K_v = d_v \Z$.

If $\Net$ is a finite abelian network, then its total kernel $K$ is a finite index subgroup of $\Z^A$ (\cite[Lemma 4.5]{part2}). In particular, it is generated as a group by $K \cap \N^A$. The nonnegative points in $K$ can be characterized as follows. 

\begin{lemma}
\cite[Lemma 4.8]{part2}
\label{l.free} 
If $\Net$ is finite and irreducible, then the following are equivalent for $\xx \in \N^A$.
	\begin{enumerate}
	\item $t(\xx)\qq = \qq$ for some locally recurrent $\qq$.
	\item $t(\xx)\qq = \qq$ for all locally recurrent $\qq$.
	\item $\xx \in K$.
	\end{enumerate}
\end{lemma}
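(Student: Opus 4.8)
The plan is to reduce the statement to a single irreducible abelian processor and then invoke the free, transitive group action supplied by Theorem~\ref{t.monoidtogroup}. Everything in sight factors over the vertex set: $K = \prod_{v} K_v$, the map $t$ is the Cartesian product of the $t_v$, and $\qq$ is locally recurrent exactly when $\qq_v \in e_v Q_v$ for each $v$. Since $t(\xx)\qq = \qq$ holds iff $t_v(\xx)\qq_v = \qq_v$ for every $v$, and the states $\qq_v$ may be chosen independently across $v$, condition (1) is equivalent to the conjunction over $v$ of ``$t_v(\xx)\qq_v = \qq_v$ for some $\qq_v \in e_v Q_v$'', condition (2) to the analogous conjunction with ``for all'', and condition (3) to ``$\xx_v \in K_v$ for all $v$''. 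Thus it suffices to prove the three per-vertex conditions equivalent for a single processor.

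Fix $v$. The defining action $M_v \times Q_v \to Q_v$ is faithful because $M_v$ is by construction a submonoid of $\End(Q_v)$, so Theorem~\ref{t.monoidtogroup} applies: $e_v M_v$ acts freely and transitively on $e_v Q_v$, with identity element $e_v$. Because $M_v$ is commutative, $e_v Q_v$ is invariant under every element of $M_v$ (and each such element acts invertibly on it by Lemma~\ref{l.actsinvertibly}), and for $q \in e_v Q_v$ we have $e_v q = q$; hence for any $\xx \in \N^{A_v}$ the map $q \mapsto t_v(\xx) q$ on $e_v Q_v$ coincides with the action of the group element $e_v t_v(\xx) \in e_v M_v$. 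This is the observation that transfers the monoid relation $t_v(\xx)$ into the group $e_v M_v$, where freeness is available.

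The equivalence of the per-vertex versions of (2) and (3) is then immediate from the definition of $K_v$ as the kernel of the action $\Z^{A_v} \times e_v Q_v \to e_v Q_v$: to say $\xx \in K_v$ is precisely to say $t_v(\xx) q = q$ for all $q \in e_v Q_v$. The per-vertex (2) trivially implies the per-vertex (1) since $e_v Q_v$ is nonempty. The substantive step, and the one I expect to be the main obstacle, is the implication (1) $\Rightarrow$ (3): suppose $t_v(\xx)$ fixes a single $q \in e_v Q_v$. By the previous paragraph $e_v t_v(\xx)$ is a group element fixing $q$, and since $e_v M_v$ acts freely on $e_v Q_v$ this forces $e_v t_v(\xx) = e_v$, so $t_v(\xx)$ acts as the identity on all of $e_v Q_v$, i.e.\ $\xx \in K_v$. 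Assembling the per-vertex equivalences over all $v$ via the product decomposition of the first paragraph yields the lemma.
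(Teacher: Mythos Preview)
Your argument is correct. The paper does not actually supply a proof of this lemma here; it is quoted from \cite[Lemma~4.8]{part2}, so there is nothing in the present paper to compare against line by line. That said, your approach is the natural one given the monoid-action framework of \textsection\ref{s.monoid}: factor everything as a product over $v \in V$, and at each vertex use that the faithful irreducible action $M_v \times Q_v \to Q_v$ yields, via Theorem~\ref{t.monoidtogroup}, a free action of $e_v M_v$ on $e_v Q_v$, so that a group element with a fixed point must be the identity $e_v$. This is almost certainly the argument intended in \cite{part2}, since the authors describe those proofs as ``straightforward'' consequences of the monoid-action lemmas.

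Two minor remarks on presentation. First, the hypothesis ``$\Net$ is irreducible'' is being used in its local sense (each $\Proc_v$ is irreducible), which is what licenses applying Theorem~\ref{t.monoidtogroup} vertex by vertex; you use this implicitly but it is worth making explicit. Second, your notation $\xx_v$ for the restriction of $\xx$ to $A_v$ and $t_v(\xx)$ for $t_v$ applied to that restriction is clear from context but slightly informal; if you want to match the paper's conventions you could write $t_v(\xx|_{A_v})$ or simply note once that $\xx$ decomposes as $(\xx_v)_{v \in V}$ under $\N^A = \prod_v \N^{A_v}$.
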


The total kernel depends only on the state transition maps. Next we define the production map, which depends also on the message passing functions. Given a locally recurrent state $\qq$ and a vector $\kk \in K \cap \N^A$ we have	
	\begin{equation} \label{e.production} \kk \acts \qq = P_\qq(\kk).\qq \end{equation}
for some vector $P_\qq(\kk) \in \N^A$.  

Nonnegative elements of the total kernel can be thought of as reset vectors: if $\kk \in K \cap \N^A$ then processing $\kk_a$ letters $a$ for all $a \in A$ returns all processors to their initial (locally recurrent) states, and the vector $P_\qq(\kk)$  specifies how many letters of each type are passed as a result. For example, in the case of a sandpile network, the vectors $\{d_v 1_v\}_{v \in V}$ are an $\N$-basis for $K$, and the production map is $\N$-linear with
	\[ P_\qq (d_v 1_v) = \sum_{(v,u) \in E} 1_u. \]
This equation just says that the local action of $d_v$ letters $v$ is to send one letter $u$ along each outgoing edge $(v,u)$ and to return the processor at $v$ to its initial state.
Note that $P_\qq$ does not depend on $\qq$ in this example! The next lemma generalizes these two observations (linearity and independence of $\qq$) to a general abelian network. 

\begin{lemma} \cite[Lemmas 4.6 and 4.9]{part2}
Let $\Net$ be a finite abelian network. 
\begin{enumerate} 
\item $P_\qq : K \cap \N^A \to \N^A$ extends to a group homomorphism $K \to \Z^A$.
\item If $\Net$ is irreducible then $P_\qq = P_\rr$ for all $\qq,\rr \in Q$.
\end{enumerate}
\end{lemma}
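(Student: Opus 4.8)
The plan is to establish additivity of $P_\qq$ on the submonoid $K \cap \N^A$ first, then extend it formally to a homomorphism; part (2) will follow from a single commutation computation combined with transitivity of the local action on locally recurrent states.

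For part (1), I would exploit that $\acts$ is a monoid action of the commutative monoid $\N^A$ on $\Z^A \times Q$. Fix a locally recurrent $\qq$ and $\kk, \kk' \in K \cap \N^A$, and recall that $\kk \acts \qq = P_\qq(\kk).\qq$, where the state returns to $\qq$ because $\kk \in K$ (Lemma~\ref{l.free}). Then $(\kk+\kk') \acts \qq = \kk' \acts (\kk \acts \qq) = \kk' \acts (P_\qq(\kk).\qq)$; using $\xx \acts (\yy.\qq) = \pi_\xx((\xx+\yy).\qq)$, processing the $\kk'$ added letters from state $\qq$ returns the state to $\qq$ and produces output $P_\qq(\kk')$, so the waiting vector becomes $P_\qq(\kk) + P_\qq(\kk')$. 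This gives $P_\qq(\kk+\kk') = P_\qq(\kk) + P_\qq(\kk')$. Since $K \cap \N^A$ is closed under addition and generates $K$ as a group, every $\kk \in K$ can be written as $\kk_1 - \kk_2$ with $\kk_i \in K \cap \N^A$; I would define $P_\qq(\kk) := P_\qq(\kk_1) - P_\qq(\kk_2)$ and verify well-definedness via additivity (if $\kk_1 - \kk_2 = \kk_1' - \kk_2'$ then $\kk_1 + \kk_2' = \kk_1' + \kk_2$ lie in $K \cap \N^A$, and additivity forces the required equality). This is the standard Grothendieck-group extension and produces the desired homomorphism $K \to \Z^A$.

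For part (2), I would first prove the intermediate claim that $P_\qq = P_{t(\xx)\qq}$ for every $\xx \in \N^A$ and every locally recurrent $\qq$, by evaluating $(\kk + \xx) \acts \qq$ in two ways for $\kk \in K \cap \N^A$. Write $\xx \acts \qq = \ss.\rr$ with $\rr = t(\xx)\qq$ and $\ss$ the resulting output; here $\rr$ is again locally recurrent because each generator of $M_v$ acts invertibly on $e_v Q_v$ (Lemma~\ref{l.actsinvertibly}). Computing $\kk \acts(\xx \acts \qq) = \kk \acts (\ss.\rr)$ gives waiting vector $\ss + P_\rr(\kk)$ over state $\rr$, while $\xx \acts(\kk \acts \qq) = \xx \acts(P_\qq(\kk).\qq)$ gives $P_\qq(\kk) + \ss$ over state $\rr$. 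Since the two agree by commutativity of the monoid action, $P_\rr(\kk) = P_\qq(\kk)$, proving the claim.

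It then remains to connect any two locally recurrent states by some local-action transition $t(\xx)$. Here I would use that ``irreducible'' means each processor $\Proc_v$ is irreducible (the convention adopted after Lemma~\ref{l.localglobalirreducible}). The local action $M_v \times Q_v \to Q_v$ is faithful and irreducible, so by Theorem~\ref{t.monoidtogroup} the group $e_v M_v$ acts transitively on $e_v Q_v$; since the action of $M_v$ on $e_v Q_v$ factors through $e_v M_v$ and $t_v$ is surjective, for locally recurrent $\qq, \rr$ I can choose $\xx_v$ with $t_v(\xx_v)\qq_v = \rr_v$ for each $v$, hence an $\xx$ with $t(\xx)\qq = \rr$, and the claim then yields $P_\qq = P_\rr$. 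I expect the main obstacle to be precisely this last bridge: the additivity and the claim are formal manipulations of the local monoid action, but upgrading the local equality $P_\qq = P_{t(\xx)\qq}$ to constancy of $P$ on all of $\prod_v e_v Q_v$ requires recognizing that local irreducibility is exactly what supplies transitivity of the local action on locally recurrent states.
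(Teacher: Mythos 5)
Your proof is correct. Note that this paper does not actually prove the lemma at all --- it is imported from the companion paper \cite{part2} (Lemmas 4.6 and 4.9) --- so there is no internal proof to compare against; your argument is a valid self-contained derivation from the facts restated here, and it follows the natural route: additivity of $P_\qq$ on $K \cap \N^A$ via commutativity of the local action, the Grothendieck-style extension to $K$, and then base-point independence via the commutation identity $\kk \acts (\xx \acts \qq) = \xx \acts (\kk \acts \qq)$ combined with transitivity of each $e_v M_v$ on $e_v Q_v$ (Theorem~\ref{t.monoidtogroup}), which is exactly what local irreducibility supplies. Two small points of hygiene. First, in part (1) you cite Lemma~\ref{l.free} to justify that the state returns to $\qq$, but that lemma assumes irreducibility while part (1) does not; this is harmless, since the implication you actually need --- $\kk \in K \cap \N^A$ implies $t(\kk)\qq = \qq$ for locally recurrent $\qq$ --- is immediate from the definition of $K_v$ as the kernel of the $\Z^{A_v}$-action on $e_v Q_v$, with no irreducibility required. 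Second, you implicitly read ``for all $\qq, \rr \in Q$'' in part (2) as ``for all locally recurrent $\qq, \rr$''; that is the right reading here, since \eqref{e.production} defines $P_\qq$ only when $\qq$ is locally recurrent, so constancy over locally recurrent states is all that can be (and needs to be) proved within this paper's setup.
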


In light of (2), when $\Net$ is irreducible we will often drop the subscript and denote the production map simply by $P$. In a slight abuse of notation, we also denote by $P$ the $A\times A$ matrix of the linear map $\Q^A \to \Q^A$ obtained by tensoring the production map with $\Q$. More explicitly, for any $x \in \mathbb{Q}^A$ there exists $n$ such that $nx \in K$ and we set $P(x) := (1/n)P(nx)$. The \emph{Laplacian} of $\Net$ is defined as the $A\times A$ matrix \begin{equation} \label{e.laplacian} L = (I-P)D \end{equation} where $I$ is the $A \times A$ identity matrix and $D$ is the diagonal matrix with diagonal entries
	\begin{equation}\label{e.ra}  r_a = \min \{ m \geq 1 \,:\, m1_a \in K \}. \end{equation}
This set is nonempty because $K$ is a finite index subgroup of $\Z^A$. For example, in a sandpile network (\S\ref{s.examples}) we have $r_a = d_a$, the outdegree of vertex~$a$. In general one can think of $r_a$ as a ``reset number'': If $a \in A_v$ then inputting $r_a$ letters $a$ resets processor $\Proc_v$ to its initial (locally recurrent) state.

\begin{theorem} \cite[Theorem 5.6 and Corollary 6.4]{part2}
\label{t.halting}
Let $\Net$ be a finite irreducible abelian network $\Net$ with production matrix $P$ and Laplacian $L$. The following are equivalent.
\begin{enumerate}
\item $\Net$ halts on all inputs.
\item The spectral radius of $P$ is strictly less than $1$.
\item All principal minors $L$ are positive.
\end{enumerate}
\end{theorem}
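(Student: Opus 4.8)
The plan is to prove the two substantive implications $(1)\Leftrightarrow(2)$ from the dynamics and to treat $(2)\Leftrightarrow(3)$ as pure nonnegative-matrix theory. Throughout I would use that the production matrix is entrywise nonnegative, $P\geq 0$, which is immediate since $P_\qq$ maps $K\cap\N^A$ into $\N^A$ and extends linearly.

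For $(2)\Leftrightarrow(3)$: because $D$ is a positive diagonal matrix, for every $S\subseteq A$ the submatrix factors as $L_{S,S}=(I-P)_{S,S}\,D_{S,S}$, whence $\det L_{S,S}=\big(\prod_{a\in S} r_a\big)\det(I-P)_{S,S}$. As each $r_a\geq 1$, the principal minors of $L$ and of $I-P$ have the same signs, so (3) is equivalent to all principal minors of $I-P$ being positive. Since $P\geq 0$, the matrix $I-P=1\cdot I-P$ is a $Z$-matrix, and the standard equivalences for $M$-matrices (e.g. Berman--Plemmons) give: all principal minors of $I-P$ positive $\iff$ $I-P$ is a nonsingular $M$-matrix $\iff$ $\rho(P)<1$, which is (2).

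For $(1)\Rightarrow(2)$ I would argue the contrapositive: if $\rho(P)\geq 1$ then $\Net$ fails to halt. By Perron--Frobenius applied to a diagonal block of $P$ whose spectral radius equals $\rho(P)$, there is a vector $\vv\neq 0$, $\vv\geq \zero$, supported on a set $S$, with $(P\vv)|_S=\rho(P)\,\vv|_S\geq \vv|_S$; since $\vv$ vanishes off $S$ this gives $P\vv\geq \vv$ coordinatewise. Scaling $\vv$ into $K\cap\N^A$ (possible as $K$ has finite index and $P$ is linear on $K$) and starting from a locally recurrent $\qq$, processing the reset vector $\vv$ returns the processors to $\qq$ and outputs $P(\vv)\geq \vv$ letters. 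Hence at least $\vv$ letters are again available, processing $\vv$ once more is legal, and iterating yields an infinite legal execution, so $\Net$ does not halt on input $\vv$.

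Finally, for $(2)\Rightarrow(1)$, suppose $\rho(P)<1$, so $(I-P)^{-1}=\sum_{k\geq 0}P^{k}$ exists and is entrywise nonnegative. Given an input $\xx$ and any legal execution from $\xx.\qq$ with odometer $\uu$ and resulting waiting vector $\xx'$, conservation gives $\xx'=\xx+(\text{letters produced})-\uu$. By linearity of $P$ on $K$, the letters produced are at most $P\uu+C\,\one$ for a constant $C=C(\Net)$ absorbing the discrepancy from $\uu$ not lying in $K$ and from transient, non--locally--recurrent behaviour. Since $\xx'\geq\zero$ for a legal execution, this forces $(I-P)\uu\leq \xx+C\,\one$, and applying the order-preserving map $(I-P)^{-1}\geq 0$ yields $\uu\leq (I-P)^{-1}(\xx+C\,\one)$. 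Thus every legal odometer is bounded, so every legal execution is finite and $\Net$ halts. I expect the main obstacle to be making this error term precise: one must show the discrepancy between the actual output of an arbitrary legal execution and $P\uu$ stays uniformly bounded, even when the execution neither starts at a locally recurrent state nor is complete. This is exactly where the finite index of $K$ in $\Z^A$ and the invertible action of each local monoid on $e_vQ_v$ (Lemma~\ref{l.actsinvertibly}) enter, letting one compare any execution with one routed through locally recurrent states.
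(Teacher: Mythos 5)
Your overall architecture is close to the source. The paper defers the full proof to \cite{part2}, but its sketch of $(1)\Leftrightarrow(2)$ uses exactly your eigenvector construction for $(1)\Rightarrow(2)$ (contrapositive: $\rho(P)\geq 1$ yields an input on which $\Net$ runs forever), and its discussion of $(3)$ points to the same M-matrix theory you invoke (the paper cites Fiedler--Ptak for the equivalence of ``positive principal minors'' with the other conditions). Where you genuinely diverge is $(2)\Rightarrow(1)$: the paper argues the contrapositive by a soft compactness argument --- if $\Net$ fails to halt, then since $Q$ is finite, Dickson's Lemma produces along an infinite legal execution two configurations $\yy.\qq$ and $\zz.\qq$ with the \emph{same} state and $\zz\geq\yy$, whence $P$ admits a nonnegative subinvariant vector and $\rho(P)\geq 1$ --- whereas you prove the implication directly by bounding every legal odometer by $(I-P)^{-1}(\xx+C\one)$. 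Your route is more quantitative (it essentially re-derives a form of Theorem~\ref{t.timetohalt} along the way), but it is also where all the analytic difficulty concentrates; the paper's Dickson argument sidesteps the error-term lemma entirely, which is precisely what that approach buys.

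Two soft spots. First, the step you flag yourself --- ``letters produced $\leq P\uu + C\one$'' for an arbitrary legal execution started at an arbitrary state --- is a genuine lemma, not bookkeeping: it requires monotonicity of abelian-processor output together with a comparison against inputs in $K\cap\N^A$ routed through locally recurrent states (using that $D\Z^A\subset K$ has bounded coset representatives), and your proposal only asserts it. Second, an unflagged gap in your $(1)\Rightarrow(2)$ direction: the Perron eigenvector $\vv$ of a rational nonnegative matrix generally has irrational entries (its eigenvalue $\rho(P)$ is typically irrational), so it cannot literally be ``scaled into $K\cap\N^A$.'' The repair is standard: the cone $\Sett{\vv\geq\zero}{P\vv\geq\vv}$ is a rational polyhedral cone, which you have shown contains a nonzero point, hence it contains a nonzero rational point; clearing denominators and multiplying by the index $[\Z^A:K]$ puts it in $K\cap\N^A$ while preserving $P\vv\geq\vv$. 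With these two repairs your argument is correct.
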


The proof in \cite{part2} uses Dickson's Lemma and the Perron-Frobenius theorem. We sketch here just the equivalence of (1) and (2). If $P$ has Perron-Frobenius eigenvalue $\lambda \geq 1$, then the corresponding eigenvector is used to construct an input for which $\Net$ fails to halt. Conversely, if $\Net$ fails to halt on some input, then by Dickson's Lemma there exists a legal execution from some $\yy.\qq$ to some $\zz.\qq$ (with the \emph{same} state $\qq$) satisfying $\zz \geq \yy$. This implies $P(\yy) \geq \yy$ and hence $\lambda \geq 1$.

Regarding (3), note that $L$ need not be symmetric. A real square matrix with nonpositive off-diagonal entries and positive principal minors is called a \emph{toppling matrix} (or ``M-matrix'').  
The positive principal minor condition is one way to extend the notion of ``positive definite'' to non-symmetric matrices.  
Fiedler and Ptak \cite[Theorem 4.3]{FP62} list of thirteen equivalent conditions of which this is one. The classic example of a toppling matrix is the reduced Laplacian of a strongly connected 
directed graph. Extending theorems about graph Laplacians to toppling matrices is an ongoing topic of research \cite{PS04,GK15}.

\section{Critical Group}
\label{s.critical}

The main results begin in this section.
Throughout this section, we take $\Net$ to be a finite irreducible abelian network that halts on all inputs. 

\subsection{Action on recurrent states}
\label{s.recurrent}

 Since the global monoid $M$ of \eqref{e.globalmonoid} is finite and commutative it has a minimal idempotent $e$, and $eM$ is a finite abelian group with identity element $e$ (see \textsection \ref{s.monoid}).

\begin{definition} 
\label{d.critical}
The \emph{critical group} $\Crit \Net$ is the group $eM$. 
\end{definition}

\begin{definition}
A state $x \in Q$ is \emph{recurrent} if it satisfies the equivalent conditions of Lemma~\ref{l.recurrent} (for example, $x=ex$) for the defining action $M \times Q \to Q$.  

Denote by $\Rec \Net$ the set of recurrent states of $\Net$.
\end{definition}

For example, if $\Net = \Rotor(G,s)$ is a simple rotor network on a directed graph $G$ with sink vertex $s$, then $\Rec \Net$ can be identified with spanning trees of $G$ oriented toward $s$ \cite[Lemma 3.16]{notes}. The critical group of $\Net$ is isomorphic to the sandpile group $\Crit \Sand(G,s)$ \cite{PDDK96, LL09}. We will deduce this isomorphism as a special case of Theorem~\ref{t.generalcriticalgroup}, below.

The following theorem generalizes \cite[Lemmas 3.13 and 3.17]{notes}, where it was shown that $\Crit \Rotor(G,s)$ acts freely and transitively on the set of spanning trees of $G$ oriented toward~$s$.
 	
\begin{theorem}
\label{t.freeandtransitive}
Let $\Net$ be a finite irreducible abelian network that halts on all inputs.  The action of the critical group on recurrent states
	\[ \Crit \Net \times \Rec \Net \to \Rec \Net \]
is free and transitive.  In particular, $\# \Crit \Net = \# \Rec \Net$.
\end{theorem}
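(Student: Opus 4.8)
The plan is to obtain this as an essentially immediate consequence of Theorem~\ref{t.monoidtogroup}, applied to the global action $\mu : M \times Q \to Q$, $(m,\qq) \mapsto m\qq$, of the global monoid $M = \langle \tau_a \rangle_{a \in A}$. First I would record the two translations that make the statements line up: by Definition~\ref{d.critical} the critical group is $\Crit \Net = eM$, where $e$ is the minimal idempotent of $M$; and by the equivalence of conditions (4) and (5) in Lemma~\ref{l.recurrent}, a state is recurrent exactly when it lies in $eQ$, so $\Rec \Net = eQ$. Under these identifications the map in the statement is precisely the restriction $e\mu : eM \times eQ \to eQ$, and the theorem asserts that this is a free transitive group action.

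It then remains to verify the three hypotheses of Theorem~\ref{t.monoidtogroup} for $\mu$. The monoid $M$ is finite and commutative because $\Net$ is a finite abelian network, which is exactly the condition under which $M$ and its minimal idempotent $e$ were introduced in \S\ref{s.abelian}. The action $\mu$ is irreducible because this is what it means for the network $\Net$ to be irreducible (via the local-to-global principle of Lemma~\ref{l.localglobalirreducible} and the convention fixed immediately afterward). Finally, $\mu$ is faithful: $M$ is by construction a submonoid of $\End(Q)$, so distinct elements of $M$ are distinct self-maps of $Q$, and hence no two distinct $m, m' \in M$ can satisfy $m\qq = m'\qq$ for every $\qq \in Q$. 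With all three hypotheses in place, Theorem~\ref{t.monoidtogroup} gives at once that $e\mu$ is transitive, and free (using faithfulness), and the cardinality equality $\# \Crit \Net = \# \Rec \Net$ follows from the general fact, recorded just before that theorem, that a free transitive group action equinumerates the group and the set.

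I do not expect a genuine obstacle here, since all of the substance has already been packaged into Theorem~\ref{t.monoidtogroup} and the monoid lemmas of \S\ref{s.monoid}: transitivity ultimately rests on the accessibility property $e \in Mm$ for all $m \in M$, freeness on faithfulness, and the description $\Rec \Net = eQ$ on Lemma~\ref{l.recurrent}. The only point requiring care is to keep straight that the relevant monoid is the \emph{global} monoid $M$ of \eqref{e.globalmonoid}, not the product $\prod_{v} M_v$ of local monoids, since it is the minimal idempotent of $M$ that defines $\Crit \Net$; once this identification is made correctly, the theorem follows.
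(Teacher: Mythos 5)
Your proposal is correct and follows essentially the same route as the paper: the paper likewise identifies $\Crit \Net = eM$ and $\Rec \Net = eQ$, notes that the global action $M \times Q \to Q$ is faithful by definition (as $M \subset \End(Q)$) and irreducible by Lemma~\ref{l.localglobalirreducible}, and then invokes Theorem~\ref{t.monoidtogroup}. Your additional care in distinguishing the global monoid $M$ from the product of local monoids $\prod_v M_v$ is exactly the right point to keep straight, but it introduces nothing beyond the paper's own argument.
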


\begin{proof}
Let $\M$ be the transition monoid of $\Net$, and let $e$ be the minimal idempotent of $M$.  Then $\Crit \Net = e\M$ and $\Rec \Net = eQ$.  The monoid action $\M \times Q \to Q$ is faithful by definition and irreducible by Lemma~\ref{l.localglobalirreducible}.  Hence the group action $e\M \times eQ \to eQ$ is free and transitive by Theorem~\ref{t.monoidtogroup}.
\end{proof}

\subsection{Markov chain}
\label{s.markov}

Next we formalize one way in which $\Crit \Net$ and its action on $\Rec \Net$ govern the ``long term behavior'' of $\Net$.  Let $\alpha$ be a probability distribution on the total alphabet $A$.
Consider the Markov chain $(\qq_n)_{n \geq 0}$ on state space $Q$ where the initial state $\qq_0$ can be arbitrary, and subsequent states are defined by
	\begin{equation} \label{e.themarkovchain} \qq_{n+1} = 1_{a_n} \Acts \qq_n, \qquad{ n\geq 0} \end{equation}
where the inputs $a_n \in A$ for $n\geq 0$ are drawn independently at random with distribution $\alpha$. 

Recall that $\qq$ is called \emph{recurrent} for the Markov chain if \begin{equation} \label{e.returnprob1} \Prob ( \qq_n = \qq \text{ for some } n \geq 1 \,|\, \qq_0=\qq ) = 1. \end{equation}
Next we relate this notion of recurrence to the algebraic notion. Let $M_\alpha$ be the submonoid of $M$ generated by $\{\tau_a \,:\, \alpha(a)>0\}$. We say that $\qq$ is \emph{accessible} from $\qq'$ if $\qq \in M_\alpha \qq'$. By the elementary theory of Markov chains, \eqref{e.returnprob1} holds if and only if 
	\begin{equation} \label{e.access} \text{$\qq$ is accessible from $m \qq$ for all $m \in M_\alpha$} \end{equation}
 (here we are using the assumption that $Q$ is finite). Mutual accessibility is an equivalence relation on $Q$. Its equivalence classes are called \emph{communicating classes}.
 
 Write $e_\alpha$ for the minimal idempotent of $M_\alpha$. 
 
 \begin{lemma}
 \label{l.ealpha}
 $\qq$ is recurrent for the Markov chain \eqref{e.themarkovchain} if and only if $e_\alpha \qq = \qq$. 
 \end{lemma}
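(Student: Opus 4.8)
The plan is to prove the equivalence at the level of the monoid $M_\alpha$, since the excerpt already reduces recurrence of $\qq$ for the chain \eqref{e.themarkovchain} (condition \eqref{e.returnprob1}) to the accessibility condition \eqref{e.access}, namely $\qq \in M_\alpha(m\qq)$ for every $m \in M_\alpha$. Thus it suffices to show that \eqref{e.access} holds if and only if $e_\alpha \qq = \qq$. This is exactly the equivalence of conditions $(2)$ and $(5)$ of Lemma~\ref{l.recurrent}, but with $M$ replaced by $M_\alpha$ and $X = Q$. The catch is that the action $M_\alpha \times Q \to Q$ need not be irreducible (the chain typically has several communicating classes), so I cannot simply invoke Lemma~\ref{l.recurrent}. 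Instead I would establish $(2)\Leftrightarrow(5)$ directly, using only that $M_\alpha$ is a finite commutative monoid with minimal idempotent $e_\alpha$.

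For the direction where $e_\alpha\qq = \qq$ implies \eqref{e.access}, I would use the defining property of the minimal idempotent recalled in \textsection\ref{s.monoid}: $e_\alpha$ is accessible from all of $M_\alpha$, that is, $e_\alpha \in M_\alpha m$ for every $m \in M_\alpha$. Given $m \in M_\alpha$, choose $m' \in M_\alpha$ with $m'm = e_\alpha$; then $m'(m\qq) = (m'm)\qq = e_\alpha\qq = \qq$, so $\qq \in M_\alpha(m\qq)$. Since $m$ was arbitrary, \eqref{e.access} holds.

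For the converse, I would apply \eqref{e.access} with the single choice $m = e_\alpha$, obtaining $m' \in M_\alpha$ with $\qq = m'(e_\alpha\qq) = h\qq$, where $h := m'e_\alpha = e_\alpha m'$ lies in $e_\alpha M_\alpha$. By the first lemma of \textsection\ref{s.monoid}, applied to the finite commutative monoid $M_\alpha$, the set $e_\alpha M_\alpha$ is a finite abelian group with identity $e_\alpha$; its proof uses only $e_\alpha \in M_\alpha m$ and not irreducibility of any action. Hence $h$ has finite order, say $h^k = e_\alpha$ for some $k \geq 1$, and iterating the relation $\qq = h\qq$ yields $\qq = h^k\qq = e_\alpha\qq$, as desired.

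The only genuine obstacle here is conceptual rather than computational: one must notice that Lemma~\ref{l.recurrent} cannot be applied verbatim, because $M_\alpha$ does not act irreducibly on $Q$, and then verify that the particular implications $(2)\Leftrightarrow(5)$ nonetheless go through. They do, because both ingredients I rely on — accessibility of $e_\alpha$ from all of $M_\alpha$, and the group structure of $e_\alpha M_\alpha$ — are facts about the finite commutative monoid $M_\alpha$ alone and make no reference to irreducibility. I would flag this point explicitly, so that the reader does not mistakenly expect a direct appeal to Lemma~\ref{l.recurrent}.
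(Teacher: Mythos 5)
Your proof is correct and follows essentially the same route as the paper: both reduce the statement to \eqref{e.access}, handle the forward direction via accessibility of $e_\alpha$ from every $m \in M_\alpha$ (the paper phrases this as taking the inverse of $e_\alpha m$ in the group $e_\alpha M_\alpha$), and handle the converse by applying \eqref{e.access} with $m = e_\alpha$ (the paper concludes via $\qq \in e_\alpha Q$ and idempotence, you via the finite order of $h$ in $e_\alpha M_\alpha$ --- an immaterial difference). Your explicit remark that irreducibility is not needed for these monoid-theoretic facts is exactly the point the paper relies on implicitly.
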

 
 \begin{proof}
Given $m \in M_\alpha$, let $g$ be the inverse of $e_\alpha m$ in the group $e_\alpha M_\alpha$. Since $g = g e_\alpha$ we have $gm = g (e_\alpha m) = e_\alpha$, so for any $\qq \in Q$
	\[ g(m\qq) = e_{\alpha} \qq. \]
Therefore if $e_\alpha \qq = \qq$ then \eqref{e.access} holds. Conversely, if \eqref{e.access} holds then in particular $\qq$ is accessible from $e_\alpha \qq$, so $\qq \in e_\alpha Q$ and hence $e_\alpha \qq = \qq$.
 \end{proof}

If the support of $\alpha$ is too small, it may happen that $e_\alpha \neq e$,
or it may happen that $e_\alpha = e$ but $e M_\alpha$ is a proper subgroup of $eM$.
We say that $\alpha$ has \emph{adequate support} if
	\[ e_\alpha M_\alpha = eM. \]
Note that this holds trivially if $\alpha(a)>0$ for all $a \in A$ (in which case $M_\alpha = M$).

\begin{lemma}
\label{l.markov}
 If $\alpha$ has adequate support, then $\Rec \Net$ is the unique communicating class of recurrent states for the Markov chain \eqref{e.themarkovchain}.
\end{lemma}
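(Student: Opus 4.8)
The plan is to first identify the recurrent states of the Markov chain with $\Rec \Net = eQ$, and then to show that this set forms a single communicating class. By Lemma~\ref{l.ealpha}, a state $\qq$ is recurrent for the chain \eqref{e.themarkovchain} if and only if $e_\alpha \qq = \qq$, so the recurrent states are exactly $e_\alpha Q$. The key observation is that adequate support forces $e_\alpha = e$: since $e_\alpha M_\alpha$ and $eM$ are equal as subsets of $M$ and each is a group under the monoid multiplication, they must share the same identity element, and these identities are $e_\alpha$ and $e$ respectively. (Concretely, $e_\alpha \in e_\alpha M_\alpha = eM$ gives $e e_\alpha = e_\alpha$, while $e \in eM = e_\alpha M_\alpha$ gives $e_\alpha e = e$, and commutativity of $M$ then yields $e_\alpha = e$.) Hence the recurrent states are precisely $eQ = \Rec \Net$.

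Next I would prove that every pair of states in $eQ$ communicates. Because $e_\alpha = e$ we have $eM = e_\alpha M_\alpha$, and since $e_\alpha \in M_\alpha$ and $M_\alpha$ is closed under multiplication, $eM \subseteq M_\alpha$. Given $\qq, \qq' \in \Rec \Net = eQ$, Theorem~\ref{t.freeandtransitive} says the critical group $\Crit \Net = eM$ acts transitively on $\Rec \Net$, so there is $g \in eM$ with $g \qq = \qq'$. As $g \in eM \subseteq M_\alpha$, this shows $\qq'$ is accessible from $\qq$; by symmetry $\qq$ is accessible from $\qq'$. Thus all of $eQ$ lies in a single communicating class.

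It remains to rule out that this communicating class is strictly larger than $eQ$, and this is the step I expect to require the most care. Here I would invoke the standard class property of recurrence for finite Markov chains: within a single communicating class, either all states are recurrent or all are transient. The class $C$ containing $eQ$ meets the recurrent set $eQ$ (which is nonempty, e.g. $e\qq \in eQ$ for any $\qq$), so $C$ consists entirely of recurrent states; but the recurrent states are exactly $eQ$, forcing $C \subseteq eQ$ and hence $C = eQ$. Therefore $\Rec \Net = eQ$ is itself a communicating class consisting of recurrent states, and since it already contains every recurrent state it is the unique such class. The main conceptual point throughout is the reduction $e_\alpha = e$ supplied by adequate support; once that is in hand, transitivity from Theorem~\ref{t.freeandtransitive} and the elementary theory of finite Markov chains do the rest.
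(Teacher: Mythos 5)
Your proof is correct and follows essentially the same route as the paper's: identify the chain's recurrent states as $e_\alpha Q = eQ = \Rec \Net$ via Lemma~\ref{l.ealpha} and the reduction $e_\alpha = e$ forced by adequate support, then use the transitive action of $eM = e_\alpha M_\alpha \subseteq M_\alpha$ from Theorem~\ref{t.freeandtransitive} to get mutual accessibility. The only difference is that you spell out two steps the paper leaves implicit (the two-identities argument for $e_\alpha = e$, and the class property of recurrence to rule out a strictly larger communicating class), which is fine.
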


\begin{proof}
If $e_\alpha M_\alpha \subseteq eM$ then $e_\alpha = e$. By Lemma~\ref{l.ealpha} the set of recurrent states for the Markov chain is $e_\alpha Q = eQ = \Rec \Net$.

By Theorem~\ref{t.freeandtransitive}, the group $eM$ acts transitively on $\Rec \Net$, so if $eM \subseteq e_\alpha M_\alpha$ then any two states of $\Rec \Net$ are mutually accessible.
\end{proof}

 The next two theorems generalize results of \cite{Dha90}, where they are proved for sandpile networks.
 
  \begin{theorem}\label{c.markov}
For any $\alpha$, the uniform distribution on $\Rec \Net$ is stationary for the Markov chain \eqref{e.themarkovchain}. If $\alpha$ has adequate support, then the stationary distribution is unique.
  \end{theorem}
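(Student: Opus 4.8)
The plan is to read off everything from the torsor structure established in Theorem~\ref{t.freeandtransitive}, namely that $\Crit\Net = eM$ acts freely and transitively on $\Rec\Net = eQ$. First I would record two structural facts about the transitions $\qq \mapsto \tau_a\qq$. \emph{Invariance of $\Rec\Net$:} for $\qq \in eQ$, commutativity of $M$ gives $\tau_a\qq = \tau_a e\qq = e(\tau_a\qq) \in eQ$, so once the chain \eqref{e.themarkovchain} enters $\Rec\Net$ it stays there; thus a distribution supported on $\Rec\Net$ is stationary for the chain on $Q$ precisely when it is stationary for the restricted chain on $\Rec\Net$. \emph{Bijectivity:} on $eQ$ the map $\qq\mapsto\tau_a\qq$ coincides with the action of the group element $e\tau_a \in \Crit\Net$, since $\tau_a\qq = e\tau_a\qq = (e\tau_a)\qq$ for $\qq \in eQ$, and by Lemma~\ref{l.actsinvertibly} this map is a permutation of $\Rec\Net$.

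For the first assertion I would then argue that the restricted transition kernel is doubly stochastic. Writing $\pi$ for the uniform distribution on $\Rec\Net$ and letting $\tau_a^{-1}\qq'$ denote the unique preimage of $\qq'$ under the permutation $\tau_a|_{\Rec\Net}$,
\[ (\pi P)(\qq') = \sum_{\qq \in \Rec\Net} \pi(\qq) \sum_{a \,:\, \tau_a\qq = \qq'} \alpha(a) = \sum_{a \in A} \alpha(a)\, \pi(\tau_a^{-1}\qq') = \Big(\sum_{a\in A}\alpha(a)\Big)\pi(\qq') = \pi(\qq'), \]
where the third equality uses that $\pi$ is constant on $\Rec\Net$. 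Conceptually this is just the statement that left translation by a critical-group element preserves counting measure on the torsor $\Rec\Net$, so each $\tau_a$—and hence their $\alpha$-mixture—fixes $\pi$. Extending $\pi$ by zero off $\Rec\Net$ gives a stationary distribution for the full chain.

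For uniqueness under adequate support I would invoke Lemma~\ref{l.markov}, which gives that $\Rec\Net$ is then the \emph{unique} communicating class of recurrent states of the chain. Standard finite-state Markov chain theory says every stationary distribution is supported on recurrent states and that the extreme stationary distributions are in bijection with the recurrent communicating classes; with a single such class the stationary distribution is unique, and by the first part it must equal $\pi$. The computation here is entirely routine, so the only substantive points are the two structural facts of the first paragraph—invariance and bijectivity of the transitions on $\Rec\Net$, which are what make the restricted chain doubly stochastic—and the correct citation of Lemma~\ref{l.markov} together with the uniqueness criterion for finite chains. I expect the mild care needed in relating the monoid/group action to the probabilistic statement (rather than any hard estimate) to be the main thing to get right.
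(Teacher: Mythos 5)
Your proposal is correct and follows essentially the same route as the paper: stationarity comes from the torsor structure of Theorem~\ref{t.freeandtransitive} (the paper phrases it as pushing forward a uniform element of $\Crit\Net$, you phrase it as each $\tau_a$ restricting to a permutation of $\Rec\Net$, making the restricted kernel doubly stochastic—the same fact), and uniqueness in both cases is exactly Lemma~\ref{l.markov} plus standard finite-chain theory.
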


\begin{proof}
Fix $\qq_0 \in \Rec \Net$, and let $g$ be uniform random element of $\Crit \Net$.  By Theorem~\ref{t.freeandtransitive}, $g \Acts \qq_0$ is a uniform random element of $\Rec \Net$.  If $a \in A$ is independent of $g$, then $\tau_a g$ is also a uniform random element of $\Crit \Net$; hence if $\qq_n$ is uniform on $\Rec \Net$, then $\qq_{n+1}$ is again uniform on $\Rec \Net$.  

If $\alpha$ has adequate support, then $\Rec \Net$ is the unique recurrent communicating class by Lemma~\ref{l.markov}, so the stationary distribution is unique.
\end{proof}

\subsection{Expected time to halt}
\label{s.expectedtime}

Recall our assumptions that $\Net$ is finite and irreducible and halts on all inputs. In particular, the production matrix $P$ is well defined and has spectral radius $<1$ by Theorem~\ref{t.halting}, so $I-P$ is invertible where $I$ is the $A \times A$ identity matrix. Our next result gives an interpretation for the entry $(I-P)^{-1}_{ab}$: it is the expected number of letters $a$ processed before the network halts, when one letter $b$ is input to a uniform recurrent state. 

%

\begin{theorem} 
\moniker{Expected Time To Halt}
\label{t.greenfunction}
Let $\qq$ be a uniform random element of $\Rec \Net$.  Then for all $\xx \in \N^A$ we have \[ \EE [\xx.\qq] = (I-P)^{-1} \xx. \]
\end{theorem}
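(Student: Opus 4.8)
The plan is to show that the map $f(\xx) := \EE[\xx.\qq]$, where $\qq$ is uniform on $\Rec\Net$, is \emph{additive} in $\xx$, and then to pin it down by testing it against the reset vectors $K \cap \N^A$, on which the production is exactly linear.

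First I would prove additivity. Fix $\xx, \yy \in \N^A$. By the odometer splitting identity (Lemma~\ref{l.odomsplit}),
\[ [(\xx+\yy).\qq] = [\xx.\qq] + [\yy.(\xx \Acts \qq)]. \]
Taking expectations over $\qq$ uniform on $\Rec\Net$, the first term contributes $f(\xx)$. For the second, the key observation is that $\qq \mapsto \xx \Acts \qq = \tau(\xx)\qq$ is a \emph{permutation} of $\Rec\Net = eQ$: by Lemma~\ref{l.actsinvertibly} every element of $M$, in particular $\tau(\xx)$, acts invertibly on $eQ$. Hence $\xx \Acts \qq$ is again uniform on $\Rec\Net$ when $\qq$ is, so $\EE[\yy.(\xx \Acts \qq)] = f(\yy)$. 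This gives $f(\xx+\yy) = f(\xx) + f(\yy)$, so $f$ is additive on $\N^A$ and therefore is the restriction of a linear map $F \colon \Q^A \to \Q^A$, namely the matrix whose $a$th column is $f(1_a)$.

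Next I would extract the relation to $P$. Recurrent states are locally recurrent: since $\qq$ is recurrent, $\qq = \xx \Acts \qq'$ for some $\qq'$ (Lemma~\ref{l.recurrent}(3)), and choosing $\xx$ with $t_v(\xx|_{A_v}) = e_v$ for each $v$ and using $[\xx.\qq'] \ge \xx$ shows that $t_v([\xx.\qq']|_{A_v}) = e_v t_v(\cdots) \in e_v M_v$, whence $\qq_v \in e_v M_v Q_v \subseteq e_v Q_v$ (cf.\ \cite{part2}). Thus for $\kk_0 \in K \cap \N^A$ the production identity \eqref{e.production} gives $\kk_0 \acts \qq = P(\kk_0).\qq$, and by Lemma~\ref{l.partialexec2} this means
\[ [\kk_0.\qq] = \kk_0 + [P(\kk_0).\qq]. \]
Here $P(\kk_0) \in \N^A$ does not depend on $\qq$ (by irreducibility the production map is independent of the state), so taking expectations yields $f(\kk_0) = \kk_0 + f(P(\kk_0))$, i.e.\ $F\kk_0 = \kk_0 + F P \kk_0$ for every $\kk_0 \in K \cap \N^A$.

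Finally, since $K$ has finite index in $\Z^A$ it is spanned over $\Q$ by $K \cap \N^A$, so the linear maps $F$ and $I + FP$ agree on a spanning set and hence $F = I + FP$. Because $\Net$ halts, $P$ has spectral radius $<1$ (Theorem~\ref{t.halting}), so $I-P$ is invertible, and $F(I-P) = I$ gives $F = (I-P)^{-1}$, as claimed. I expect the main obstacle to be the third step: the odometer $[\xx.\qq]$ is a genuinely nonlinear function of the input, and one cannot naively replace the random odometer by $P$ applied to its expectation. The device that resolves this is to evaluate $f$ only on reset vectors, where production is exactly linear and returns each processor to its starting (recurrent) state, so that no residual dependence on $\qq$ survives the expectation.
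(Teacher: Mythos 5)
Your proof is correct, but it takes a genuinely different route from the paper's. The paper clears the randomness at the outset: since $\Crit \Net$ is finite, it picks $n$ with $n\xx \in \ker\phi$, so that $n\xx \Acts \qq = \qq$ for every recurrent $\qq$ (Lemma~\ref{l.phikernel}); Lemma~\ref{l.burningodom} then forces the odometer $\kk = [n\xx.\qq]$ to satisfy $n\xx = (I-P)\kk$, so $\kk$ is the \emph{same deterministic vector} for all recurrent $\qq$, and averaging the telescoping identity $\kk = \sum_{j=0}^{n-1} [\xx.(j\xx \Acts \qq)]$ (Lemma~\ref{l.odomsplit} plus preservation of the uniform measure, Theorem~\ref{c.markov}) gives $\kk = n \EE[\xx.\qq]$. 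You instead keep the randomness throughout: you prove that $f(\xx) = \EE[\xx.\qq]$ is additive, using the same two ingredients (Lemma~\ref{l.odomsplit} and invariance of the uniform measure, which you get from Lemma~\ref{l.actsinvertibly}), and then pin down the resulting matrix $F$ by the renewal-type equation $F = I + FP$, obtained by testing on reset vectors $\kk_0 \in K \cap \N^A$, where \eqref{e.production} and Lemma~\ref{l.partialexec2} give the exact pointwise identity $[\kk_0.\qq] = \kk_0 + [P\kk_0.\qq]$. Your route buys self-containedness: it bypasses the critical group entirely (no $\phi$, no Lemma~\ref{l.phikernel}, no Lemma~\ref{l.burningodom}), needing only invertibility of $M$ on $eQ$, the production identity on locally recurrent states, and invertibility of $I-P$; it also exhibits $(I-P)^{-1}$ as the unique solution of a natural functional equation. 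The paper's route buys economy given what has already been established (it reuses the $\ker\phi$ machinery from Theorem~\ref{t.generalcriticalgroup}) and yields a strictly stronger by-product: for $n\xx \in \ker\phi$ the odometer $[n\xx.\qq]$ is \emph{constant} over recurrent $\qq$, not merely of known expectation. Two minor remarks on your write-up: your in-line re-derivation of ``recurrent implies locally recurrent'' is valid but unnecessary, since Lemma~\ref{l.locallyrecurrent} precedes the theorem in the paper; and in your final step, $F(I-P) = I$ suffices to conclude $F = (I-P)^{-1}$ only because $I-P$ is a square invertible matrix (a left inverse is then the inverse), which is exactly what your appeal to Theorem~\ref{t.halting} provides.
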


Later, in Theorem~\ref{t.timetohalt}, we will bound the difference $[\xx.\qq] - \EE[\xx.\qq]$.
To build up to the proofs of these results, consider the group homomorphism
	\[ \phi: \Z^A \to \Crit \Net \]
defined on generators by $\one_a \mapsto e\tau_a$ for each $a \in A$.  For $x \in \N^A$ and $\qq \in Q$ we have
	\begin{equation} \label{e.phinonnegative} \phi(\xx)\qq = \left( \prod_{a \in A} (e\tau_a)^{\xx_a} \right) \qq = \left( \prod_{a \in A} \tau_a^{\xx_a} \right) e\qq = \xx \Acts e\qq. \end{equation}
First we observe that input vectors in the kernel of $\phi$ act trivially on recurrent states.	

\begin{lemma}\label{l.phikernel}
If $\xx \in (\ker \phi ) \cap \N^A$ and $\qq \in \Rec \Net$, then $\xx \Acts \qq = \qq$.
\end{lemma}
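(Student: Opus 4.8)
The plan is to read off the conclusion almost directly from the displayed identity~\eqref{e.phinonnegative} together with the characterization of recurrent states. The key observation is that equation~\eqref{e.phinonnegative} already equates the group action $\phi(\xx)\qq$ in the critical group with the global action $\xx \Acts e\qq$, so the only thing left to do is to interpret both sides under the two hypotheses $\xx \in \ker\phi$ and $\qq \in \Rec\Net$.

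First I would unwind the hypothesis $\xx \in (\ker\phi)\cap\N^A$. Since $\phi : \Z^A \to \Crit\Net = eM$ is a group homomorphism and the identity element of the group $eM$ is $e$, the condition $\xx \in \ker\phi$ says precisely that $\phi(\xx) = e$. Next I would unwind the hypothesis $\qq \in \Rec\Net$. By the definition of recurrence (condition~(5) of Lemma~\ref{l.recurrent}) applied to the global action $M \times Q \to Q$, we have $\Rec\Net = eQ$ and hence $e\qq = \qq$.

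Combining these, the computation is a one-line chain: using~\eqref{e.phinonnegative} for the nonnegative vector $\xx \in \N^A$, then $e\qq = \qq$, then $\phi(\xx) = e$, and finally $e\qq = \qq$ again, we get
\[ \xx \Acts \qq = \xx \Acts e\qq = \phi(\xx)\,\qq = e\qq = \qq. \]
Here the first equality uses recurrence to replace $\qq$ by $e\qq$ inside the action, the second is exactly~\eqref{e.phinonnegative}, the third uses $\phi(\xx)=e$, and the last again uses recurrence.

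I do not anticipate a genuine obstacle: the statement is essentially a bookkeeping consequence of~\eqref{e.phinonnegative}. The only point requiring care is that the clean formula~\eqref{e.phinonnegative} is valid for $\xx \in \N^A$ rather than all of $\Z^A$, so it is important that the hypothesis restricts to $\xx \in (\ker\phi)\cap\N^A$; on this domain the product $\prod_{a}(e\tau_a)^{\xx_a}$ is a genuine product of monoid elements and the identity applies verbatim. With that noted, the proof is immediate.
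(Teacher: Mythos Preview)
Your proposal is correct and matches the paper's own proof essentially line for line: the paper also uses $e\qq=\qq$ for recurrent $\qq$, then invokes~\eqref{e.phinonnegative} together with $\phi(\xx)=e$ to obtain the same chain $\xx \Acts \qq = \xx \Acts e\qq = \phi(\xx)\qq = e\qq = \qq$.
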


\begin{proof}
For $\qq \in \Rec \Net$ we have $\qq = e\qq$.
Now by \eqref{e.phinonnegative}, since $\phi(\xx)=e$,
	\[ \xx \Acts \qq = \xx\Acts  e\qq = \phi(\xx)\qq = e\qq = \qq. \qedhere \]
\end{proof}

Write $e_v$ for the minimal idempotent of the local monoid $M_v$ (\textsection\ref{s.localmonoid}).

\begin{lemma}
\moniker{Recurrent Implies Locally Recurrent}
\label{l.locallyrecurrent}
If $e\qq = \qq$ then $e_v \qq_v = \qq_v$ for all $v \in V$.
\end{lemma}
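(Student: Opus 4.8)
The plan is to verify, for each fixed $v \in V$, condition (3) of Lemma~\ref{l.recurrent} for the local action $M_v \times Q_v \to Q_v$: that $\qq_v \in m_v Q_v$ for \emph{every} $m_v \in M_v$. Once this holds for all $m_v$, the equivalence of conditions (3) and (5) in Lemma~\ref{l.recurrent} yields $e_v \qq_v = \qq_v$, which is exactly the claim. So everything reduces to showing that the coordinate $\qq_v$ of a (globally) recurrent state lies in $m_v Q_v$ for an arbitrary local monoid element.

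To this end I would fix $v$ and $m_v \in M_v$, and use surjectivity of $t_v$ to write $m_v = t_v(\yy_v)$ for some $\yy_v \in \N^{A_v}$; extending by zero on the other alphabets (recall $A = \sqcup_u A_u$) gives $\yy \in \N^A$ with $\yy|_{A_v} = \yy_v$. By Lemma~\ref{l.largee} there exists $\zz \geq \yy$ with $\tau(\zz) = e$. Since $\qq$ is recurrent we have $e\qq = \qq$, so
	\[ \qq = e\qq = \tau(\zz)\qq = \zz \Acts \qq = t(\kk)\qq, \]
where $\kk := [\zz.\qq]$ and the final equality is \eqref{e.tglobal}. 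Reading off the $v$ coordinate gives $\qq_v = t_v(\kk_v)\qq_v$, where $\kk_v := \kk|_{A_v}$. The step I expect to be the main obstacle is guaranteeing $\kk_v \geq \yy_v$, so that $t_v(\kk_v)$ factors through $m_v$. For this I would invoke the elementary fact that the odometer dominates the input, $[\zz.\qq] \geq \zz$: in a complete legal execution $w$ the final count of letter $a$ is $\zz_a + p_a - |w|_a = 0$, where $p_a \geq 0$ is the number of letters $a$ passed (messages are nonnegative), forcing $[\zz.\qq]_a = |w|_a = \zz_a + p_a \geq \zz_a$. Hence $\kk \geq \zz \geq \yy$, and in particular $\kk_v \geq \yy_v$.

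With the inequality in hand the conclusion is immediate: since $t_v$ is a homomorphism into the commutative monoid $M_v$,
	\[ t_v(\kk_v) = t_v(\yy_v)\, t_v(\kk_v - \yy_v) = m_v\, t_v(\kk_v - \yy_v), \]
so $\qq_v = t_v(\kk_v)\qq_v = m_v\bigl(t_v(\kk_v - \yy_v)\qq_v\bigr) \in m_v Q_v$. As $m_v \in M_v$ was arbitrary, condition (3) of Lemma~\ref{l.recurrent} is satisfied, giving $e_v\qq_v = \qq_v$ for each $v$, as desired. The real content is recognizing that recurrence of $\qq$ lets us trade the arbitrary local instruction $\yy_v$ for the odometer $\kk$ of a large reset input $\zz$ with $\tau(\zz) = e$, together with the observation that this odometer necessarily overshoots $\yy_v$; if one prefers not to re-derive the odometer bound inline, it can instead be cited from the least-action analysis of \cite{part1}.
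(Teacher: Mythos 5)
Your proof is correct, but it closes the argument differently from the paper, and the comparison is instructive. The paper fixes a single reset vector $\zz \geq \one$ with $\tau(\zz)=e$, sets $\uu = [\zz.\qq]$, and iterates: by Lemma~\ref{l.odomsplit}, $[n\zz.\qq] = n\uu$ for all $n$, and since for large $n$ the element $t_v(n\uu_v)$ lands in the finite group $e_vM_v$, some multiple $n_v$ achieves $t_v(n_v\uu_v)=e_v$ exactly; applying \eqref{e.tglobal} to $n_v\zz$ then verifies condition (5) of Lemma~\ref{l.recurrent} directly. You instead verify condition (3): for each $m_v \in M_v$ you tailor the reset vector ($\zz \geq \yy$ where $t_v(\yy_v)=m_v$, via Lemma~\ref{l.largee}), use a single application of \eqref{e.tglobal}, and factor $t_v(\kk_v) = m_v\,t_v(\kk_v-\yy_v)$ using the odometer bound $[\zz.\qq] \geq \zz$ --- a bound you derive correctly from nonnegativity of messages and which the paper uses only implicitly when it writes $\uu = [\zz.\qq] \geq \one$. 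What your route buys: no iteration, no use of Lemma~\ref{l.odomsplit}, and the group-theoretic work is delegated wholesale to the already-established equivalence $(3) \Leftrightarrow (5)$. What it costs: invoking Lemma~\ref{l.recurrent} for the local action $M_v \times Q_v \to Q_v$ requires that action to be irreducible; this is exactly the paper's standing hypothesis (recall that ``irreducible network'' means every processor is irreducible), so your proof is valid as written, but it is worth noting that the paper's own argument needs only that $e_vM_v$ is a finite group, which holds for any finite commutative monoid regardless of irreducibility.
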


\begin{proof}
By Lemma~\ref{l.largee} there exists $\zz \geq \one$ be such that $\tau(\zz)=e$. Let $\uu = [\zz.\qq] \geq \one$. For each $v \in V$ let $\yy_v \in \N^{A_v}$ be such that $t_v(\yy_v)=e_v$. Then for all sufficiently large $n$ we have $n \uu_v \geq \yy_v$ and hence $t_v(n \uu_v) \in e_v M_v$. Since $e_v M_v$ is a finite group, there exists $n_v \in \N$ such that $t_v(n_v \uu_v) = e_v$.

Now if $e\qq=\qq$ then for all $n \in \N$ we have $(n\zz) \Acts \qq = \qq$ and $[n\zz.\qq] = n\uu$ by Lemma~\ref{l.odomsplit}, and hence $t(n\uu)\qq = \qq$ by \eqref{e.tglobal}. Taking $n=n_v$ we obtain
	\[ e_v \qq_v = t_v(n_v \uu_v) \qq_v = \qq_v. \qedhere \]
\end{proof}

The converse of Lemma~\ref{l.locallyrecurrent} is usually false: For example, all states of $\Rotor(G,s)$ and $\Sand(G,s)$ are locally recurrent, but states of the former containing an oriented cycle of rotors are not recurrent, nor are states of the latter containing ``forbidden'' subconfigurations (the simplest of which is a pair of adjacent vertices both in state $0$). Theorem~\ref{t.nocycle} gives conditions when the converse does hold.

Now we come to the main ingredient in the proof of Theorem~\ref{t.greenfunction}: if the network starts in a locally recurrent state $\qq$ and halts in the same state $\qq$, then the odometer $[\xx.\qq]$ belongs to the total kernel $K$.

\begin{lemma}
\label{l.burningodom}
Fix $\xx \in \N^A$ and a locally recurrent state $\qq \in Q$.  Let $\kk = [\xx.\qq]$.  If $\xx \Acts \qq = \qq$, then $\kk \in K$ and
	$ \xx = (I-P)\kk. $
\end{lemma}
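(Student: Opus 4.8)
The plan is to prove the two assertions separately: the membership $\kk\in K$ is immediate, while the identity $\xx=(I-P)\kk$ will rest on the observation that, for fixed content $\kk$ and fixed starting state $\qq$, the map $\yy.\qq\mapsto\pi_\kk(\yy.\qq)$ acts on the buffer by a translation.

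First, for $\kk\in K$: since $\kk=[\xx.\qq]$, equation~\eqref{e.tglobal} gives $t(\kk)\qq=\xx\Acts\qq=\qq$. As $\qq$ is locally recurrent, this is exactly condition~(1) of Lemma~\ref{l.free}, so condition~(3) holds, i.e. $\kk\in K$. This also tells me that the final state of $\pi_\kk(\yy.\qq)$ is $t(\kk)\qq=\qq$ for every buffer $\yy$, since the state reached after processing content $\kk$ is $t(\kk)\qq$ regardless of the buffer (debts being permitted).

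For the identity, the key claim is that there is a fixed vector $\ss=\ss(\kk,\qq)$ with $\pi_\kk(\yy.\qq)=(\yy+\ss).\qq$ for all $\yy$. Indeed, processing exactly $\kk_a$ letters $a$ drives the internal state through a sequence determined only by $\kk$ and $\qq$ (axiom~(i)) and emits a cumulative output determined only by $\kk$ and $\qq$ (axiom~(ii)); the final buffer is the initial buffer minus $\kk$ (the letters consumed) plus this fixed output, so the net displacement $\ss$ is independent of $\yy$. Granting this, I evaluate $\ss$ two ways. On one hand, Lemma~\ref{l.odometertrick}(i) gives $\pi_\kk(\xx.\qq)=\zero.(\xx\Acts\qq)=\zero.\qq$, whence $\xx+\ss=\zero$, i.e. $\ss=-\xx$. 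On the other hand, since $\kk\in K\cap\N^A$ and $\qq$ is locally recurrent, the production identity~\eqref{e.production} gives $\pi_\kk(\kk.\qq)=\kk\acts\qq=P(\kk).\qq$, whence $\kk+\ss=P(\kk)$, i.e. $\ss=P(\kk)-\kk$. Equating the two expressions for $\ss$ yields $\xx=\kk-P(\kk)=(I-P)\kk$.

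The hard part is making the translation property precise and rigorous from the definition of $\pi$: one must argue that the sequence of internal states visited while processing a word of content $\kk$, and hence the cumulative output, depend only on $\kk$ and the starting state $\qq$ and not on the buffer $\yy$, so that enlarging the buffer simply leaves the extra letters untouched at the end. This buffer-independence is precisely where the abelian axioms enter; once it is established, the two-way evaluation of $\ss$ is routine.
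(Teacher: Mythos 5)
Your proof is correct and follows essentially the same route as the paper's: membership $\kk \in K$ via \eqref{e.tglobal} and Lemma~\ref{l.free}, then the identity by comparing the production relation \eqref{e.production} against the odometer computation for $\xx.\qq$. The buffer-translation property you flag as the ``hard part'' is exactly what the paper's Lemma~\ref{l.odometertrick}(ii) already packages: the paper writes $\kk.\qq = \kk \acts (\xx.\qq) = (P(\kk)+\xx).\qq$, which is your two-way evaluation of $\ss$ compressed into one line.
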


\begin{proof}
If $\xx \Acts \qq = \qq$, then $t(\kk) \qq = \qq$ by equation \eqref{e.tglobal}. 
By Lemma~\ref{l.free}
it follows that $\kk \in K$.
Now by Lemma~\ref{l.odometertrick}(ii) and the definition \eqref{e.production} of the production matrix,
	\[ \kk.\qq = \kk \acts (\xx.\qq) = (P(\kk)+\xx).\qq \]
Hence $\kk = P(\kk)+\xx$.
\end{proof}

\begin{proof}[Proof of Theorem~\ref{t.greenfunction}]
Since $\Crit \Net$ is a finite group, for any $\xx \in \Z^A$ there is a positive integer $n$ such that $n \xx \in \ker \phi$.
Fix $\qq \in \Rec \Net$ and let $\kk = [n\xx.\qq]$. We have $n\xx \Acts \qq = \qq$ by Lemma~\ref{l.phikernel}. Moreover $\qq$ is locally recurrent by Lemma~\ref{l.locallyrecurrent}, so by Lemma~\ref{l.burningodom} it follows that $n\xx = (I-P)\kk$.  In particular, since $I-P$ is invertible, $\kk$ does not depend on $\qq$.  

Now by Lemma~\ref{l.odomsplit},
	\[ \kk = \sum_{j=0}^{n-1} [\xx.(j\xx \Acts \qq)]. \]
By Theorem~\ref{c.markov}, if $\qq$ is uniform on $\Rec \Net$ then $j\xx \Acts \qq$ is uniform on $\Rec \Net$ for all $j \in \N$.  Taking expectations, we obtain
	\[ \kk = \EE\kk = n \EE [\xx.\qq]. \]
Dividing by $n$ yields the result.
\end{proof}

\subsection{Generators and relations}
\label{s.genrel}

In this section we give generators and relations for the critical group.  The group homomorphism $\phi: \Z^A \to \Crit \Net$
sending $a \mapsto e\tau_a$ is surjective, since $\Crit \Net = eM$ and $M$ is generated by $\{\tau_a\}_{a \in A}$.  To describe the kernel of $\phi$ we will use the \emph{production map} from \textsection\ref{s.KP},
	\[ P : K \to \Z^A \]
where $K \subset \Z^A$ is the total kernel. Write $I$ for the inclusion $K \hookrightarrow \Z^A$.

\begin{theorem}
\label{t.generalcriticalgroup}
The natural map $\phi : \Z^A \to \Crit \Net$ induces an isomorphism of abelian groups
	\[ \Crit \Net \isom \Z^A / (I-P)K. \]
\end{theorem}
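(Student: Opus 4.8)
The plan is to combine surjectivity of $\phi$ with the first isomorphism theorem: since $\phi:\Z^A \to \Crit\Net$ is onto (as noted just before the statement), $\Crit\Net \isom \Z^A/\ker\phi$, so the whole theorem reduces to the equality of subgroups $\ker\phi = (I-P)K$. I would prove this by two inclusions, doing $(I-P)K \subseteq \ker\phi$ first (this records the relations) and $\ker\phi \subseteq (I-P)K$ second (this says there are no further relations).

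For $(I-P)K \subseteq \ker\phi$, I would work on the generating set $K \cap \N^A$ of $K$ and extend by linearity, since $\phi\circ(I-P):K\to\Crit\Net$ is a homomorphism and vanishing on a generating set forces vanishing on all of $K$. Fix $\kk \in K \cap \N^A$ and a recurrent $\qq$. By Lemma~\ref{l.locallyrecurrent}, $\qq$ is locally recurrent, so $\kk \acts \qq = P(\kk).\qq$ by the defining equation \eqref{e.production}, with $P(\kk) \in \N^A$. Lemma~\ref{l.partialexec2} then gives $\kk \Acts \qq = P(\kk) \Acts \qq$. Because $\qq = e\qq$, equation \eqref{e.phinonnegative} rewrites both sides as $\phi(\kk)\qq = \phi(P(\kk))\qq$, valid for every $\qq \in \Rec\Net$. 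Since the action of $\Crit\Net$ on $\Rec\Net$ is free (Theorem~\ref{t.freeandtransitive}), group elements acting identically must coincide, so $\phi(\kk)=\phi(P(\kk))$, whence $\phi((I-P)\kk)=e$.

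For the reverse inclusion, the only tool producing a relation $\xx = (I-P)\kk$ with $\kk \in K$ is Lemma~\ref{l.burningodom}, but that lemma applies only to nonnegative inputs $\xx \in \N^A$, whereas a general element of $\ker\phi \subseteq \Z^A$ may have negative coordinates. Clearing this is the crux. I would first observe that $(I-P)K$ has finite index in $\Z^A$: the map $I-P$ is invertible over $\Q$ because the spectral radius of $P$ is less than $1$ (Theorem~\ref{t.halting}), and $K$ already has finite index, so $(I-P)K$ is a full-rank sublattice. Letting $N$ be the exponent of the finite group $\Z^A/(I-P)K$, we get $N\Z^A \subseteq (I-P)K$. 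Given any $\xx \in \ker\phi$, I can then add a vector $\vv \in N\Z^A \subseteq (I-P)K$ with coordinates large enough that $\xx' := \xx + \vv \in \N^A$; since $(I-P)K \subseteq \ker\phi$ by the first inclusion, $\xx'$ still lies in $\ker\phi \cap \N^A$.

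Finally, for such a nonnegative $\xx'$, Lemma~\ref{l.phikernel} gives $\xx' \Acts \qq = \qq$ for a recurrent $\qq$, which is locally recurrent by Lemma~\ref{l.locallyrecurrent}; Lemma~\ref{l.burningodom} then yields $\kk := [\xx'.\qq] \in K$ with $\xx' = (I-P)\kk \in (I-P)K$. Hence $\xx = \xx' - \vv \in (I-P)K$, completing the second inclusion and the proof. I expect the reduction to the nonnegative case --- arguing via finite index that $(I-P)K$ contains arbitrarily large positive vectors --- to be the main point requiring care; the remaining steps are direct applications of the lemmas of Sections~\ref{s.abelian} and~\ref{s.recurrent}.
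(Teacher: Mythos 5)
Your proof is correct and follows essentially the same route as the paper's: the first inclusion $(I-P)K \subseteq \ker\phi$ uses the identical chain (production identity, Lemma~\ref{l.partialexec2}, equation~\eqref{e.phinonnegative}, freeness from Theorem~\ref{t.freeandtransitive}), and the reverse inclusion uses the same combination of Lemmas~\ref{l.phikernel} and~\ref{l.burningodom}. The only cosmetic difference is how you reduce to nonnegative vectors: the paper observes that $\ker\phi$ is full rank and hence generated by $(\ker\phi)\cap\N^A$, whereas you translate a general $\xx\in\ker\phi$ by a large vector in $N\Z^A\subseteq(I-P)K$ --- the same lattice-theoretic step, just made explicit on the other subgroup.
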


\begin{proof}
We must show that $\ker \phi = (I-P)K$.  Fix $\qq \in \Rec \Net$.  For any $\kk \in K \cap \N^A$ we have
	\[ \kk \acts \qq = P(\kk).\qq \]
so $\kk \Acts \qq = P(\kk) \Acts \qq$ by Lemma~\ref{l.partialexec2}.  
Hence $\phi(\kk)\qq = \phi(P(\kk))\qq$ by \eqref{e.phinonnegative}.
By Theorem~\ref{t.freeandtransitive} the action of $\Crit \Net$ on $\Rec \Net$ is free, 
so we conclude $\phi(\kk) = \phi(P(\kk))$.  This shows that $(I-P)\kk \in \ker \phi$ for all $\kk \in K \cap \N^A$. 
Since $K$ is generated as a group by $K \cap \N^A$, it follows that $(I-P)K \subset \ker \phi$.

To show the reverse inclusion, given $\xx \in (\ker \phi) \cap \N^A$ we have $\phi(\xx)=e$ and hence $\xx \Acts e \qq = e \qq$ for all $\qq \in Q$ by \eqref{e.phinonnegative}.  By Lemma~\ref{l.burningodom}, $\xx = (I-P) \kk $ where $\kk = [\xx.e\qq] \in K$.  It follows that $(\ker \phi) \cap \N^A \subset (I-P)K$.  Since $\Crit \Net$ is finite, $\ker \phi$ is a full rank subgroup of $\Z^A$, so it is generated as a group by $(\ker \phi) \cap \N^A$, which shows that $\ker \phi \subset (I-P)K$.
\end{proof}

Irreducible abelian networks $\Net$ and $\Net'$ on the same graph with the same total alphabet are called \emph{homotopic}, written $\Net \approx \Net'$, if they have the same total kernel $K$ and the same production map $P$. An example of homotopic networks are the sandpile and simple rotor networks on a connected graph $G=(V,E)$. These have total kernel $K = \prod_{v \in V} (d_v \Z)$ where $d_v$ is the degree of vertex $v$: in the sandpile case processing $d_v$ letters $v$ causes $v$ to topple exactly once, while in the rotor case it causes the rotor at $v$ to serve each neighbor $u$ of $v$ exactly once.  In both cases exactly one letter is sent to each neighbor, so the production map is given by $P(\kk)_u = \sum_{(v,u) \in E} \kk_v/d_v$.

By Theorem~\ref{t.generalcriticalgroup} the critical group depends only on the total kernel $K$ and production matrix $P$, so homotopic networks have isomorphic critical groups.

\begin{corollary}
\label{c.homotopy}
If $\Net \approx \Net'$, then $\Crit \Net \simeq \Crit \Net'$.  
\end{corollary}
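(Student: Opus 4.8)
The plan is to read the result off directly from the generators-and-relations description in Theorem~\ref{t.generalcriticalgroup}, which has already done all the substantive work. That theorem identifies $\Crit \Net$ with the quotient $\Z^A/(I-P)K$, and the only data entering the right-hand side are the total alphabet $A$ (which fixes the ambient group $\Z^A$), the total kernel $K \subseteq \Z^A$, and the production map $P : K \to \Z^A$ (together with the inclusion $I : K \hookrightarrow \Z^A$, from which the homomorphism $I-P : K \to \Z^A$ and its image $(I-P)K$ are formed).

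First I would unwind the definition of homotopy. By hypothesis $\Net \approx \Net'$ means that the two networks live on the same graph, share the same total alphabet $A$, have the same total kernel $K$, and have the same production map $P$. Consequently the subgroup $(I-P)K \subseteq \Z^A$ is literally identical for the two networks, and hence so is the quotient group $\Z^A/(I-P)K$.

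Next I would apply Theorem~\ref{t.generalcriticalgroup} to each network separately, obtaining isomorphisms $\Crit \Net \isom \Z^A/(I-P)K$ and $\Crit \Net' \isom \Z^A/(I-P)K$ with a common right-hand side. Composing one isomorphism with the inverse of the other yields $\Crit \Net \isom \Crit \Net'$, as desired.

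There is essentially no obstacle here: the real content was already supplied in proving Theorem~\ref{t.generalcriticalgroup} (in particular the identification $\ker \phi = (I-P)K$, which relied on freeness of the critical-group action from Theorem~\ref{t.freeandtransitive} and on the odometer-in-the-kernel computation of Lemma~\ref{l.burningodom}). The only point that warrants a moment's care is the bookkeeping claim that homotopy preserves \emph{exactly} the triple $(A, K, P)$ on which the quotient depends, and that no finer information about the individual processors---their state spaces, transition maps, or message-passing functions---is required. Since Theorem~\ref{t.generalcriticalgroup} shows the critical group factors through precisely this coarse-grained data, the corollary is immediate.
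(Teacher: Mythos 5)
Your proposal is correct and matches the paper's own argument, which likewise deduces the corollary immediately from Theorem~\ref{t.generalcriticalgroup} by observing that the quotient $\Z^A/(I-P)K$ depends only on the data $(A,K,P)$ preserved by homotopy. No gaps; the extra care you take in noting that no finer processor-level information is needed is exactly the content of the paper's one-line proof.
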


Corollary~\ref{c.homotopy} generalizes the isomorphism $\Crit \Rotor(G,s) \simeq \Crit \Sand(G,s)$ between the rotor and sandpile groups of a graph with sink vertex $s$.

Next we relate the critical group to the cokernel of the Laplacian \eqref{e.laplacian}.

\begin{definition}\label{d.rectangular}
An abelian network $\Net$ is \emph{rectangular} if its total kernel is $K = \prod_{a \in A} (r_a \Z)$, where $r_a$ are the reset numbers \eqref{e.ra}. (That is, $K$ is a rectangular sublattice of $\Z^A$.)
\end{definition}

\begin{corollary}
\label{c.laplaciancokernel}
The natural map $\Z^A \to \Crit \Net$ induces a surjective group homomorphism
	\[ \phi: \Z^A / L\Z^A \twoheadrightarrow \Crit \Net. \]
If $\Net$ is rectangular, then $\phi$ is an isomorphism.
\end{corollary}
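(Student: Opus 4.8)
The plan is to deduce this corollary directly from Theorem~\ref{t.generalcriticalgroup}, which identifies $\Crit \Net$ with $\Z^A/(I-P)K$. Since the natural map $\Z^A \to \Crit \Net$ is the map $\phi$ of that theorem, whose kernel is $(I-P)K$, the only thing to establish is the containment of subgroups $L\Z^A \subseteq (I-P)K$, with equality precisely in the rectangular case.

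First I would unwind the definition \eqref{e.laplacian} of the Laplacian, $L = (I-P)D$, so that $L\Z^A = (I-P)(D\Z^A)$. The second step is to observe that the rectangular sublattice $D\Z^A = \prod_{a \in A}(r_a \Z)$ is generated by the vectors $r_a 1_a$, each of which lies in $K$ by the very definition \eqref{e.ra} of $r_a$ as $\min\{m \geq 1 : m 1_a \in K\}$. Hence $D\Z^A \subseteq K$, and applying the group homomorphism $I-P : K \to \Z^A$ gives $L\Z^A = (I-P)(D\Z^A) \subseteq (I-P)K$ as desired.

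This containment induces the quotient map $\Z^A/L\Z^A \twoheadrightarrow \Z^A/(I-P)K \isom \Crit \Net$; composed with the isomorphism of Theorem~\ref{t.generalcriticalgroup}, this is exactly the reduction of the natural map $\phi$, and it is surjective because $\phi$ is. For the rectangular case, Definition~\ref{d.rectangular} says $K = \prod_{a \in A}(r_a \Z) = D\Z^A$, so $(I-P)K = (I-P)(D\Z^A) = L\Z^A$ and the surjection is an isomorphism.

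There is essentially no obstacle here beyond bookkeeping: the corollary is a formal consequence of the preceding theorem together with the factorization $L = (I-P)D$. The only points requiring care are that $D\Z^A \subseteq K$ (immediate from the minimality in \eqref{e.ra}) and that $I-P$ is a genuine group homomorphism on the lattice $K$ rather than merely a rational matrix, so that applying it to the subgroup $D\Z^A$ is legitimate; both are already supplied by the setup in \textsection\ref{s.KP}.
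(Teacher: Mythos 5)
Your proof is correct and follows exactly the same route as the paper: both deduce the corollary from Theorem~\ref{t.generalcriticalgroup} by noting $D\Z^A \subseteq K$ (with equality when $\Net$ is rectangular, by Definition~\ref{d.rectangular}) and applying $I-P$ to obtain $L\Z^A = (I-P)D\Z^A \subseteq (I-P)K$, with equality in the rectangular case. Your write-up merely makes explicit the bookkeeping that the paper's two-line proof leaves implicit.
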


\begin{proof}
By definition, $D\Z^A \subset K$ with equality if $\Net$ is rectangular.
Since $L = (I-P)D$, we have $L \Z^A \subset (I-P)K$ with equality if $\Net$ is rectangular.
\end{proof}

Note that any unary network (and in particular any toppling network) is rectangular. In \cite{part2} we defined the \emph{sandpilization} $\Sa(\Net)$ as the locally recurrent toppling network with the same Laplacian as $\Net$. To spell this out, let us first describe the underlying graph of $\Sa(\Net)$, which has vertex set $A$ rather than $V$. 

\begin{definition}
\label{d.production}
The \emph{production graph} $\Gamma(\Net)$ is the directed graph with vertex set $A$ and edge set $\{(a,b) \,:\, P_{ba}>0\}$.
\end{definition}

To form the sandpilization, we split each processor $\Proc_v$ into $\# A_v$ unary processors $\ProcR_a$ for $a \in A_v$.  Each $\ProcR_a$ has state space $\{0,1,\ldots,r_a-1\}$ where $r_a$ is the reset number \eqref{e.ra}. When $\ProcR_a$ transitions from state $r_a-1$ to state $0$, it sends $r_a P_{ba}$ letters $b$ to each $\ProcR_b$.  Note that $r_a P_{ba}= P(r_a \basis_a)_b$ is an integer since $r_a \basis_a \in K$.

\begin{definition}
\label{d.sandpilization}
The \emph{sandpilization} of $\Net$ is the unary network 
	\[  \Sa(\Net) = (\ProcR_a)_{a \in A} \]
with underlying graph $\Gamma(\Net)$.
\end{definition}

Since $\Net$ and $\Sa(\Net)$ have the same Laplacian, the following is immediate from Corollary~\ref{c.laplaciancokernel}.

\begin{corollary}
\label{c.sandpilizationgp}
$\Crit \Sa(\Net) \isom \Z^A / L\Z^A$ and 
$\Crit \Sa(\Net) \twoheadrightarrow \Crit \Net$.
\end{corollary}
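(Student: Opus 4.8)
The plan is to apply Corollary~\ref{c.laplaciancokernel} twice — once to $\Sa(\Net)$ and once to $\Net$ — exploiting the fact, built into the construction, that both networks have the same Laplacian $L$.

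First I would invoke Corollary~\ref{c.laplaciancokernel} for the network $\Sa(\Net)$. By Definition~\ref{d.sandpilization}, $\Sa(\Net)$ is a unary network, hence rectangular by the observation following Corollary~\ref{c.laplaciancokernel}, so the corollary tells me that the natural map is an \emph{isomorphism} $\Crit \Sa(\Net) \isom \Z^A / L\Z^A$, which is the first assertion. Before this is legitimate, however, I must check that $\Sa(\Net)$ falls under the standing hypotheses of this section, namely that it is a finite irreducible abelian network that halts on all inputs, so that its critical group is defined and the corollary applies. Finiteness is immediate from the construction, since $A$ is finite and each constituent processor has state space $\{0,1,\ldots,r_a-1\}$. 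Halting follows from Theorem~\ref{t.halting}: because $\Sa(\Net)$ shares the Laplacian $L$ with $\Net$, and $\Net$ halts on all inputs, all principal minors of $L$ are positive, whence $\Sa(\Net)$ halts as well. Irreducibility of $\Sa(\Net)$ is part of its construction in \cite{part2}.

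For the second assertion I would apply Corollary~\ref{c.laplaciancokernel} directly to $\Net$, which furnishes a surjection $\Z^A / L\Z^A \twoheadrightarrow \Crit \Net$, using that the Laplacian of $\Net$ is this same matrix $L$. Composing this surjection with the isomorphism $\Crit \Sa(\Net) \isom \Z^A / L\Z^A$ from the first part yields the desired surjection $\Crit \Sa(\Net) \twoheadrightarrow \Crit \Net$.

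Since the whole argument reduces to two applications of an already-established corollary together with the shared-Laplacian property that is definitional for sandpilization, there is no real obstacle in the algebra itself. The only point meriting explicit attention — and the one I would flag — is the verification that $\Sa(\Net)$ satisfies the running hypotheses so that Corollary~\ref{c.laplaciancokernel} is applicable to it; this is exactly where Theorem~\ref{t.halting} does the work, transferring the halting property across the common Laplacian, with irreducibility supplied by the construction in \cite{part2}.
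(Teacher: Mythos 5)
Your proof is correct and takes essentially the same approach as the paper, which notes that $\Net$ and $\Sa(\Net)$ have the same Laplacian and declares both claims immediate from Corollary~\ref{c.laplaciancokernel} (using that a unary network is rectangular). Your extra verification that $\Sa(\Net)$ satisfies the standing hypotheses --- in particular halting via Theorem~\ref{t.halting} through the shared Laplacian --- is left implicit in the paper's proof but matches what the paper states later in \textsection\ref{s.finding}.
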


\paragraph{\textbf{Example.}}
To see that the map $\Crit \Sa(\Net) \to \Crit \Net$ need not be an isomorphism, consider the following non-rectangular network with vertices $i,j$ where $j$ is a sink. Let $Q_i=\{0,1\}$, $A_i=\{a,b\}$, $A_j=\{c\}$, 
\[ T_i(q,a) = T_i(q,b) =q+1 \modnospace{2} \]
and
\[ T_{(i,j)}(0,a)=c \]
\[ T_{(i,j)}(1,a)=cc \]
\[ T_{(i,j)}(0,b)=\eps \]
\[ T_{(i,j)}(1,b)=c. \]
Figure~\ref{f.nonrectangular} shows the state diagram of $\Proc_i$.

\begin{figure}
\begin{center}
\includegraphics[scale=1]{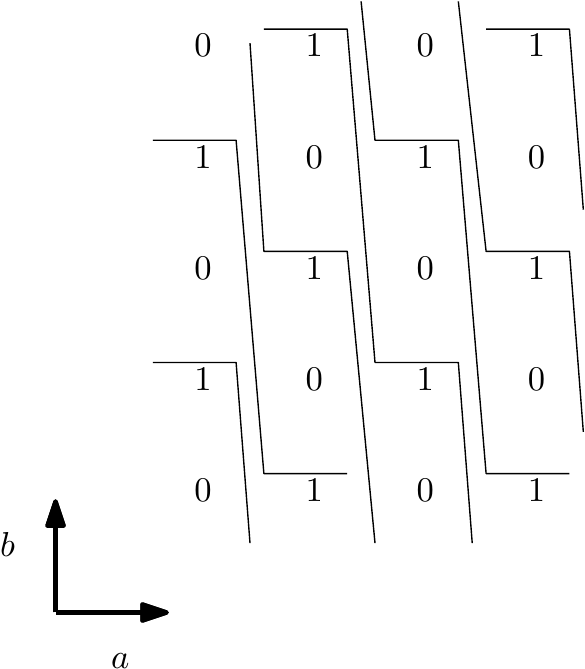}
\end{center}
\caption{State diagram of an abelian processor with two states $0,1$ and two inputs $a,b$. Its kernel is $\{(m,n) \in \Z^2 \,:\, m+n \equiv 0 \pmod 2 \}$, so it is not rectangular. Each line crossed results in output of one letter $c$.}
\label{f.nonrectangular}
\end{figure}

The production matrix and Laplacian of this network, with rows and columns indexed by $a,b,c$ in that order, are: 
\[ P=\left( \begin{array}{ccc}
0 & 0 & 0 \\
0 & 0 & 0 \\
\frac{3}2& \frac{1}2 & 0 \end{array} \right)
\qquad
L=\left( \begin{array}{ccc}
2 & 0 & 0 \\
0 & 2 & 0 \\
-3 & -1 & 1 \end{array} \right)\] 
We have $\Crit \Sa (\Net) = \Z^3 / L\Z^3 = (\Z/2\Z)^2$. On the other hand, $\tau_a = \tau_b$ in $\Crit \Net$ (both send $0 \mapsto 1 \mapsto 0$) so $\Crit \Net=\Z/2\Z$.

Processor $\Proc_i$ in this example has another curious feature: The input words $aa,ab,bb$ to state $0$ all result in the same sequence of states $0,1,0$ yet they produce different outputs ($ccc$, $cc$, $c$ respectively). $\qed$

\subsection{Order of the critical group}
\label{s.order}

Now we turn to the problem of counting recurrent states, or equivalently (by Theorem~\ref{t.freeandtransitive}) finding the order of the critical group.  Let 
	\[ \iota = [K : D \Z^A] \]
be the index of $D \Z^A$ as a subgroup of $K$.  Recalling that $K =\prod_{v \in V} K_v$, we can write $\iota = \prod_{v \in V} \iota_v$ as a product of local indices
	\[ \iota_v = [K_v : D_v \Z^{A_v}] \]
where $D_v \Z^{A_v} := \prod_{a \in A_v} (r_a \Z)$.  Note $\iota = 1$ if and only if $\Net$ is rectangular.

\begin{theorem}\label{t.recurrentcount} 
	\[ \# \Rec \Net = \# \Crit \Net = \frac{\det L}{\iota} . \]
\end{theorem}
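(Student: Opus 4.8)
The plan is to combine the two structural results already in hand---that $\Crit\Net$ acts freely and transitively on $\Rec\Net$ (Theorem~\ref{t.freeandtransitive}) and that $\Crit\Net \isom \Z^A/(I-P)K$ (Theorem~\ref{t.generalcriticalgroup})---to reduce the count to a single lattice index, and then to evaluate that index by a covolume computation. From the two cited theorems we obtain at once that $\#\Rec\Net = \#\Crit\Net = [\Z^A : (I-P)K]$, a finite number because $\Crit\Net$ is a finite group. So the entire content of the theorem reduces to the identity $[\Z^A : (I-P)K] = \det L/\iota$.

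Before computing, I would verify that $(I-P)K$ is genuinely a full-rank sublattice of $\Z^A$. For $\kk \in K$ the production map gives $P(\kk) \in \Z^A$, so $(I-P)\kk = \kk - P(\kk) \in \Z^A$; and $I-P$ is invertible over $\Q$ because the spectral radius of $P$ is strictly less than $1$ (Theorem~\ref{t.halting}), so $(I-P)K$ has full rank. For full-rank lattices I would then invoke the standard facts that an index equals the ratio of covolumes, that indices are multiplicative in towers, and that an invertible linear map $T$ scales covolume by $|\det T|$.

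The computation then proceeds in two steps. Using the tower $D\Z^A \subseteq K \subseteq \Z^A$ together with $[\Z^A : D\Z^A] = \det D = \prod_{a \in A} r_a$ and the definition $[K : D\Z^A] = \iota$, multiplicativity gives $[\Z^A : K] = \det D/\iota$. Next, since $I-P$ is an invertible linear map carrying the full-rank lattice $K$ into $\Z^A$, its covolume scales by $|\det(I-P)|$, whence $[\Z^A : (I-P)K] = |\det(I-P)| \cdot [\Z^A : K]$. Finally I would use $L = (I-P)D$ to write $\det L = \det(I-P)\det D$; because $\det L > 0$ (it is a principal minor of $L$, hence positive by Theorem~\ref{t.halting}(3)) and $\det D = \prod_a r_a > 0$, we get $\det(I-P) = \det L/\det D > 0$, so the absolute value may be dropped. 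Combining the pieces yields $[\Z^A : (I-P)K] = (\det L/\det D)(\det D/\iota) = \det L/\iota$, as desired.

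I expect the only real care to lie in the bookkeeping of indices of lattices that are not nested (so that covolumes, rather than containments, are the invariant one tracks through the map $I-P$) and in the small point of justifying $\det(I-P) > 0$ via positivity of the full principal minor $\det L$. Beyond these routine determinant and index manipulations I do not anticipate any serious obstacle.
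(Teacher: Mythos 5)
Your proposal is correct and follows essentially the same route as the paper: both reduce the statement via Theorems~\ref{t.freeandtransitive} and~\ref{t.generalcriticalgroup} to the lattice index $[\Z^A : (I-P)K]$, and then evaluate it using $L=(I-P)D$, the tower $D\Z^A \subseteq K \subseteq \Z^A$, and positivity of $\det L$ from Theorem~\ref{t.halting}. The only difference is cosmetic bookkeeping: the paper divides through the intermediate lattice $L\Z^A = (I-P)D\Z^A$ and uses invariance of indices under the injective map $I-P$, while you multiply by the covolume scaling factor $|\det(I-P)|$ --- the same facts in a different order.
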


\begin{proof}
The first equality follows from Theorem~\ref{t.freeandtransitive}.  For the second, we have by Theorem~\ref{t.generalcriticalgroup}
	\begin{align*} \# \Crit \Net = [\Z^A : (I-P)K ]
		&= \frac{[\Z^A : L\Z^A]}{[(I-P)K : L\Z^A]} \\
		&= \frac{\left|\det L \right|}{[(I-P)K : (I-P)D\Z^A]}.
		 \end{align*}
By Theorem~\ref{t.halting}, since $\Net$ halts on all inputs, $I-P$ has full rank and $\det L>0$, so the right side equals $\frac{\det L}{[K: D\Z^A]}$.
\end{proof}

Our last goal in this section is to address the converse of Lemma~\ref{l.locallyrecurrent}: When does locally recurrent imply recurrent? We will find several equivalent conditions.

\begin{lemma}
\label{l.chiponcycle}
Let $\mathcal{S}$ be a toppling network with production graph $\Gamma$. For each directed cycle $a_1 \rightarrow \ldots \rightarrow a_m \rightarrow a_1$ of $\Gamma$ we have
	\[ \max \{ \qq(a_1),\ldots,\qq(a_m) \} \geq 1 \]
for all $\qq \in \Rec \mathcal{S}$. 
\end{lemma}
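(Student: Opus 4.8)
The plan is to argue by contradiction using a single complete legal execution that returns $\qq$ to itself: I assume $\qq \in \Rec\mathcal{S}$ has $\qq(a_i)=0$ for every vertex of a directed cycle $a_1 \to \cdots \to a_m \to a_1$ of $\Gamma$, and I exhibit a letter that is emitted but never processed, contradicting the completeness of the execution.

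First I would set up an input that resets \emph{every} processor at least once while fixing $\qq$. Since recurrent implies locally recurrent (Lemma~\ref{l.locallyrecurrent}) and a toppling network is rectangular, so that $K=\prod_{a}(r_a\Z)$, I choose $\zz \geq \one$ with $\tau(\zz)=e$ (Lemma~\ref{l.largee}); then $\zz \Acts \qq = e\qq = \qq$. Setting $\kk=[\zz.\qq]$, Lemma~\ref{l.burningodom} gives $\kk\in K$ and $\zz=(I-P)\kk$. Because the spectral radius of $P$ is $<1$ (Theorem~\ref{t.halting}) and $P\geq 0$, we have $\kk=\sum_{n\geq 0}P^n\zz \geq \zz \geq \one$; as $\kk_a\in r_a\Z$ this forces $\kk_a\geq r_a$, so each processor $\ProcR_a$ performs $n_a=\kk_a/r_a\geq 1$ resets during any complete legal execution $w$ for $\zz.\qq$. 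I fix such a $w$; it ends with empty queues ($\xx^{|w|}=\zero$) and in state $\qq$.

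The core of the argument is a timing bookkeeping. For each cycle vertex $a_i$ let $t_i$ be the step at which $a_i$ performs its last reset (well defined since $n_{a_i}\geq 1$), and pick $j$ minimizing $t_i$. Since $\ProcR_{a_j}$ starts and ends at state $0$ and processes exactly $\kk_{a_j}=n_{a_j}r_{a_j}$ letters, its final reset coincides with the processing of its last letter $a_j$; hence \emph{no} letter $a_j$ is processed after step $t_j$. On the other hand the cycle edge $a_{j-1}\to a_j$ means $P_{a_j,a_{j-1}}>0$, so each reset of $a_{j-1}$ emits at least one letter $a_j$. By minimality the last reset of $a_{j-1}$ occurs at step $t_{j-1}\geq t_j$; for $m\geq 2$ the distinct vertices $a_{j-1},a_j$ reset at distinct steps, so $t_{j-1}>t_j$, while for a self-loop ($m=1$) the emitted letter $a_j$ is simply output of step $t_j$ itself. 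In either case a letter $a_j$ becomes pending no earlier than $t_j$ with no subsequent processing of $a_j$. Tracking the pending count of $a_j$, it is $\geq 0$ just after $t_j$ by legality and then strictly increases, so the final count is $\geq 1$, contradicting $\xx^{|w|}_{a_j}=0$.

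The step I expect to be the main obstacle is precisely this timing match: pinning down that the ``last reset equals last processing'' identity (which comes from the exact odometer count $\kk_{a_j}=n_{a_j}r_{a_j}$, not merely from $\qq(a_j)=0$) lines up against the strictly later emission from $a_{j-1}$, so that the emitted letter cannot be consumed. Once that is established the contradiction is immediate, and the only case distinction — the self-loop $m=1$ — turns out to be the easiest instance rather than a genuine complication.
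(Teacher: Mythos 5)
Your proof is correct, and it rests on the same two pillars as the paper's: pick $\zz$ with $\tau(\zz)=e$ so that recurrence gives $\zz \Acts \qq = \qq$, note that every vertex topples at least once in a complete legal execution of $\zz.\qq$, and then make an extremal choice of cycle vertex with respect to the time of its last toppling. The difference is that you run the argument in the dual direction. The paper argues directly, with no contradiction hypothesis: taking $a_i$ to be the \emph{last} cycle vertex to finish toppling, the chip it sends to $a_{i+1}$ is processed after $a_{i+1}$'s last toppling, so the state of $a_{i+1}$ increments without ever wrapping again, giving $\qq(a_{i+1}) \geq 1$ in one sentence. You instead assume all cycle states are $0$, take the \emph{first} vertex $a_j$ to finish toppling, and show that the chip emitted by the strictly later last toppling of $a_{j-1}$ can never be consumed, contradicting completeness. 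This works, but the contradiction framing is what forces your extra timing lemma (that at a vertex starting in state $0$ whose odometer is a multiple of its threshold, the last reset coincides with the last processed letter); the paper's direct version extracts the conclusion from the very same chip without needing that bookkeeping. A second, smaller divergence: the paper gets ``every vertex topples at least once'' simply by choosing $\zz \geq \rr$ (which Lemma~\ref{l.largee} permits), whereas you derive it from $\kk = (I-P)^{-1}\zz \geq \one$ together with $K = D\Z^A$; your derivation is a nice observation, but the paper's choice of $\zz$ makes it unnecessary.
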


\begin{proof}
Let $\rr = (r_a)_{a \in A}$ be the vector of toppling thresholds. By Lemma~\ref{l.largee} there exists $\zz \geq \rr$ such that $\tau(\zz)=e$. If $\qq \in \Rec \mathcal{S}$ then
	$ \zz \Acts \qq = \qq $
and each vertex topples at least once during the stabilization of $\zz.\qq$.  If $a_i$ is the last vertex on the cycle to finish toppling, then $a_{i+1}$ receives a chip from $a_i$ and does not topple thereafter, so $\qq(a_{i+1}) \geq 1$.
\end{proof}

\begin{theorem}
\label{t.nocycle}
Let $\Net$ be a finite irreducible abelian network that halts on all inputs, and let $\Sa(\Net)$ be its sandpilization.
The following are equivalent.
	\begin{enumerate}
	\item[(1)] Every locally recurrent state of $\Net$ is recurrent.
	\item[(2)] $\det L = \det D$.
	\item[(3)] Every state of $\Sa(\Net)$ is recurrent.
	\item[(4)] The state $\zero$ of $\Sa(\Net)$ is recurrent.
	\item[(5)] The production graph $\Gamma$ has no directed cycles.
	\item[(6)] The producrtion matrix $P$ is nilpotent.
	\end{enumerate}
\end{theorem}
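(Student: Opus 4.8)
The plan is to prove the six-way equivalence by establishing $(1)\Leftrightarrow(2)$ and $(2)\Leftrightarrow(3)$ through two cardinality computations, and then closing the loop $(3)\Rightarrow(4)\Rightarrow(5)\Leftrightarrow(6)\Rightarrow(2)$. The two cardinalities I need are the number of locally recurrent states of $\Net$ and the total number of states of $\Sa(\Net)$; both turn out to equal $\det D$ divided by an appropriate index, so comparing them against $\#\Rec\Net=\det L/\iota$ from Theorem~\ref{t.recurrentcount} converts the set-equalities asserted in $(1)$ and $(3)$ into the numerical identity $(2)$.

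For $(1)\Leftrightarrow(2)$: since $\Net$ is (locally) irreducible, applying Theorem~\ref{t.monoidtogroup} to each $M_v$ acting on $Q_v$ shows that $e_vM_v$ acts freely and transitively on $e_vQ_v$, and the surjection $\xx\mapsto e_vt_v(\xx)$ from $\Z^{A_v}$ onto $e_vM_v$ has kernel exactly $K_v$ (a class element fixes all of $e_vQ_v$ iff it is $e_v$, by freeness), giving $\#(e_vQ_v)=[\Z^{A_v}:K_v]$. Taking the product over $v$, the number of locally recurrent states is $[\Z^A:K]=[\Z^A:D\Z^A]/[K:D\Z^A]=\det D/\iota$. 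Since $\Rec\Net$ is contained in the locally recurrent states by Lemma~\ref{l.locallyrecurrent} and has size $\det L/\iota$, every locally recurrent state is recurrent iff $\det L/\iota=\det D/\iota$, i.e. iff $\det L=\det D$. For $(2)\Leftrightarrow(3)$: the network $\Sa(\Net)$ is unary, hence rectangular with $\iota=1$, and shares the Laplacian $L$, so $\#\Rec\Sa(\Net)=\det L$ by Theorem~\ref{t.recurrentcount}; its total state space $\prod_{a\in A}\{0,\dots,r_a-1\}$ has size $\prod_a r_a=\det D$. Thus every state of $\Sa(\Net)$ is recurrent iff $\det L=\det D$.

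To close the loop, $(3)\Rightarrow(4)$ is immediate. I prove $(4)\Rightarrow(5)$ by contrapositive: the production graph of $\Sa(\Net)$ is $\Gamma$ and $\Sa(\Net)$ is a toppling network, so if $\Gamma$ contained a directed cycle, Lemma~\ref{l.chiponcycle} would force $\max\{\qq(a_1),\dots,\qq(a_m)\}\ge 1$ on that cycle for every recurrent $\qq$, contradicting recurrence of the all-zero state $\zero$; hence $\zero$ recurrent implies $\Gamma$ acyclic. The equivalence $(5)\Leftrightarrow(6)$ is the standard fact that a nonnegative matrix is nilpotent iff its digraph is acyclic: because $P\ge 0$, the entry $(P^k)_{ba}$ is positive exactly when $\Gamma$ has a directed walk of length $k$ from $a$ to $b$, so $P^k=0$ for large $k$ iff walk lengths are bounded iff $\Gamma$ has no cycles. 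Finally $(6)\Rightarrow(2)$: if $P$ is nilpotent then all its eigenvalues vanish, so $\det(I-P)=1$ and $\det L=\det(I-P)\det D=\det D$.

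The main obstacle is the bookkeeping behind the two counting equivalences: verifying $\#(e_vQ_v)=[\Z^{A_v}:K_v]$ cleanly from the free and transitive local action, and confirming that $\Sa(\Net)$ genuinely qualifies as a toppling network whose production graph is $\Gamma$, so that Lemma~\ref{l.chiponcycle} applies verbatim. Once these structural points are pinned down, every remaining implication is either trivial or a routine determinant/digraph argument.
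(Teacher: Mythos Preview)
Your proof is correct and follows essentially the same route as the paper: the same counting argument for $(1)\Leftrightarrow(2)$ via $\#\{\text{locally recurrent}\}=[\Z^A:K]$, the same reduction of $(3)$ to a cardinality comparison in $\Sa(\Net)$, and the same chain $(3)\Rightarrow(4)\Rightarrow(5)\Rightarrow(6)\Rightarrow(2)$ using Lemma~\ref{l.chiponcycle} and the eigenvalue computation $\det(I-P)=1$. The only cosmetic difference is that you prove $(2)\Leftrightarrow(3)$ directly, whereas the paper phrases $(2)\Rightarrow(3)$ by applying the already-established $(1)\Leftrightarrow(2)$ to $\Sa(\Net)$; these amount to the same computation.
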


\begin{proof}
Recall from \textsection\ref{s.KP} the action of $\Z^{A_v}$ on $e_v Q_v$ for each vertex $v$.
Since $\Net$ is irreducible this action is transitive, and its kernel is $K_v$. So the number of locally recurrent states of $\Net$ is \[ \prod_{v \in V} \# (e_v Q_v) = \prod_{v\in V} \# (\Z^{A_v}/K_v) = [\Z^A : K]. \]
By Theorem~\ref{t.recurrentcount}, the number of recurrent states of $\Net$ is \[ \frac{\det L}{[K : D\Z^A]} = \frac{\det L}{\det D} [\Z^A : K]. \]  
Now from Lemma~\ref{l.locallyrecurrent} it follows that (1) $\Leftrightarrow$ (2).

All states of the sandpilization $\Sa(\Net)$ are locally recurrent, and $\Sa(\Net)$ has the same matrices $L$ and $D$ as $\Net$, so (2) $\Rightarrow$ (3).

Trivially (3) $\Rightarrow$ (4).

If $\Gamma$ has a directed cycle, then the state $\zero$ of $\Sa (\Net)$ is not recurrent by Lemma~\ref{l.chiponcycle}, which shows (4) $\Rightarrow$ (5).

If $\Gamma$ has no directed cycles, then each entry of $P^k$ is a weighted sum over directed paths in $\Gamma$ with $k+1$ distinct vertices. Taking $k=\# A$ we obtain $P^k=\zero$, which shows (5) $\Rightarrow$ (6).

If $\lambda_1, \ldots, \lambda_n$ are the eigenvalues of $P$ with multiplicity, then
	\[ \frac{\det L}{\det D} = \det (I-P) = \prod_{i=1}^n (1-\lambda_i). \]
If $P$ is nilpotent then $\lambda_1=\cdots=\lambda_n=0$, so $\det L= \det D$. Hence (6) $\Rightarrow$ (2), completing the proof.
\end{proof}

\section{Time to halt}
\label{s.timetohalt}

In this section we show that the production matrix of $\Net$ determines its running time on any input up to an additive constant (Theorem~\ref{t.timetohalt}). In the special case of rotor networks, additive error bounds of this type appear in the work of Cooper and Spencer \cite{CS06} and Holroyd and Propp \cite{HP10}.  For an upper bound in the case of sandpiles, see \cite[Prop.\ 4.8]{notes}. A related but distinct question is: given $\xx$ and $\qq$, how quickly can one compute the final state $\xx \Acts \qq$ and the odometer $[\xx.\qq]$?  According to Theorem~\ref{t.timetohalt} the abelian network $\Net$ performs this computation (asynchronously) in time approximately $\one^T (I-P)^{-1}\xx$, but in some cases \cite{BS13,FL13} one can design a (sequential) algorithm that is much faster.

As usual we take $\Net$ to be a finite irreducible abelian network that halts on all inputs. Let $K$ be its total kernel and $P : K \to \Z^A$ its production map. Write $I$ for the inclusion $K \hookrightarrow \Z^A$. 

\begin{lemma}\label{l.countmessages}
Suppose $\kk \in K$ satisfies $P\kk  \leq \kk$.  If $\qq \in \Rec \Net$, then 
	\[ (I-P)\kk\Acts\qq=\qq. \] 
\end{lemma}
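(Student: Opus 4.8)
The plan is to reduce the statement to two results already established: the generators-and-relations description of the critical group (Theorem~\ref{t.generalcriticalgroup}) and the fact that kernel vectors act trivially on recurrent states (Lemma~\ref{l.phikernel}). The one genuinely new observation is that the hypothesis $P\kk \leq \kk$ is exactly what places $(I-P)\kk$ in the nonnegative orthant, so that the global action $\Acts$ — which is only defined on inputs in $\N^A$ — can legitimately be applied to it.

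First I would set $\xx := (I-P)\kk = \kk - P\kk$ and verify that $\xx \in \N^A$. Since $\kk \in K \subseteq \Z^A$ and the production map $P$ sends $K$ into $\Z^A$, the vector $\xx$ is integral; and the hypothesis $P\kk \leq \kk$ gives $\xx \geq \zero$ coordinatewise. Hence $\xx \in \N^A$ and the expression $\xx \Acts \qq$ is a bona fide global action on a nonnegative input.

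Next I would show that $\xx$ lies in $\ker \phi$. Because $\kk \in K$, by definition $\xx = (I-P)\kk \in (I-P)K$, and Theorem~\ref{t.generalcriticalgroup} identifies $(I-P)K$ with $\ker \phi$. Therefore $\xx \in (\ker \phi) \cap \N^A$. Applying Lemma~\ref{l.phikernel} to this $\xx$ and to $\qq \in \Rec \Net$ yields $\xx \Acts \qq = \qq$, that is, $(I-P)\kk \Acts \qq = \qq$, as claimed.

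I do not expect a real obstacle here: the content is almost entirely packaging. The only point requiring care is the bookkeeping in the first step — confirming both integrality (so that $(I-P)\kk$ is an element of $\Z^A$ on which $\phi$ and the action make sense) and nonnegativity (so that $\Acts$, rather than some extension to $\Z^A$, is what is being invoked). Once $(I-P)\kk$ is recognized as a nonnegative element of $\ker \phi$, the conclusion is immediate from the two cited results.
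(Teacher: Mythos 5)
Your proof is correct and is essentially identical to the paper's: the paper also sets $\xx = (I-P)\kk \in \N^A$, invokes Theorem~\ref{t.generalcriticalgroup} to place $\xx$ in $\ker\phi$, and concludes via Lemma~\ref{l.phikernel}. Your added care in checking that $P\kk \leq \kk$ guarantees nonnegativity (so that $\Acts$ applies) is a point the paper leaves implicit, but the argument is the same.
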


\begin{proof}
Let $\xx = (I-P)\kk \in \N^A$.
By Theorem~\ref{t.generalcriticalgroup}, if $\kk \in K$ then $\xx \in \ker \phi$.  The conclusion now follows from Lemma~\ref{l.phikernel}. 
\end{proof}

\begin{remark} Recalling that $I-P$ is injective (Theorem~\ref{t.halting}) and that the matrix of $(I-P)^{-1}$ has nonnegative entries (Theorem~\ref{t.greenfunction}), we see that the condition $P\kk \leq \kk$ in the above lemma implies $\kk \geq \zero$.
%
\end{remark}

For $\uu \in \Z^A$ write $\| \uu \|_{\infty} = \max_{a \in A} |\uu_a|$.

\begin{theorem}
\label{t.timetohalt}
\moniker{Time To Halt}
Let $\Net$ be an irreducible finite abelian network that halts on all inputs.  
There is a constant $C$ depending only on $\Net$, such that
for all $\xx \in \N^A$ and all $\qq \in Q$,
	\[  \big\| \, [\xx.\qq] - (I-P)^{-1}\xx \, \big\|_{\infty}  \leq C.\]
\end{theorem}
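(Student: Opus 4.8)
The plan is to bound the odometer on two separate scales: first reduce from an arbitrary starting state $\qq$ to a recurrent one at a bounded additive cost, then control how much the odometer can vary over recurrent states and compare that range against the exact mean supplied by Theorem~\ref{t.greenfunction}.

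For the reduction, fix once and for all a vector $\zz$ with $\tau(\zz)=e$ (Lemma~\ref{l.largee} applied to $\xx=\zero$), so that $\zz \Acts \qq = \tau(\zz)\qq = e\qq \in \Rec\Net$ for every $\qq$. Applying Lemma~\ref{l.odomsplit} to $\xx+\zz$ in the two orders gives
\[ [\xx.\qq] + [\zz.(\xx\Acts\qq)] = [\zz.\qq] + [\xx.(e\qq)], \]
whence $[\xx.\qq] - [\xx.(e\qq)] = [\zz.\qq] - [\zz.(\xx\Acts\qq)]$. Since $\zz$ is fixed and $Q$ is finite, both odometers on the right lie between $\zero$ and $C_0\one$, where $C_0 := \max_{\rr\in Q}\|[\zz.\rr]\|_\infty$; hence $\|[\xx.\qq] - [\xx.(e\qq)]\|_\infty \le C_0$ for all $\xx$ and $\qq$. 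It therefore suffices to prove the estimate for recurrent starting states, paying the additive constant $C_0$.

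The heart of the argument is that, for fixed $\xx$, the odometer $[\xx.\qq]$ varies by only a bounded amount as $\qq$ ranges over $\Rec\Net$. Given $\qq_1,\qq_2\in\Rec\Net$, transitivity of the critical group action (Theorem~\ref{t.freeandtransitive}) together with surjectivity of $\tau$ lets us choose a vector $\vv\in\N^A$, \emph{independent of $\xx$}, with $\vv\Acts\qq_1 = \qq_2$. Expanding $[(\vv+\xx).\qq_1]$ in two orders via Lemma~\ref{l.odomsplit} gives
\[ [\xx.\qq_2] - [\xx.\qq_1] = [\vv.(\xx\Acts\qq_1)] - [\vv.\qq_1], \]
and both right-hand odometers are bounded by $\max_{\rr\in Q}\|[\vv.\rr]\|_\infty$ because $\vv$ is fixed. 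As $Q$ is finite there are finitely many pairs $(\qq_1,\qq_2)$, so taking $C_1$ to be the maximum of these bounds over all pairs yields $\|[\xx.\qq_1]-[\xx.\qq_2]\|_\infty\le C_1$ uniformly in $\xx$.

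Finally I would combine these with the mean computation. By Theorem~\ref{t.greenfunction}, $(I-P)^{-1}\xx = \EE[\xx.\qq']$ for $\qq'$ uniform on $\Rec\Net$; since each value $[\xx.\qq']$ lies within $C_1$ of $[\xx.\qq]$ when $\qq\in\Rec\Net$, the triangle inequality for the expectation gives $\|[\xx.\qq]-(I-P)^{-1}\xx\|_\infty\le C_1$ for recurrent $\qq$. Composing with the reduction step proves the theorem with $C = C_0 + C_1$. The main obstacle is guaranteeing that the variation constant $C_1$ is uniform in $\xx$ — this is precisely what forces an additive rather than a multiplicative error — and the crucial point is that the connecting vectors $\vv$ can be chosen once and for all, independently of the (arbitrarily large) input $\xx$, so that their odometers stay bounded as $\xx$ grows.
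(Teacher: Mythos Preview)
Your proof is correct and takes a genuinely different route from the paper's. Both arguments share the reduction from an arbitrary state $\qq$ to the recurrent state $e\qq$ via the splitting identity (your $C_0$ is exactly the paper's $C_2$). Where they diverge is in handling recurrent states. The paper approximates $\xx$ by a nearby lattice point $\yy=(I-P)\kk\in(I-P)K$ with $\zero\le\yy-\xx\le c\one$, uses Lemma~\ref{l.countmessages} to obtain $\yy\Acts e\qq=e\qq$, and then Lemma~\ref{l.burningodom} to conclude $[\yy.e\qq]=\kk$ \emph{exactly}; the bound on $[\xx.e\qq]-\kk$ follows from the small residual $\yy-\xx$. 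You instead bound the variation of $[\xx.\cdot]$ over $\Rec\Net$ by choosing, for each ordered pair of recurrent states, a fixed connecting vector $\vv$ and applying the splitting identity a second time, then invoke the mean computation of Theorem~\ref{t.greenfunction} to identify $(I-P)^{-1}\xx$ as the average. The paper's approach is more self-contained (it does not rely on Theorem~\ref{t.greenfunction}) and yields a constant tied explicitly to the geometry of the lattice $(I-P)K$; yours is conceptually cleaner and makes the role of the free transitive action more visible.
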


\begin{proof}
Since $(I-P)K$ has full rank in $\Z^A$, there is a constant $c$ such that for any $\xx \in \N^A$ there exists
$\yy = (I-P)\kk \in (I-P)K$ such that $\zero \leq \yy-\xx \leq c\one$. Since $(I-P)^{-1}$ has nonnegative entries, 
	\[ \zero \leq \kk - (I-P)^{-1}\xx \leq (I-P)^{-1} c \one. \]
Therefore it suffices to bound $\| [\xx.\qq]-\kk \|_{\infty}$.  

To do this, fix $\zz \in \N^A$ with $\tau(\zz) = e$. We will show
	\begin{equation} \label{e.C1} \|[\xx.e\qq] - \kk\|_\infty \leq C_1 \end{equation}
and
	\begin{equation} \label{e.C2} \|[\xx.e\qq] - [\xx.\qq] \|_\infty \leq C_2 \end{equation}
where $C_1 =  \max_{\qq' \in Q} \| [c\one.\qq'] \|_{\infty}$ and $C_2 = \max_{\qq' \in Q} \| [\zz.\qq'] \|_{\infty}$. 

Since $\yy \geq \zero$, we have $P\kk \leq \kk$, so $\yy \Acts e\qq = e\qq$ by Lemma~\ref{l.countmessages}.  Hence by Lemmas~\ref{l.burningodom} and~\ref{l.odomsplit}
	\[ \kk = [\yy.e\qq] = [\xx.e\qq] + [(\yy-\xx).(\xx \Acts e\qq)] \]
which shows \eqref{e.C1}. Two more applications of Lemma~\ref{l.odomsplit} give
	\[ [\zz.\qq] + [\xx.e\qq] = [(\xx+\zz).\qq] = [\xx.\qq] + [\zz.(\xx \Acts \qq)] \]
which shows \eqref{e.C2}.
\end{proof}

By Theorem~\ref{t.timetohalt} and the triangle inequality, \[ \| [\xx.\qq] - [\xx.\rr] \|_{\infty} \leq 2C \] for all $\xx \in \N^A$ and all $\qq,\rr \in Q$. We will use this bound in the next section.

\section{Burning test}
\label{s.burning}

Dhar's burning test \cite{Dha90} is an efficient algorithm for determining whether a state of $\Sand(G,s)$ recurrent. In its simplest form it applies to Eulerian directed graphs $G$ (strongly connected, $\operatorname{indegree}(v)=\operatorname{outdegree}(v)$ for $v\in V$); see \cite[\textsection 4]{notes}.  Speer \cite{Spe93} treated general directed graphs.  A variant of Speer's algorithm can be found in \cite{PPW11}.  The goal of this section is to generalize these algorithms to the setting of an arbitrary finite irreducible abelian network $\Net$ that halts on all inputs.  

\begin{definition}
A \emph{burning element} for $\Net$ is a vector $\beta \in \N^A$ such that for $\qq \in Q$ 
	\[ \qq \in \Rec \Net  \quad \Leftrightarrow \quad  \beta \Acts \qq = \qq. \] 
\end{definition}

To see that burning elements exist, we can use Lemma~\ref{l.recurrent}(5): $\qq \in \Rec \Net$ if and only if $\qq=e\qq$, where $e$ is the minimal idempotent of the transition monoid $M$. Recalling the map $\tau$ of \eqref{e.taudef}, any $\beta \in \N^A$ such that $\tau(\beta)=e$ is a burning element. However, such elements are typically large. The power of the burning test derives from the fact that one can often identify a small burning element $\beta$ to reduce the running time $[\beta.\qq]$.

In the classical setting of $\Net = \Sand(G,s)$, where $G$ is a directed graph with globally accessible sink $s$,
there is a pointwise minimal burning element $\beta$.
In the case that $G$ is Eulerian, 
the usual formula for the burning element is $\beta_v = d_{vs}$, the number of edges into $v$ from $s$. The special role of the sink is undesirable for us (since in general, an abelian network need not have a sink in order to halt on all inputs).  To remove it, notice that \[ \beta = L\one \] where $L$ is the Laplacian of \eqref{e.laplacian}.
 (To make the connection to the graph Laplacian $\Delta_G = D_G - A_G$, where $D_G$ is the diagonal matrix of outdegrees, and $A_G$ is the adjacency matrix of $G$, we have $L = \Delta_{G'}$ where $G'$ is the graph obtained from~$G$ by removing all outgoing edges from $s$.  Since $G$ is Eulerian we have $\Delta_G \one = \zero$, so $L\one = - \Delta_G \one_s = \beta$.)

When $G$ is not Eulerian, $L\one$ may have negative entries.  Speer \cite{Spe93} observed that there is a pointwise smallest vector $\yy \geq \one$ such that $L\yy \geq \zero$, and showed that $\beta = L\yy$ is a burning element. Our burning test for abelian networks, Theorem~\ref{t.burning}, reduces to Speer's in the case $\Net = \Sand(G,s)$.

\subsection{Large inputs}
\label{s.large}

To lay the ground for the burning test, we show in this section that states obtained from sufficiently large inputs to $\Net$ must be recurrent. We consider two interpretations of ``large'': either the number of letters in the input $\xx$ is large (Lemma~\ref{l.largeinput}), or the odometer $[\xx.\qq]$ is large (Lemmas~\ref{l.largeodoms} and~\ref{l.largeodom}). 

\begin{lemma}
\label{l.largeinput}
Let $\zz \in \N^A$ be such that $\tau(\zz)=e$. If $\xx \geq \zz$ then $\xx \Acts \qq$ is recurrent for all $\qq \in Q$.
\end{lemma}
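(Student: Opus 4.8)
The plan is to recall the characterization of recurrent states via the minimal idempotent: by Lemma~\ref{l.recurrent}(4), a state is recurrent if and only if it lies in $eQ$. So it suffices to show that $\xx \Acts \qq \in eQ$ whenever $\xx \geq \zz$ and $\tau(\zz) = e$. First I would write $\xx = \zz + \yy$ for some $\yy \in \N^A$; this is legitimate precisely because $\xx \geq \zz$ coordinatewise, and is the only place the hypothesis $\xx \geq \zz$ is used.

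The key computation then uses the fact that $\Acts$ is realized through the homomorphism $\tau$ of \eqref{e.taudef}, namely $\xx \Acts \qq = \tau(\xx)\qq$ (established just after \eqref{e.taudef} via Lemma~\ref{l.odomsplit}). Since $\tau$ is a monoid homomorphism, I would compute
	\[ \xx \Acts \qq = \tau(\xx)\qq = \tau(\zz + \yy)\qq = \tau(\yy)\tau(\zz)\qq = \tau(\yy)\, e\, \qq. \]
Here the crucial point is that $M$ is commutative, so $\tau(\yy)\tau(\zz) = \tau(\zz)\tau(\yy)$ and I may freely write the factor $e = \tau(\zz)$ next to $\qq$. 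Thus $\xx \Acts \qq = \tau(\yy) e \qq$, and since $e$ is idempotent and central in the commutative monoid $M$, we have $\tau(\yy) e = e \tau(\yy) e = e(\tau(\yy)e)$, exhibiting $\xx \Acts \qq = e\bigl(\tau(\yy)\,e\,\qq\bigr) \in eQ$.

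Therefore $\xx \Acts \qq$ lies in $eQ = \Rec \Net$, which is exactly the claim. There is no real obstacle here: the statement is essentially an immediate consequence of the multiplicative structure of $\tau$ together with the defining property $e \in eM$ (equivalently, that $e$ absorbs multiplication to keep one inside $eQ$). The only thing to be careful about is invoking commutativity of $M$ when moving the idempotent, and confirming that $\xx \geq \zz$ is used solely to guarantee $\yy = \xx - \zz$ is a bona fide element of $\N^A$ so that $\tau(\yy)$ makes sense.
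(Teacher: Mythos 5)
Your proof is correct and follows essentially the same route as the paper's: both decompose $\xx = \yy + \zz$, use the fact that the global action factors through the commutative monoid $M$ via the homomorphism $\tau$, and conclude via Lemma~\ref{l.recurrent}(4) that the resulting state lies in $eQ$. The only cosmetic difference is that the paper orders the action so that $e = \tau(\zz)$ is applied last (making membership in $eQ$ immediate), while you apply it first and then invoke commutativity of $M$ to move it outside.
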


\begin{proof}
Writing $\xx = \yy+\zz$ for $\yy \in \N^A$, we have
	\[ \xx \Acts \qq = \zz \Acts (\yy \Acts \qq) = e m \qq. \]
where $m = \tau(\yy)$. By Lemma~\ref{l.recurrent}(4) the state $em\qq$ is recurrent.  
\end{proof}

Given $\qq,\rr \in Q$ we say that $\qq$ is \emph{locally accessible} from $\rr$ if $\qq = \mm \rr$ for some $\mm \in \prod_{v \in V} M_v$. In particular, if there is an execution (say $w$) from $\xx.\rr$ to $\yy.\qq$, then $\qq$ is locally accessible from $\rr$ (namely $\qq = t(|w|)\rr$).

\begin{lemma}
\label{l.largeodoms}
If there exists $\xx \in \N^A$ such that $\xx \Acts \rr = \rr$ and $[\xx.\qq] \geq \one$ for all states $\qq$ locally accessible from $\rr$, then $\rr$ is recurrent.
\end{lemma}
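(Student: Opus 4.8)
The plan is to extract the algebraic data from the two hypotheses, reduce recurrence of $\rr$ to a support statement about its stabilizer, and finish with Lemma~\ref{l.largeinput}. A useful preliminary observation is that, since $\rr$ will turn out to be locally recurrent, the set of states locally accessible from $\rr$ is exactly the set of locally recurrent states: $(\prod_v M_v)\rr=\prod_v M_v\rr_v=\prod_v e_vQ_v$, where the last equality uses that $M_v$ acts irreducibly and $\rr_v\in e_vQ_v$. Thus the hypothesis ``$[\xx.\qq]\geq\one$ for all $\qq$ locally accessible from $\rr$'' is simply ``$[\xx.\qq]\geq\one$ for all locally recurrent $\qq$.'' First I would show $\rr$ is locally recurrent by applying the hypotheses to multiples of $\xx$: from $\xx\Acts\rr=\rr$ one gets $n\xx\Acts\rr=\rr$ and, since the intermediate states are all $\rr$, $[n\xx.\rr]=n\kk$ with $\kk:=[\xx.\rr]\geq\one$. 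By \eqref{e.tglobal} this gives $t(n\kk)\rr=\rr$, and choosing $n$ large enough that each $t_v(n\kk_v)$ is an idempotent lying in $e_vM_v$ forces $t_v(n\kk_v)=e_v$, whence $e_v\rr_v=\rr_v$; this is exactly the argument of Lemma~\ref{l.locallyrecurrent}, and it is here that $\kk\geq\one$ is used. With $\rr$ locally recurrent and $\xx\Acts\rr=\rr$, Lemma~\ref{l.burningodom} yields $\kk\in K$, $\xx=(I-P)\kk$, and $\kk\geq\one$.

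Next I would reduce to the submonoid $S_\rr:=\{\ww\in\N^A:\ww\Acts\rr=\rr\}$ (it is closed under addition by Lemma~\ref{l.odomsplit}). If $S_\rr$ has full support, then choosing $\ww^{(a)}\in S_\rr$ with $\ww^{(a)}_a\geq1$ for each $a$ and summing gives $\ww^\ast:=\sum_a\ww^{(a)}\in S_\rr$ with $\ww^\ast\geq\one$; for $N$ large, $N\ww^\ast\geq\zz$ for some $\zz$ with $\tau(\zz)=e$, so $\rr=N\ww^\ast\Acts\rr$ is recurrent by Lemma~\ref{l.largeinput}. Hence it suffices to prove $\supp S_\rr=A$. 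On the orbit $O:=M\rr$, the element $\tau(\xx)$ fixes every $m\rr$, so $\tau(\xx)$ acts as the identity on $O$; since the generators commute and $\prod_{a}\tau_a^{\xx_a}$ restricts to $\mathrm{id}_O$, each $\tau_a$ with $a\in\supp\xx$ is invertible on $O$, which gives $\supp\xx\subseteq\supp S_\rr$ (equivalently, $\rr$ is accessible from $\tau_a\rr$ for these $a$).

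The main obstacle is to reach the letters $a\notin\supp\xx$, which are processed only through the cascade; note $\kk_a\geq1$ for \emph{every} $a$ because $\kk\geq\one$. For this I would fix a complete legal execution $v$ of $\xx.\rr$ with $|v|=\kk$ and $\pi_v(\xx.\rr)=\zero.\rr$ (Lemma~\ref{l.odometertrick}(i)), and factor $v$ at an occurrence of $a$, which exists since $\kk_a\geq1$. Analyzing the intermediate total state right before this $a$ is processed, say $\cc.\pp$, one finds a legal execution from $\cc.\pp$ to $\zero.\rr$ beginning with $a$, so that $\rr=\cc\Acts\pp$ with $\cc_a\geq1$ and with $\pp$ locally recurrent (hence the hypothesis $[\xx.\pp]\geq\one$ is available at $\pp$). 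The delicate point, and where I expect the real work to lie, is that $\pp$ is only \emph{locally} accessible from $\rr$, whereas membership in $S_\rr$ asks the \emph{global} action to return to $\rr$; bridging this local-to-global gap — promoting ``$a$ is burned during the cascade'' to ``$\rr$ is accessible from $\tau_a\rr$'' inside the orbit $O$ — is the crux. Once $\supp S_\rr=A$ is established, the reduction of the second paragraph finishes the proof; alternatively one may conclude via Lemma~\ref{l.recurrent}(2), since accessibility of $\rr$ from $\tau_a\rr$ for every generator $a$ propagates by induction on words to accessibility from $m\rr$ for every $m\in M$.
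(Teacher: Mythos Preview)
Your reduction to showing $\supp S_\rr=A$ is valid, and the preliminary work (local recurrence of $\rr$, the identity $\xx=(I-P)\kk$ with $\kk\geq\one$) is correct, though some of it is not needed. The genuine gap is exactly the one you flag: for $a\notin\supp\xx$ you produce only $\rr=\cc\Acts\pp$ with $\cc_a\geq1$, where $\pp$ is \emph{locally} accessible from $\rr$. This does not yield an element of $S_\rr$, because $\pp$ need not lie in the global orbit $M\rr$; there is no way given in your proposal to convert ``$\rr$ is reached from $\pp$ by a large input'' into ``$\rr$ is reached from $\rr$ by an input that uses $a$.'' You say this is ``the crux'' and ``where the real work lies,'' but you do not supply that work, so the argument stops short of a proof.

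The paper's proof avoids the per-letter reduction altogether and thereby never confronts this local-to-global obstacle. Fix $\zz$ with $\tau(\zz)=e$, list its letters as $a_1,\dots,a_n$, and build a single legal execution from $(n\xx).\rr$ in $n$ stages: at stage $i$, feed a fresh copy of $\xx$ to the current state $\qq^{i-1}$ (which is locally accessible from $\rr$, so the hypothesis gives $[\xx.\qq^{i-1}]_{a_i}\geq1$), and execute only until one letter $a_i$ appears in the queue, then stop and set that letter aside. After $n$ stages the saved letters form $\yy\geq\zz$, and by Lemma~\ref{l.partialexec} one has $\rr=(n\xx)\Acts\rr=\yy\Acts\qq^n$, which is recurrent by Lemma~\ref{l.largeinput}. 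The key idea you are missing is that one should not try to return to $\rr$ after each letter; instead, accumulate all the needed letters through successive \emph{partial} executions, using only that every intermediate $\qq^i$ remains locally accessible from $\rr$.
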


\begin{proof}
Let $\zz \in \N^A$ be such that $\tau(\zz)=e$. 
We will find a $\yy \geq \zz$ and a state $\qq$ such that $\rr = \yy \Acts \qq$.

Let $n = \sum_{a \in A} \zz_a$.  Fix any sequence $a_1,\ldots,a_n$ such that $\zz = \sum_{i=1}^n \one_{a_i}$.  Let $\qq^0=\rr$ and inductively define states $\qq^1,\ldots,\qq^n$, all locally accessible from $\rr$, as follows.

For each $i=1,\ldots,n$, since $\qq^{i-1}$ is locally accessible from $\rr$ we have $[\xx.\qq^{i-1}]_{a_i} \geq 1$, so there is a legal execution $w_i$ from $\xx.\qq^{i-1}$ to some state $\yy^i.\qq^i$ satisfying $\yy^i_{a_i} \geq 1$.  The concatenation $w_1\cdots w_n$ is a legal execution from $(n\xx).\rr$ to $\yy.\qq^n$ with $\yy = \yy^1+\cdots+\yy^n \geq \zz$.  Hence by Lemma~\ref{l.partialexec},
	\[ \rr = (n\xx) \Acts \rr = \yy \Acts \qq^n \]
and the right side is recurrent by Lemma~\ref{l.largeinput}.
\end{proof}

For $\uu \in \N^A$ and $\qq \in Q$, define	
	\[ R_\uu(\qq) = \Set{\xx \Acts \qq}{\xx \in \N^A,\, [\xx.\qq] \geq \uu}. \]
and write $R_\uu$ for the set of states $\qq \in Q$ such that $\qq \in R_\uu(\qq$).

We have defined recurrent states in Lemma~\ref{l.recurrent} by a list of equivalent monoid-theoretic properties.  Now we can add to this list a characterization that is specific to abelian networks.  
According to the next lemma, a state $\qq$ is recurrent if and only if it there exist inputs with arbitrarily large odometers that fix $\qq$.

\begin{lemma}
\label{l.largeodom}
	\[  \Rec \Net = \bigcap_{\uu \in \N^A} R_\uu. \]
\end{lemma}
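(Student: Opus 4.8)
The plan is to prove the set equality $\Rec \Net = \bigcap_{\uu \in \N^A} R_\uu$ by establishing both inclusions separately, using the characterization of recurrence already developed.

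\medskip

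For the inclusion $\Rec \Net \subseteq \bigcap_\uu R_\uu$, I would fix a recurrent state $\qq$ and an arbitrary $\uu \in \N^A$, and show $\qq \in R_\uu$, i.e.\ that there exists $\xx$ with $[\xx.\qq] \geq \uu$ and $\xx \Acts \qq = \qq$. The natural source of such an $\xx$ is a nonnegative element of the kernel of $\phi$: by Lemma~\ref{l.phikernel}, any $\xx \in (\ker\phi)\cap\N^A$ fixes $\qq$. To also force the odometer to be large, I would use Lemma~\ref{l.largee} to find $\zz \geq \uu$ with $\tau(\zz) = e$, so that $\zz$ acts as the identity on recurrent states and has odometer at least $\uu$ (since $[\zz.\qq] \geq \zz \geq \uu$ by the least action principle, as every letter prescribed by $\zz$ can be processed). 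More directly, I would take any burning-type element: since $\qq = e\qq$, choosing $\xx \geq \uu$ with $\tau(\xx) = e$ gives both $\xx \Acts \qq = e\qq = \qq$ and $[\xx.\qq] \geq \xx \geq \uu$. This shows $\qq \in R_\uu$ for every $\uu$.

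\medskip

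For the reverse inclusion $\bigcap_\uu R_\uu \subseteq \Rec \Net$, I would take $\rr \in \bigcap_\uu R_\uu$ and deduce that $\rr$ is recurrent by invoking Lemma~\ref{l.largeodoms}. The key is to produce a single $\xx$ satisfying the hypotheses of that lemma: $\xx \Acts \rr = \rr$ and $[\xx.\qq] \geq \one$ for all $\qq$ locally accessible from $\rr$. Membership $\rr \in R_\uu(\rr)$ gives, for each $\uu$, some $\xx^{(\uu)}$ with $[\xx^{(\uu)}.\rr] \geq \uu$ and $\xx^{(\uu)} \Acts \rr = \rr$. The subtlety is that Lemma~\ref{l.largeodoms} requires the odometer to be large simultaneously for \emph{all} locally accessible states $\qq$, not just for $\rr$ itself. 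I expect this to be the main obstacle. To overcome it I would use the additive error bound following Theorem~\ref{t.timetohalt}, namely $\|[\xx.\qq] - [\xx.\rr]\|_\infty \leq 2C$ for all $\qq, \rr \in Q$: choosing $\uu = (2C+1)\one$ and the corresponding $\xx = \xx^{(\uu)}$, the inequality $[\xx.\rr] \geq (2C+1)\one$ forces $[\xx.\qq] \geq [\xx.\rr] - 2C\one \geq \one$ for every $\qq$, in particular for every locally accessible $\qq$. Since also $\xx \Acts \rr = \rr$, Lemma~\ref{l.largeodoms} applies and $\rr$ is recurrent.

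\medskip

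Thus the whole argument reduces to: (1) a straightforward construction of large odometer-fixing inputs for recurrent states via Lemma~\ref{l.largee}, and (2) converting the pointwise information $\rr \in R_\uu(\rr)$ into the uniform-over-accessible-states hypothesis of Lemma~\ref{l.largeodoms}, using the uniform additive bound on odometers across initial states. The crux is step (2), where the constant $C$ from Theorem~\ref{t.timetohalt} lets a single large-odometer witness control all locally accessible states at once.
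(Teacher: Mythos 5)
Your proposal is correct and follows essentially the same route as the paper: the hard direction is identical (take $\uu = (2C+1)\one$, use the bound $\|[\xx.\qq]-[\xx.\rr]\|_\infty \leq 2C$ following Theorem~\ref{t.timetohalt} to verify the hypothesis of Lemma~\ref{l.largeodoms}), and your easy direction, using Lemma~\ref{l.largee} to get $\zz \geq \uu$ with $\tau(\zz)=e$, is only a cosmetic variant of the paper's choice of a point of $(\ker\phi) \cap (\N^A + \uu)$ together with Lemma~\ref{l.phikernel}.
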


\begin{proof}
Suppose that $\rr \in \bigcap R_\uu$.  Then for any $\uu \in \N^A$ there is an input $\xx \in \N^A$ with $\rr = \xx \Acts \rr$ and $[\xx.\rr] \geq \uu$.  Take $\uu = (2C+1)\one$. By Theorem~\ref{t.timetohalt} it follows that $[\xx.\qq] \geq \one$ for all $\qq \in Q$.  Now from Lemma~\ref{l.largeodoms} it follows that $\rr$ is recurrent.

It remains to show that $\Rec \Net \subset R_\uu$ for all $\uu \in \N^A$.
Since $\Crit \Net$ is a finite group, the kernel of $\phi : \Z^A \to \Crit \Net$ has nonempty intersection with $\N^A + \uu$.  Let $\xx$ be a point in this intersection.  Then for any $\qq \in \Rec \Net$ we have  $\xx \Acts \qq = \qq$ by Lemma~\ref{l.phikernel}.  Since $[\xx.\qq] \geq \xx \geq \uu$, it follows that $\qq \in R_\uu$.
\end{proof}

Finally, we show that if $\kk \geq \one$ then the converse to Lemma~\ref{l.countmessages} holds. 

\begin{theorem}\label{t.burning}
\moniker{Burning Test}
Let $\kk\in K$ be such that $\kk \geq \one$ and $P\kk \leq \kk$. Then $\qq\in Q$ is recurrent if and only if $(I-P)\kk \Acts \qq=\qq$. 
\end{theorem}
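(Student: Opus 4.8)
The plan is to prove the two implications separately; the forward direction is immediate from a previous lemma, and the reverse direction carries the substance.

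For the forward direction (recurrent $\Rightarrow (I-P)\kk \Acts \qq = \qq$), observe that the hypotheses already include $\kk \in K$ and $P\kk \leq \kk$, which are exactly the hypotheses of Lemma~\ref{l.countmessages}. So this implication is precisely the conclusion of that lemma, and the extra assumption $\kk \geq \one$ is not needed here.

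The reverse direction is where $\kk \geq \one$ is essential. Write $\xx = (I-P)\kk$; since $P\kk \leq \kk$ we have $\xx = \kk - P\kk \geq \zero$, so $\xx \in \N^A$ and $\xx \Acts \qq$ is defined. Assume $\xx \Acts \qq = \qq$; the goal is $\qq \in \Rec \Net$. I would conclude through Lemma~\ref{l.largeodom}, which identifies $\Rec \Net$ with $\bigcap_{\uu} R_\uu$: it therefore suffices to produce, for every $\uu \in \N^A$, an input that fixes $\qq$ and has odometer dominating $\uu$. The key step is to compute $[\xx.\qq]$ exactly. From $\xx \Acts \qq = \qq$ and iterated application of Lemma~\ref{l.odomsplit} we get $n\xx \Acts \qq = \qq$ and $[n\xx.\qq] = n[\xx.\qq]$ for all $n \geq 1$. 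Applying the additive error bound of Theorem~\ref{t.timetohalt} to the input $n\xx$ yields $\|[n\xx.\qq] - (I-P)^{-1}(n\xx)\|_{\infty} \leq C$; since $(I-P)^{-1}\xx = \kk$, this reads $n\,\|[\xx.\qq] - \kk\|_{\infty} \leq C$. As this holds for every $n$, we conclude $[\xx.\qq] = \kk$.

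This last computation is the crux and the main obstacle: it exploits the fact that the error in Theorem~\ref{t.timetohalt} is a fixed additive constant while the odometer scales linearly along multiples of $\xx$, so dividing by $n$ and letting $n$ grow pins the odometer exactly to $\kk$ rather than merely approximately. With $[\xx.\qq] = \kk \geq \one$ in hand the rest is routine: given any $\uu \in \N^A$, choose $n$ so large that $n\kk \geq \uu$ (possible because every coordinate of $\kk$ is at least $1$); then $n\xx$ satisfies $n\xx \Acts \qq = \qq$ and $[n\xx.\qq] = n\kk \geq \uu$, witnessing $\qq \in R_\uu$. Since $\uu$ was arbitrary, $\qq \in \bigcap_{\uu} R_\uu = \Rec \Net$ by Lemma~\ref{l.largeodom}, completing the proof.
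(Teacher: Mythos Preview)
Your proof is correct and follows the same overall arc as the paper's: both handle the forward direction via Lemma~\ref{l.countmessages}, and for the converse both establish $[\xx.\qq]=\kk$, then iterate to $[n\xx.\qq]=n\kk$ and finish with Lemma~\ref{l.largeodom}. The one substantive difference is in how the identity $[\xx.\qq]=\kk$ is obtained. The paper invokes Lemma~\ref{l.burningodom}: from $\xx\Acts\qq=\qq$ that lemma gives $\xx=(I-P)\uu$ with $\uu=[\xx.\qq]$, and injectivity of $I-P$ (Theorem~\ref{t.halting}) then forces $\uu=\kk$. You instead combine the scaling $[n\xx.\qq]=n[\xx.\qq]$ (from Lemma~\ref{l.odomsplit} and $\xx\Acts\qq=\qq$) with the uniform additive bound of Theorem~\ref{t.timetohalt}, and let $n\to\infty$. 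Your argument is a little less direct, since it imports the full strength of Theorem~\ref{t.timetohalt}; on the other hand it applies verbatim to an arbitrary $\qq\in Q$, whereas Lemma~\ref{l.burningodom} is stated only for locally recurrent $\qq$, a hypothesis the paper's proof invokes without first checking.
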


\begin{proof}
If $\qq$ is recurrent, then $(I-P)\kk \Acts \qq = \qq$ by Lemma~\ref{l.countmessages}.

For the converse, let $\xx = (I-P)\kk$ and $\uu = [\xx.\qq]$.  Suppose that $\xx \Acts \qq = \qq$.  Then $\xx = (I-P)\uu$ by Lemma~\ref{l.burningodom}.  Since $I-P$ is injective, we obtain $\uu=\kk$.  Now for any $n \in \N$ we have $n\xx \Acts \qq = \qq$ and $[n\xx.\qq] = n\kk$.  Hence
	\[ \qq \in \bigcap_{n \geq 1} R_{n\kk}. \]
Since $\kk \geq \one$ the	right side equals $\bigcap_{\uu \in \N^A} R_\uu$, which equals $\Rec \Net$ by Lemma~\ref{l.largeodom}.
\end{proof}

\begin{remark}
The preceding theorem can be improved slightly by weakening $\kk \geq \one$ to $\kk \geq \one_C$, where $C$ is the set of all $a \in A$ that lie on a directed cycle of the production graph $\Gamma$ 
(Definition~\ref{d.production}).  
The reason is that if $a \in A$ does not lie on a directed cycle, then $\{a\}$ is a strong component of $\Gamma$, and all locally recurrent states of the strong component $\Net^a$ are recurrent.  Writing $a^\omega$ for the minimal idempotent of the transition monoid of $\Net^a$, one checks that $e = \prod_{a \in C} a^\omega$ and hence that $\Rec \Net = \bigcap_{a \in C} \Rec \Net^a$. See \cite{C+13} where this improvement is carried out in detail for sandpile networks.
\end{remark}

\subsection{Finding a burning element}
\label{s.finding}

In this section we show that there is always a burning element $\beta$ satisfying $\zero \leq \beta \leq \rr$, where $\rr_a$ is the smallest positive integer such that $\rr_a \one_a \in K$.

\begin{definition}
A \emph{burning odometer} is a vector $\kk$ satisfying
	\begin{equation} \label{e.burningconstraints} (I-P)\kk \ge \zero, \qquad  \kk\ge \one, \qquad  \kk\in K.  \end{equation}
The corresponding \emph{burning element} is $\beta = (I-P)\kk$.
\end{definition}

According to Theorem~\ref{t.burning}, to check whether $\qq$ is recurrent it suffices to find a burning odometer $\kk$ and then compute $\beta \Acts \qq$, where $\beta = (I-P)\kk$. The proof also shows that if $\qq$ is recurrent, then the local run time of this computation is $[\beta.\qq] = \kk$
(that is, for each $a \in A$ it requires processing $\kk_a$ letters $a$).  
Therefore we are interested in finding a burning odometer $\kk$ as small as possible.

Recall that $K \subset D\Z^A$, with equality if $\Net$ is rectangular.
For a sandpile network $\Net = \Sand(G,s)$, a burning odometer is $\kk = D\yy$ where $\yy$ is the ``burning script'' of \cite{Spe93}. More generally, if $\Net$ is rectangular, then writing $\kk = D\yy$, \eqref{e.burningconstraints} is equivalent to
\begin{equation} \label{e.burningprogram} L\yy \ge \zero, \qquad \yy\ge \one, \qquad \yy\in \Z^A \end{equation}
where $L= (I-P)D$ is the Laplacian of $\Net$.  Setting $\xx = \yy-\one$ we find that \eqref{e.burningconstraints} is equivalent to
	\begin{equation} \label{e.burningminusone} L\xx \geq -L\one, \qquad \xx \geq \zero, \qquad \xx \in \Z^A. \end{equation}
Minimizing $\one^T \xx$ subject to these constraints is an integer program of the class solved by toppling networks \cite[Remark~4.9]{part1}. Specifically, consider the sandpilization $\Sa(\Net)$, enlarged to allow negative chip counts. In this network, vertex $a$ has toppling threshold $\rr_a$, where $\rr$ is the vector of diagonal entries of the diagonal matrix $D$. 
By Theorem~\ref{t.halting}, if $\Net$ halts on all inputs then $\Sa(\Net)$ halts on all inputs (since $\Sa(\Net)$ has the same Laplacian as $\Net$).  For a chip configuration $\qq \in \Z^A$, write $\qq^\circ$ for the stabilization of $\qq$ in $\Sa(\Net)$.

\begin{corollary} 
\label{t.burningrect}
\moniker{$\Sa(\Net)$ computes a minimal burning element for $\Net$}
Let $\Net$ be an irreducible rectangular network that halts on all inputs. Then $\Net$ has a pointwise smallest burning odometer $\kk = D\yy$.  The corresponding burning element is given by 
	 \[ \beta = L\yy = \rr -\one - (\rr - \one - L\one)^\circ. \]
Moreover, $\zero \leq \beta \leq \rr$.
\end{corollary}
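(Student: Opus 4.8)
The plan is to transport the integer program \eqref{e.burningminusone} into the toppling dynamics of $\Sa(\Net)$, recognize its minimal solution as a stabilization odometer, and then bound that odometer by a monotonicity argument. First I would record the reduction already set up in the text: since $\Net$ is rectangular we have $K = D\Z^A$, so writing $\kk = D\yy$ is legitimate and sets up a bijection between burning odometers $\kk$ and integer vectors $\yy$ satisfying \eqref{e.burningprogram}; substituting $\xx = \yy - \one$ turns this into \eqref{e.burningminusone}, i.e. $\xx \in \Z^A$, $\xx \geq \zero$, and $L\xx \geq -L\one$. Because $D$ is diagonal with positive diagonal it preserves the pointwise order, so a pointwise smallest burning odometer exists precisely when \eqref{e.burningminusone} has a pointwise smallest solution $\xx$, and in that case $\kk = D(\xx + \one)$.

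Next I would exhibit that smallest solution via stabilization. Put $\qq_0 = \rr - \one - L\one \in \Z^A$ and let $\uu$ be the odometer of stabilizing $\qq_0$ in $\Sa(\Net)$, so $\qq_0^\circ = \qq_0 - L\uu$; this is well defined since $\Sa(\Net)$ halts on all inputs. Toppling vertex $a$ subtracts column $a$ of $L$ from the configuration, and a configuration is stable exactly when each coordinate is at most $r_a - 1$; hence a nonnegative integer vector $\xx$ stabilizes $\qq_0$ (i.e.\ $\qq_0 - L\xx \leq \rr - \one$) if and only if $L\xx \geq -L\one$, which is precisely \eqref{e.burningminusone}. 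By the least action principle \cite[Lemma 4.3]{part1}, the stabilization odometer $\uu$ is the pointwise smallest vector with $\qq_0 - L\uu$ stable, hence the smallest solution of \eqref{e.burningminusone}. Therefore $\yy = \uu + \one$ is the smallest solution of \eqref{e.burningprogram} and $\kk = D\yy$ is the pointwise smallest burning odometer. The formula for $\beta$ then drops out: $\beta = (I-P)\kk = L\yy = L\uu + L\one = (\qq_0 - \qq_0^\circ) + L\one = \rr - \one - \qq_0^\circ$, which is the asserted expression.

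For the bounds, the lower bound $\beta \geq \zero$ is immediate: $\qq_0^\circ$ is stable, so $\qq_0^\circ \leq \rr - \one$ and thus $\beta = \rr - \one - \qq_0^\circ \geq \zero$. The upper bound $\beta \leq \rr$ is equivalent to $\qq_0^\circ \geq -\one$, which I would verify coordinatewise using the sign structure of $L$. The off-diagonal entries $L_{ab} = -P_{ab}r_b$ are nonpositive, so a toppling at any $b \neq a$ only adds chips to $a$; and a toppling at $a$ itself, which occurs only when $\qq_a \geq r_a$, leaves $\qq_a \geq r_a - L_{aa} = P_{aa}r_a \geq 0$. Consequently, if $a$ topples at all then after its final toppling $\qq_a \geq 0$ and only increases, giving $(\qq_0^\circ)_a \geq 0$; and if $a$ never topples then $(\qq_0^\circ)_a \geq (\qq_0)_a = r_a - 1 - (L\one)_a \geq -1$, since $(L\one)_a = (1-P_{aa})r_a - \sum_{b \neq a} P_{ab}r_b \leq r_a$. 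In either case $(\qq_0^\circ)_a \geq -1$, so $\qq_0^\circ \geq -\one$ and $\beta \leq \rr$.

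I expect the genuinely dynamical step --- the lower bound $\qq_0^\circ \geq -\one$ --- to be the main obstacle, since it is the only place that uses how $\Sa(\Net)$ actually topples (the nonpositivity of the off-diagonal entries of $L$ together with the fact that a vertex fires only at or above threshold) rather than formal manipulation. Everything else reduces to the rectangularity identity $K = D\Z^A$ and the least action principle.
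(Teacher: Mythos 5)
Your proposal is correct and follows essentially the same route as the paper: stabilize $\qq = \rr - \one - L\one$ in $\Sa(\Net)$, identify the stabilization odometer as the pointwise smallest solution of \eqref{e.burningminusone} (the paper cites \cite[Remark~4.9]{part1} for this, which is the same least-action-principle fact you invoke), recover $\beta = \rr - \one - \qq^\circ$, and get the bounds from stability of $\qq^\circ$ together with $L\one \leq \rr$. Your coordinatewise argument that the toppling dynamics preserve the lower bound $\qq^\circ \geq -\one$ is exactly the detail the paper's one-line "Noting that $L\one \leq \rr$" leaves implicit, so it is a welcome elaboration rather than a divergence.
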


\begin{proof}
Let $\qq = \rr-\one -L\one$, and for $a \in A$ let $\xx_a$ be the number of times $a$ topples during the stabilization of $\qq$ in $\Sa(\Net)$.  By \cite[Remark~4.9]{part1}, $\xx$ is the pointwise smallest vector satisfying \eqref{e.burningminusone}, and $\qq^\circ = \qq - L\xx$.  Setting $\yy = \xx+\one$, we conclude that $\kk = D\yy$ is the pointwise smallest vector satisfying \eqref{e.burningconstraints}, and
	\[ \beta = (I-P)\kk = L(\xx + \one) = L\one + \qq - \qq^\circ  = \rr - \one - \qq^\circ. \]
Noting that $L\one \leq \rr$, we have $-\one \leq \qq^\circ \leq \rr-\one$ and hence $\zero \leq \beta \leq \rr$.
\end{proof}

\begin{remark}
If we replace $\one$ by $\one_C$ in \eqref{e.burningconstraints}, then we obtain a slightly smaller burning element $\beta$ which satisfies $\zero \leq \beta \leq \rr-\one$.  To compute this smaller element using $\Sa(\Net)$, take $\qq = \rr-\one-L\one_C$.  
\end{remark}

In practice, it is more direct to find the burning element by ``untopplings'' instead of topplings, which amounts to the following procedure to find the minimal $\yy$ satisfying \eqref{e.burningprogram}.

\begin{proc}\label{ILPsoln}
Start with $\yy=\one$.  If $L\yy \geq \zero$, then stop. Otherwise, choose some $a \in A$ such $(L\yy)_a<0$ and increase $\yy_a$ by $1$. Repeat until $L\yy \geq \zero$. 
\end{proc}

\begin{remark}In the case that the row sums of $L$ are nonnegative, the procedure halts immediately with $\yy=\one$. In particular, this includes the special case $\Net=\Sand(G,s)$ for an Eulerian graph $G$. 
\end{remark}

In the case $\Net$ is not rectangular, the inclusion $K \subset D\Z^A$ is strict. Corollary~\ref{t.burningrect} and Procedure~\ref{ILPsoln} will identify the minimal burning odometer $ D\yy \in D\Z^A$.  Unlike the rectangular case, there may not be a unique minimal burning odometer in $K$.

If $\kk = D\yy$ is the minimal burning odometer in $D\N^A$, then the global burning test $(I-P)\kk \Acts \qq$ runs in time $\kk$.  Since $\beta = (I-P)\kk \leq \rr$, an upper bound for this run time is $\kk \leq (I-P)^{-1} \rr$. 

\section{Concluding Remarks}

We conclude with a few directions for future research.

\silentsubsec{Combinatorics of recurrent states}

The recurrent states of the rotor network $\Rotor(G,s)$ are the oriented spanning trees of $G$ rooted at $s$.
The recurrent states of the sandpile network $\Sand(G,s)$ on an undirected (or Eulerian directed) graph $G$ have a characterization in terms of ``forbidden subconfigurations'' \cite{Dha90} which puts them naturally in bijection with the $G$-parking functions of Postnikov and Shapiro \cite{PS04}.  Recently Guzm\'{a}n and Klivans \cite{GK15} have generalized this correspondence to toppling networks. 
Hopkins and Perkinson \cite{HP14} relate the $G$-parking functions to the Pak-Stanley labeling of a bigraphical hyperplane arrangement.
It would be interesting to find combinatorial characterizations of the recurrent states of other abelian networks. 

The rank function in Baker and Norine's Riemann-Roch theorem for graphs \cite{BN07} has the following interpretation: given a state of the sandpile network $\Sand(G)$ with no sink, what is the smallest number of letters that can be input to cause it to run forever?
Are there analogues of the Baker-Norine theorem and the Lorenzini zeta function \cite{Lor12} for more general abelian networks?

Duval, Klivans and Martin \cite{DKM09,DKM13} define a higher dimensional critical group as the cokernel (over $\Z$) of the Laplacian of a simplicial complex. Does this group have a dynamical interpretation in terms of a ``hypernetwork'' in which a set of nodes can interact if they form a face of the simplicial complex?  
 Can the spanning trees of a directed hypergraph, as defined by Gorodezky and Pak \cite{GP14}, be realized as the recurrent states of a suitable hypernetwork?
 
\silentsubsec{Critical networks}

Let us call an abelian network $\Net$ \emph{critical} if its production matrix has spectral radius $1$. By 
Theorem~\ref{t.halting}, 
a critical network has inputs that cause it to run forever, so it does not have a critical group in the sense of \textsection \ref{s.critical}.  However, one could try to define $\Crit \Net$ by generalizing the sinkless construction of the sandpile group: $ (\Z^V)_0 / L \Z^V$, where $(\Z^V)_0$ is the kernel of the map $\xx \mapsto \one^T \xx$.

For $\alpha, \beta \in \N^A \times Q$, write $\alpha \to \beta$ if there exists a legal execution from $\alpha$ to $\beta$. Let us call $\alpha$ \emph{recurrent} if for any $\beta$ such that $\alpha \to \beta$ we have $\beta \to \alpha$.
In a critical network $\Net$, is there an efficient test analogous to the burning algorithm to check whether $\alpha$ is recurrent?

In the case of a simple rotor network $\Rotor(G)$ with no sink, a state $\alpha = 1_v.\qq$ with just one letter is recurrent in this sense if and only if its rotors $(u,\qq_u)_{u \in V}$ form a \emph{cycle-rooted spanning tree} (a spanning subgraph with  a single oriented cycle) with $v$ lying on the cycle \cite[Theorem~3.8]{notes}.  Does this result extend to networks of abelian mobile agents (a non-unary generalization of rotors proposed in \cite{part1})?  What about states $\alpha$ with more than one letter? 

\silentsubsec{Finer algebraic invariants}

The critical group $\Crit \Net$ depends only on the homotopy type of $\Net$ (Corollary~\ref{c.homotopy}). On the other hand, the global monoid $M$ of \eqref{e.globalmonoid} can detect finer information about $\Net$.
To see that the monoid \emph{action} $M \times Q \to Q$ is not a homotopy invariant, note that the sandpile network $\Sand(G,s)$ has a state $\zero \in Q$ which can access all other states: $M\zero = Q$. If $G - \{s\}$ has two directed cycles that share an edge, then $\Rotor(G,s)$ has no such state, because a progressed cycle of rotors once broken can never be reformed.

Neither is $M$ itself a homotopy invariant: for instance, for the discrete torus $G = \Z/n \times \Z/n$ one can show that
the minimal burning element $\beta$ of $M(\Rotor(G,s))$ has $\beta^n = e$, whereas the corresponding exponent for $\beta$ in $M(\Sand(G,s))$ grows quadratically in $n$.

Recall that there are many distinct rotor networks $\Rotor(G,s)$ depending on the choice of ordering of the outgoing edges of each vertex.
An interesting question is whether $M$ can distinguish between these networks.

\section*{Acknowledgments}

This research was supported by an NSF postdoctoral fellowship and NSF grants DMS-1105960 and DMS-1243606, and by the UROP and SPUR programs at MIT. A few of the concluding remarks were inspired by discussions at the AIM workshop on generalizations of chip-firing and the critical group in July, 2013. The full list of open problems proposed at the workshop can be found at \url{http://aimath.org/WWN/chipfiring/aim_chip-firing_problems.pdf}.

\end{document}